\documentclass[a4paper,11pt]{article}

\title{Finding Competitive Equilibria in Online Marketplaces with Platform Fees, Taxes, and Subsidies}
\author{Taihei Oki\thanks{Institute for Chemical Reaction Design and Discovery (ICReDD), Hokkaido University, Sapporo, Hokkaido, Japan. D3 Center, The University of Osaka, Osaka, Japan. Center for Advanced Intelligence Project, RIKEN, Tokyo, Japan. \texttt{oki@icredd.hokudai.ac.jp}},\quad Ryosuke Sato\thanks{School of Science and Engineering, Chuo University, Tokyo, Japan. \texttt{rsato636@g.chuo-u.ac.jp}}}
\date{\today}
\usepackage[top=1.1truein,bottom=1.1truein,left=0.95truein,right=0.95truein]{geometry}
\usepackage{amsthm}
\usepackage{amsmath}
\usepackage{amssymb}
\usepackage{hyperref}
\usepackage{enumitem}
\usepackage{amsfonts}
\usepackage{bm}
\usepackage{algorithm}
\usepackage{tikz}

\usepackage{mathtools}
\usepackage{standalone}
\usepackage{multirow}
\usetikzlibrary{positioning,arrows.meta, decorations.pathreplacing}
\usetikzlibrary{calc}
\usepackage{subcaption}
\usepackage{algorithmic}
\usepackage{comment}
\usepackage[authoryear,round]{natbib}
\usepackage{booktabs}
\usepackage{cleveref}
\usepackage{xspace}

\theoremstyle{plain}
\newtheorem{theorem}{Theorem}[section]      
\newtheorem{proposition}[theorem]{Proposition}
\newtheorem{lemma}[theorem]{Lemma}
\newtheorem{corollary}[theorem]{Corollary}
\newtheorem{fact}[theorem]{Fact}

\theoremstyle{definition}
\newtheorem{definition}[theorem]{Definition}
\newtheorem{remark}[theorem]{Remark}

\newtheorem{assumption}[theorem]{Assumption}

\newtheorem{example}[theorem]{Example}

\providecommand{\mathbb}[1]{\mathbf{#1}}
\newcommand{\R}{\mathbb{R}}

\newcommand{\Z}{\mathbb{Z}}

\newcommand{\Tc}{\check{T}}

\newcommand{\rhot}{\tilde{\rho}}
\newcommand{\rhoc}{\check{\rho}}
\newcommand{\rhoh}{\hat{\rho}}
\newcommand{\Dc}{\check{D}}

\newcommand{\cL}{\mathcal{L}}
\newcommand{\cY}{\mathcal{Y}}
\newcommand{\LSC}{\textbf{(LSC)}\xspace}

\DeclarePairedDelimiter{\set}{\{}{\}}
\DeclarePairedDelimiterX{\Set}[2]{\{}{\}}{\,#1\mathclose{}\nonscript\;\delimsize|\nonscript\;\mathopen{}#2\,}
\DeclarePairedDelimiter{\abs}{|}{|}

\let\angle\relax
\DeclarePairedDelimiter{\angle}{\langle}{\rangle}

\providecommand{\texorpdfstring}[2]{#1}
\providecommand{\url}[1]{\texttt{#1}}

\DeclareMathOperator{\supp}{supp}
\DeclareMathOperator*{\argmax}{arg\,max}
\DeclareMathOperator*{\argmin}{arg\,min}

\DeclareMathOperator{\dom}{dom}
\DeclareMathOperator*{\minimize}{minimize}
\DeclareMathOperator*{\maximize}{maximize}

\newcommand{\zeros}{\mathbf{0}}
\newcommand{\ones}{\mathbf{1}}
\newcommand{\Mn}{M\texorpdfstring{${}^\natural$}{-natural}}

\newcommand{\Ln}{L\texorpdfstring{${}^\natural$}{-natural}}

\newcommand{\cA}{\mathcal{A}}

\begin{document}
\maketitle

\begin{abstract}
We study equilibrium computation in multi-unit trading networks with valuations satisfying generalized full substitutes and agent- and transaction-specific piecewise-linear transfer functions. These transfers model platform fees, taxes, subsidies, and membership discounts in online marketplaces. Our ascending procedure raises prices at different rates and reaches the minimum equilibrium price vector in finitely many iterations, each implementable in strongly polynomial time under our oracle and arithmetic models. An equilibrium allocation can then be recovered efficiently. The method extends primal-dual techniques developed for ascending auctions to handle complementarities between purchases and sales through a submodular flow formulation. In single-seller markets with substitutes and complements, our framework also extends the double-track auctions of Sun and Yang (2009) and Shioura and Yang (2015) to bidder- and good-specific~transfers.
\end{abstract}


\maketitle

\vspace{1cm}
\setcounter{tocdepth}{2} 

\section{Introduction}

Online marketplaces facilitate trade among individuals and firms
across countries and regions.
Agents may face different tax and subsidy regimes, while
platform fees and payment rules may vary with membership
status, transaction conditions, or price ranges.
These \emph{frictions} can create a gap between posted prices
and the amounts buyers actually pay or sellers receive.
The same price change may also affect monetary transfers differently
across agents and transactions.
Such differences shape trading incentives and complicate
price discovery, raising a natural question: how can price
adjustment clear markets under heterogeneous~frictions?

Consider a distributor buying from manufacturers and selling to retailers in an online wholesale marketplace.
Its purchasing decisions depend on the revenue it can earn
from reselling those goods, net of fees.
\emph{Trading networks}~\citep{HKNOW2013} model these interdependent
decisions, with directed edges representing potential
bilateral trades.
When purchases and sales face different fee schedules,
uniform price increases may raise purchasing costs
faster than net sales revenue.
Existing work establishes conditions for equilibrium
existence under frictions~\citep{FJJT2019} and for
a lattice structure of equilibrium prices~\citep{S2022}.
These results do not provide an algorithm
for computing an equilibrium.

Price adjustment in trading networks must handle
\emph{complementarities} between purchases and sales.
Such complementarities also arise in single-seller markets.
A retail platform selling complementary goods may apply
different fees across categories and membership tiers.
For goods such as laptops and monitors,
fees in one category can affect demand in the other,
complicating price adjustment.
\citet{SY2009} and \citet{SY2015} develop price-adjustment
procedures for markets with substitutes within each of two
groups and complements across groups, albeit without frictions.

Our goal is to develop a price-adjustment framework
for computing equilibria in trading networks under  
heterogeneous frictions, with applications to single-seller markets.
Without frictions, existing methods compute equilibria in
trading networks~\citep{CEV2021} and single-seller markets with substitutes
and complements~\citep{SY2009,SY2015} through welfare
maximization or its dual formulation.
Frictions can invalidate this characterization,
preventing a direct extension of these approaches~\citep{S2022}.
We therefore extend recent primal-dual techniques for ascending auctions~\citep{ENPRVV2025,OS2026} using \emph{minimum-cost
submodular flow}~\citep{EG1977,IMS2005}.

\subsection{Our Main Results}
We consider multi-unit trading networks with generalized
full-substitutes valuations and agent- and
transaction-specific piecewise-linear transfers.
Our framework also covers single-seller markets with substitutes
and complements.
It provides a positive answer to the question
of whether ascending price adjustment under frictions
extends beyond the settings of \citet{BEF2024} and
\citet{OS2026}.
Our main results are summarized as follows:
\begin{enumerate}
\item \textbf{An ascending procedure for multi-unit trading networks.}
We develop an ascending procedure that raises prices
at different rates to accommodate heterogeneous transfers.
Starting from zero prices, the procedure reaches the minimum
equilibrium price vector in finitely many iterations,
establishing equilibrium existence constructively.
At each iteration, it determines both the set of trades
whose prices increase and a direction specifying their
rates of increase while keeping this set locally stable.
The set and the direction can be computed in~strongly
polynomial time under our oracle and arithmetic~models.

\item \textbf{Applications to single-seller markets.}
Our trading-network formulation encompasses the model of
\citet{OS2026} as a special case, recovering its minimum
equilibrium good prices by projection, preserving buyer payments.
We also extend the celebrated double-track auctions~\citep{SY2009,SY2015}
for markets with substitutes and complements to bidder- and
good-specific~transfers.

\item \textbf{Equilibrium allocation recovery.}
Given equilibrium prices, we compute an equilibrium allocation
maximizing total trade volume in strongly polynomial time
under our oracle and arithmetic models.
To compute this allocation, we use minimum-cost
submodular flow, as in the price-adjustment~procedure.
\end{enumerate}

\paragraph{Our Techniques.}
We extend the primal-dual framework of \citet{OS2026}
from single-seller markets to trading networks, where agents
may both buy and sell and purchases and sales can be 
complementary.
Using the twist operator~\citep{IT2015,ISST2015}, we reverse sale coordinates
to obtain generalized-polymatroid demand sets that may
contain negative coordinates and whose rank functions
need not be nonnegative or monotone.
This prevents a direct application of their
polymatroid-based formulation.
We therefore use minimum-cost submodular flow, while 
retaining their lexicographic construction of price-update
directions from componentwise minimal optimal dual solutions.
The resulting dual effective-domain constraints yield the
sign conditions needed to prove local stability without
assuming monotonicity of rank functions.
The submodular-flow formulation makes these directions
computable in strongly polynomial time under our oracle
and arithmetic~models.

We further generalize the formulation by introducing a
(signed) market-clearing vector $\sigma$, which equals
the supply vector $s$ in the market studied by
\citet{OS2026} and zero in trading networks.
For markets with substitutes and complements, we assign~opposite signs to the supplies of the two groups.
Together with a reversal of price coordinates, this 
enables our procedure to also implement double-track
adjustment~\citep{SY2009,SY2015} under heterogeneous~frictions.

\subsection{Other Related Work}
Trading networks extend the supply-chain framework
of \citet{O2008} to more general network
structures~\citep{HKNOW2013}.
A central condition in this framework is full
substitutability, for which \citet{HKNOW2019} provide
equivalent characterizations for agents who both buy and sell.
In multi-unit settings, \citet{IT2015} study stability
in supply chain networks, while \citet{ISST2015} analyze stability 
and competitive equilibria in trading networks.
More recent work examines the complexity of stability
with discrete contracts~\citep{FJST2023}, as well as
convergence to equilibrium~\citep{LEKBKGG2025} and  stability and fairness~\citep{FGLT2026} in decentralized networks.
Much of this literature assumes no frictions,
whereas we focus on computing competitive equilibria
under heterogeneous frictions.

Ascending auctions and related price-adjustment procedures
for computing equilibrium prices have a long history,
primarily in markets with substitutable goods and
transferable utility;
see, e.g., \citep{KC1982, DGS1986,GS2000,A2006}.
Such price adjustment has also been studied for markets with
substitutes and complements~\citep{SY2009,SY2015} 
and for markets with imperfectly transferable
utility~\citep{A1992,DHW2015,BEF2024,OS2026}.
Our framework provides a unified approach that accommodates both complementarities and
heterogeneous piecewise-linear~transfers.

Submodular optimization~\citep{F2005} and discrete convex 
analysis~\citep{M1998,M2003} form the basis of our analysis.
Submodular flows were introduced by \citet{EG1977}, 
and \citet{IMS2005} give a strongly polynomial algorithm
for the minimum-cost problem.
\citet{FT1988} provide a unified treatment of generalized
polymatroids and submodular flows, while \citet{MS1999}
connect generalized polymatroids to discrete convexity. For a survey of discrete convex analysis in economics, see \citet{M2022}.

\section{Preliminaries}
\label{Preliminaries}
In this section, we review the discrete convex structures
and submodular flow techniques used in subsequent sections
to analyze demand sets and compute price-update directions.

Let $\R_+$ and $\Z_+$ denote the sets of nonnegative reals
and nonnegative integers, respectively.
For a finite set $E$ and a vector $x\in\R^E$,
write $x_e$ for its $e$-th component,
$x|_X$ for its restriction to $X\subseteq E$,
and $x(X)\coloneqq\sum_{e\in X}x_e$.
Let $\angle{x,y}$ denote the inner product of $x$~and~$y$.
Vector inequalities are componentwise.
The positive and negative supports of $x$ are
$\supp_+(x)\coloneqq\Set{e\in E}{x_e>0}$ and
$\supp_-(x)\coloneqq\supp_+(-x)$.
For $X\subseteq E$, $\chi_{\scriptscriptstyle X}\in\set{0,1}^E$
is the characteristic vector of $X$; we write
$\chi_{\scriptstyle e}\coloneqq\chi_{\scriptscriptstyle\{e\}}$
and $\chi_{\scriptstyle 0}\coloneqq\zeros$.
For $a\le b$ in $\Z^E$, let
$[a,b]_\Z\coloneqq\Set{x\in\Z^E}{a\le x\le b}$.
For an extended-valued function $f$, its \emph{effective domain}
$\dom f$ is the set of points where $f$ is finite.

We use the exact real-arithmetic model: arithmetic operations, comparisons, and evaluations of
$\log$ and $\exp$ have unit cost.

\subsection{Submodular Systems and \texorpdfstring{M-convex Sets}{M-convex Sets}}
\label{Mconvex}
An (integral) set function $\rho:2^E\to\Z$ is \emph{submodular} if
\[
\rho(X)+\rho(Y)\geq \rho(X\cup Y)+\rho(X\cap Y)
\qquad (X,Y\subseteq E),
\]
and \emph{supermodular} if $-\rho$ is submodular.
Let $\rho\colon 2^E\to\Z$ be a submodular function with
$\rho(\emptyset)=0$, not necessarily monotone or nonnegative.
The pair $(2^E,\rho)$ is called an (integral) \emph{submodular system}~\citep{F2005},\footnote{For monotone and nonnegative $\rho$, the polyhedron
$P(\rho)\cap\R_+^E$ is called a \emph{polymatroid}.}
and its associated (integral) \emph{submodular polyhedron} and \emph{base polyhedron} are defined by
\begin{align*}
P(\rho) &\coloneqq \Set{y \in \R^E}{y(S) \le \rho(S) \text{ for all } S \subseteq E}, \\
B(\rho) &\coloneqq \Set{y \in P(\rho)}{y(E) = \rho(E)},
\end{align*}
respectively.
$B(\rho)$ is bounded and consists precisely of the
componentwise maximal elements of $P(\rho)$. 
We use the following basic operations on submodular systems:
\begin{itemize}
    \item \textbf{Restriction:} for $S \subseteq E$, $\rho|_S(S') \coloneqq \rho(S')$ for $S' \subseteq S$.
    \item \textbf{Contraction:} for $X \subseteq E$, $\rho^X(S) \coloneqq \rho(S \cup X) - \rho(X)$ for $S \subseteq E \setminus X$. $B(\rho^X)$ is the projection onto $E\setminus X$ of $\Set{y\in B(\rho)}{y(X)=\rho(X)}$.
    \item \textbf{Direct sum:} for $\rho_1,\rho_2$ on disjoint ground sets
    $E_1,E_2$, respectively, define $(\rho_1\oplus\rho_2)(S)\coloneqq \rho_1(S\cap  E_1)+\rho_2(S\cap E_2)$.
\end{itemize}
Note that these operations preserve submodularity. 

We next recall the associated notions of discrete convexity.
\begin{definition}[\citet{M2003}]
\label{def:m_convex}
A nonempty subset $B \subseteq \Z^E$ is {\rm M}-convex if for any $x, y \in B$ and $e \in \supp_{+}(x-y)$, there exists $f \in \supp_{-}(x-y)$ such that $x - \chi_{e} + \chi_{f} \in B$ and $y + \chi_{e} - \chi_{f} \in B$.
It is {\rm\Mn}-convex if the same condition holds with $f \in \supp_{-}(x-y) \cup \set{0}$.
\end{definition}
The integer points of $P(\rho)$ and $B(\rho)$ are \Mn- and M-convex sets, respectively.
Conversely, every bounded M-convex set $B$ equals $B(\rho)\cap\Z^E$ with $\rho(S)=\max_{y\in B}y(S)$.
The function $\rho$ is submodular and is called the rank function of $B$.
Every \Mn-convex set $Q$ consists of the integer points
of an integral generalized polymatroid~\citep{MS1999,FT1988}.
For bounded $Q$, define its lower and upper rank functions by
$\mu(S)\coloneqq\min_{y\in Q} y(S)$ and
$\rho(S)\coloneqq\max_{y\in Q} y(S)$.
Then, $\mu$ is supermodular, $\rho$ is submodular, and
$Q$ has the representation
\[
Q=\Set{y\in\Z^E}{
        \mu(S)\le y(S)\le\rho(S)
        \text{ for all }S\subseteq E}.
\]

To accommodate negative coordinates and rank functions
that need not be monotone or nonnegative, we extend prior
polymatroid-based frameworks~\citep{ENPRVV2025,OS2026}
using a submodular flow formulation.

\subsection{Submodular Flow with Lexicographic Costs}
\label{subsec:lex-flow}

We introduce the submodular flow  problem and its 
dual~\citep{FT1988,F2005}.
We restrict the presentation to bipartite networks with all arcs
directed from one part to the other, as this suffices for our purposes, while the results below hold for general~networks.

Let $V = V^+ \cup V^-$ be a finite set partitioned into two parts, and let $A \subseteq V^+ \times V^-$ be a set of arcs, each directed from $V^+$ to $V^-$.
Let $\delta_v^+$ and $\delta_v^-$ denote the sets of
outgoing and incoming arcs at $v$, respectively.
Vectors in $\R^A$ and $\R^V$ are called a \emph{flow} and a \emph{potential}, respectively.
The \emph{boundary} of a flow $f\in\R^A$ at $v \in V$ is ${(\partial f)}_v \coloneqq f(\delta^+_v)$ for $v \in V^+$ and ${(\partial f)}_v \coloneqq -f(\delta^-_v)$ for $v \in V^-$.
For a potential $\pi\in\R^V$ and $a = (u,v) \in A$, let ${(\delta \pi)}_a \coloneqq \pi_u - \pi_v$.

Let $(2^{V^+}, \rho^+)$ and $(2^{V^-}, \rho^-)$ be submodular systems and $c \in \R^A$ a cost vector.
The \emph{submodular flow problem} and its Legendre-Fenchel dual are the following problems:
\begin{align}
    &\begin{array}{ll}
        \displaystyle\minimize_{f \in \R^A} & \angle{c, f} \\
        \text{subject to} & \partial f \in \mathcal{P},
    \end{array}
    &&
    \begin{array}{ll}
        \displaystyle\maximize_{\pi \in \R^V} & \displaystyle\inf_{y \in \mathcal{P}} \angle{\pi, y} \\
        \text{subject to} & \delta \pi = c,
    \end{array}\label{prob:flow}
\end{align}
where
\begin{equation}\label{eq:boundary-polyhedron-B}
    \mathcal{P} = \Set{y \in \R^V}{y|_{V^+} \in B(\rho^+),\ -y|_{V^-} \in B(\rho^-)} .
\end{equation}
By the integrality of $\rho^\pm$, there exists an integral optimal flow, provided that an optimal flow exists.
The following shows that optimal flows and
minimal optimal potentials can be computed~efficiently.

\begin{theorem}[{\citet{IMS2005}, \citet{M2003}}]\label{fact:sf-algorithms}
Given value oracles of $\rho^\pm$, an integral optimal flow of~\eqref{prob:flow} and the componentwise minimal optimal potential, if they exist, can be computed in strongly polynomial time.
\end{theorem}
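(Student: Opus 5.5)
This statement is essentially a citation of known results, so the plan is to reduce both parts to the algorithms of \citet{IMS2005} and the structure theory in \citet{M2003} rather than reprove them.

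\textbf{Optimal flow.} The primal problem in~\eqref{prob:flow} is a submodular flow problem. Its boundary constraint $\partial f\in\mathcal{P}$ is the direct sum of the base polyhedron $B(\rho^+)$ on $V^+$ with the negated base polyhedron $B(\rho^-)$ on $V^-$. This fits the framework of \citet{FT1988} once we take the single submodular function $\rho^+\oplus\rho^-$ on $V$ and orient the sign on $V^-$ through the boundary convention. We can also route through a super source and super sink without changing the oracle model, since a value oracle for the direct sum is obtained from two calls to the oracles for $\rho^\pm$.

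The strongly polynomial minimum-cost submodular flow algorithm of \citet{IMS2005} then returns either an optimal flow or a certificate of infeasibility or unboundedness. Because its augmentations are along integral exchange capacities of integral base polyhedra, the output can be taken integral. In any case, integrality follows from total dual integrality of the submodular flow system~\citep{EG1977}: from any optimal flow, we can get an integral optimal vertex by one more strongly polynomial call.

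\textbf{Minimal optimal potential.} The dual objective $\inf_{y\in\mathcal{P}}\langle\pi,y\rangle$ is, up to sign, the Lovász extension of $\rho^\pm$. This makes it a concave, piecewise-linear function of $\pi$ whose structure is \Ln-concave. The constraint $\delta\pi=c$ is also \Ln-type, since it consists of difference constraints. The set of optimal potentials is therefore an \Ln-convex polyhedron (in the sense of \citet{M2003}), and it is closed under componentwise minimum. Provided it is bounded below, which follows from existence of the minimum together with the difference constraints fixing the potential on each connected component up to the constraints coming from $\mathcal{P}$, a componentwise minimal element exists.

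To compute it, start from the optimal flow $f^*$. By complementary slackness, $\pi$ is optimal if and only if $\delta\pi=c$ and $\partial f^*$ minimizes $\langle\pi,\cdot\rangle$ over $\mathcal{P}$. The second condition says: for every exchangeable pair $(u,v)$ at $\partial f^*$, meaning $\partial f^*+\chi_u-\chi_v\in\mathcal{P}$, we need $\pi_u\ge\pi_v$. These are again difference constraints on $\pi$.

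So the optimal potentials are exactly the feasible potentials of a difference-constraint system on an auxiliary graph. That graph consists of the arcs of $A$ in both directions with weights $\pm c$, plus the exchange arcs at $\partial f^*$, which are found with $O(|V|^2)$ exchange-capacity computations, each a submodular minimization. The minimal solution is the vector of shortest-path distances in this graph, normalized by whatever lower bounds are implied. It is computed by Bellman–Ford in strongly polynomial time.

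\textbf{Main obstacle.} The delicate point is boundedness below and the normalization of the minimal element. Shortest-path potentials are minimal only relative to a root, and on a single component the potentials can be shifted by a constant. I would handle this by observing that a downward shift of the whole component changes $\inf_{y\in\mathcal{P}}\langle\pi,y\rangle$ by a multiple of $\rho^\pm(\cdot)$ on that component. Optimality therefore pins the component unless that coefficient vanishes; in that case "componentwise minimal" fails to exist, which matches the theorem's "if they exist" proviso.
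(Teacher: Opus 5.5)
Your route is the standard argument behind the paper's citation of \citet{IMS2005} and \citet{M2003}; the paper gives no proof of its own. That route is: \citet{IMS2005} for an integral optimal flow $f^*$; complementary slackness at $\partial f^*$, which turns the optimal potentials into the solution set of a difference-constraint system; and Bellman--Ford to extract the least element. The flow part and your characterization of optimal potentials are fine. What is wrong is your treatment of when a componentwise minimal optimal potential exists.

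In the base-polyhedron form \eqref{eq:boundary-polyhedron-B}, every constraint you derive is either $\pi_u-\pi_v=c_a$ or $\pi_u\ge\pi_v$ for an exchange pair lying inside $V^+$ or inside $V^-$. So the optimal set is invariant under $\pi\mapsto\pi+\alpha\ones$. You can also see this directly. Every $y\in\mathcal P$ has $y(V)=\rho^+(V^+)-\rho^-(V^-)$, and primal feasibility forces this to be $0$ because $(\partial f)(V^+)=-(\partial f)(V^-)$. Hence the dual objective is shift-invariant.

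Consequently, whenever an optimal potential exists, the optimal set is unbounded below, and a componentwise minimal element never exists. Your sentence that boundedness below ``follows from existence of the minimum together with the difference constraints'' is false. Likewise, optimality never ``pins'' a level: shifting a whole weakly connected component of your auxiliary graph preserves every constraint. For \eqref{eq:boundary-polyhedron-B}, your argument establishes the potential clause only because the clause is vacuous there.

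The clause has real content in the submodular-polyhedron version of the subsequent remark, which is the version Section~\ref{subsec:lexico-dual} uses; there $z=\pi^-\ge\zeros$ supplies exactly the missing bound. To handle it, add a root $r$ with $\pi_r=0$ and $\chi_r\coloneqq\zeros$. Then impose $\pi_u\ge\pi_v$ whenever $\partial f^*+\chi_u-\chi_v\in\mathcal P$, for all $u,v\in V\cup\{r\}$. The moves $\pm\chi_v$ encode both the domain constraints $\pi|_{V^+}\le\zeros$, $\pi|_{V^-}\ge\zeros$ and the complementary-slackness conditions at non-tight coordinates.

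Write every constraint as $\pi_v-\pi_u\le\ell(u,v)$ on an arc $(u,v)$. Assuming the system is feasible, a least solution exists if and only if every node has a path to $r$. It is then $\pi_u=-d(u,r)$, where $d$ is the shortest-path distance, and Bellman--Ford computes it using only additions and comparisons, so it also works over lexicographic pairs. In the base version $r$ has no incident arcs, which recovers the nonexistence shown above.
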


We consider the problem~\eqref{prob:flow} with lexicographic cost $c=c^0+\gamma c^1$, where $c^0, c^1 \in \R^A$ and $\gamma$ is a formal infinitesimal.
We call~\eqref{prob:flow} with $c=c^0$ the \emph{primary problem}.
The lexicographic problem minimizes the secondary cost $c^1$
among the optimal solutions for the primary cost $c^0$.
Let $\pi^0$ be an optimal potential of the primary problem.~Define
\[
    \mathcal{P}^0
    \coloneqq
    \argmin_{y \in \mathcal{P}} \angle{\pi^0,y}.
\]
By the optimality conditions for convex flows~\citep[Section~9]{M2003},
a feasible flow $f$ is optimal for the primary problem
if and only if $\partial f \in \mathcal{P}^0$.
The lexicographic problem therefore reduces to the following
secondary problem and its dual:
\begin{align}
    &\begin{array}{ll}
        \displaystyle\minimize_{f \in \R^A} & \angle{c^1, f} \\
        \text{subject to} & \partial f \in \mathcal{P}^0,
    \end{array}
    &&
    \begin{array}{ll}
        \displaystyle\maximize_{\pi \in \R^V} & \displaystyle\inf_{y \in \mathcal{P}^0} \angle{\pi, y} \\
        \text{subject to} & \delta \pi = c^1.
    \end{array}\label{prob:flow-lex2}
\end{align}
We use the following decomposition of optimal potentials
to construct the update direction in
Section~\ref{subsec:lexico-dual}:
\begin{theorem}[\citet{OS2026}, Theorem~7.2 in the full version]\label{lexico_characterization}
Let $\pi^0$ be an optimal potential of~\eqref{prob:flow} with $c=c^0$, and $\pi^1$ be an optimal potential of~\eqref{prob:flow-lex2}.
Then, $\pi^0 + \gamma \pi^1$ is an optimal potential of~\eqref{prob:flow} with $c=c^0+\gamma c^1$. 
Conversely, every optimal potential of~\eqref{prob:flow} with $c=c^0+\gamma c^1$
has this form for some optimal potentials $\pi^0$ and $\pi^1$.
\end{theorem}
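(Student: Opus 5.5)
We argue through the primal–dual optimality conditions of the lexicographic problem. Arithmetic is over the ordered field $\R(\gamma)$, with $\gamma$ a positive infinitesimal. Throughout, let $\cY$ denote the feasible set of the lexicographic problem, and write
\[
g_c(\pi)\coloneqq\inf_{y\in\mathcal{P}}\angle{\pi,y}, \qquad
g^0_c(\pi)\coloneqq\inf_{y\in\mathcal{P}^0}\angle{\pi,y}.
\]

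\textbf{Characterizing optimal potentials.} By Fenchel/LP duality for submodular flows (a linear program over the polyhedron $\mathcal{P}$, valid over any ordered field), a dual-feasible $\pi$ is optimal exactly when some feasible flow $f$ satisfies the complementarity condition
\[
\partial f\in\argmin_{y\in\mathcal{P}}\angle{\pi,y}.
\]
For the lexicographic cost $c=c^0+\gamma c^1$, choose $f^\ast$ to be an optimal flow of the secondary problem~\eqref{prob:flow-lex2}. Such an $f^\ast$ exists because $\pi^1$ is optimal and $\mathcal{P}^0$ is again an integral base-type polyhedron, being a face of $\mathcal{P}$.

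\textbf{Forward direction.} Let $\pi=\pi^0+\gamma\pi^1$. Dual feasibility is immediate, since
\[
\delta\pi=\delta\pi^0+\gamma\,\delta\pi^1=c^0+\gamma c^1.
\]
For complementarity, fix $y\in\mathcal{P}$ and compare $\angle{\pi,y}$ with $\angle{\pi,\partial f^\ast}$. There are two cases.
\begin{itemize}
\item If $y\notin\mathcal{P}^0$, then $\angle{\pi^0,y}>\angle{\pi^0,\partial f^\ast}$, because $\partial f^\ast\in\mathcal{P}^0$. This strict inequality is a real positive gap, so it dominates any $\gamma$-term.
\item If $y\in\mathcal{P}^0$, the $\pi^0$-terms coincide. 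We then need $\angle{\pi^1,y}\ge\angle{\pi^1,\partial f^\ast}$, which is the complementarity condition of~\eqref{prob:flow-lex2} for the optimal pair $(f^\ast,\pi^1)$.
\end{itemize}
So $\partial f^\ast$ minimizes $\angle{\pi,\cdot}$ over $\mathcal{P}$. Hence $\pi$ is optimal, and $f^\ast$ is lexicographically optimal.

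\textbf{Converse.} Let $\pi$ be optimal over $\R(\gamma)$. First reduce to the form $\pi=\pi^0+\gamma\pi^1$ with real $\pi^0,\pi^1$. Since $\delta\pi$ is linear in $\gamma$, the set of optimal potentials is a polyhedron cut out by finitely many linear inequalities with coefficients in $\R+\gamma\R$. Optimal potentials should then be characterized by a finite system $\pi_u-\pi_v\ge -$(exchange bounds) over the exchangeability graph of a fixed optimal base, together with $\delta\pi=c$.

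If an optimal $\pi$ is allowed to have higher-order terms in $\gamma$, the statement should be read with $\pi^0,\pi^1$ real and those higher terms absorbed, or the statement implicitly restricts potentials to $\R+\gamma\R$. I will adopt the latter reading, matching how lexicographic potentials are used in Section~\ref{subsec:lexico-dual}.

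Now write $\pi=\pi^0+\gamma\pi^1$ and take a lexicographically optimal flow $f$. Complementarity gives that $\partial f$ minimizes $\angle{\pi^0+\gamma\pi^1,\cdot}$ over $\mathcal{P}$. We extract the two levels in turn.
\begin{itemize}
\item \emph{Leading order.} Comparing leading terms, $\partial f$ minimizes $\angle{\pi^0,\cdot}$ over $\mathcal{P}$. Together with $\delta\pi^0=c^0$, this makes $\pi^0$ optimal for the primary problem.
\item \emph{Next order.} Restricting the comparison to $y$ in $\argmin_{y\in\mathcal{P}}\angle{\pi^0,y}$, $\partial f$ minimizes $\angle{\pi^1,\cdot}$ over that face.
\end{itemize}

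\textbf{Main obstacle.} The hard point is that this face is the argmin for our $\pi^0$, whereas $\mathcal{P}^0$ is defined via a fixed optimal potential of the primary problem. We need these to agree. They do: for any two primary optimal potentials, the argmin faces both equal the set of boundaries of primary-optimal flows, by the optimality criterion quoted above. Therefore $\pi^1$ is optimal for~\eqref{prob:flow-lex2}, witnessed by $f$.
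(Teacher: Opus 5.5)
The paper gives no proof of this result; it imports it from \citet{OS2026}. Judged on its own, your forward direction and the two-level extraction in the converse are sound. The problem is your final paragraph, which asserts something false.

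For two primary-optimal potentials, the optimality criterion only shows that the faces $\argmin_{y\in\mathcal P}\angle{\pi^0,y}$ agree on the image of $\partial$. A face may also contain points that are not boundaries of any flow, and the secondary dual objective $\inf_{y\in\mathcal P^0}\angle{\pi,y}$ sees those points.

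For instance, take a single arc $(u,v)$ with the polyhedron~\eqref{eq:boundary-polyhedron-P}, which is the version used in Section~\ref{subsec:lexico-dual}. Set $\rho^+(\{u\})=\rho^-(\{v\})=1$, $c^0=-1$, and $c^1=0$.
\begin{itemize}
\item Then $\mathcal P=\{y : y_u\le1,\ y_v\ge-1\}$.
\item Both $(-1,0)$ and $(0,1)$ are optimal primary potentials.
\item Their faces are $\{y_u=1,\ y_v\ge-1\}$ and $\{y_u\le1,\ y_v=-1\}$, which share only the optimal boundary $(1,-1)$.
\item The potential $(-1+\gamma,\gamma)$ is lexicographically optimal.
\item Its $\pi^1=(1,1)$ is optimal for the secondary problem built from $(-1,0)$.
\item But $\pi^1$ is infeasible for the secondary problem built from $(0,1)$, since $\inf_{y_u\le1}y_u=-\infty$.
\end{itemize}
With base polyhedra the same can happen once the network is disconnected. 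So if $\mathcal P^0$ were pinned to an unrelated optimal potential, the converse would simply be false, and no argument could dispose of your ``main obstacle''.

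The fix is to delete that paragraph and read the statement as intended: $\mathcal P^0$ is defined by the same $\pi^0$ that appears in the decomposition. Your forward direction already depends on this reading, since Case~1 uses that $\mathcal P^0$ is the argmin face of the given $\pi^0$. It is also consistent with how the paper applies the result in the proof of Theorem~\ref{convex-dual}. With this reading, your ``next order'' step gives $\partial f\in\mathcal P^0$ and $\angle{c^1,f}=\angle{\pi^1,\partial f}=\inf_{y\in\mathcal P^0}\angle{\pi^1,y}$. Weak duality for~\eqref{prob:flow-lex2} then finishes the proof.

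Three smaller remarks:
\begin{itemize}
\item Strong duality for potentials in $\R+\gamma\R$ is best obtained from weak duality together with your forward direction, which exhibits such a potential attaining the lexicographic primal optimum. Invoking LP duality over a field is less direct here.
\item The flow $f^\ast$ exists by ordinary LP duality for~\eqref{prob:flow-lex2}, not by integrality.
\item Your restriction to $\R+\gamma\R$ is genuinely needed: in the example above, $(-1+\gamma^2,\gamma^2)$ is optimal over $\R(\gamma)$ but has no such decomposition.
\end{itemize}
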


To solve the lexicographic problem, we represent each $a+\gamma b$ by the pair $(a,b)\in\R^2$
and compare these pairs lexicographically:
\[
    (a,b)>(a',b')
    \quad\Longleftrightarrow\quad
    a>a' \ \text{or}\ (a=a' \ \text{and}\ b>b').
\]
Addition, subtraction, and multiplication by real constants are performed componentwise.
\begin{remark}
\label{lexicographic_computation}
The algorithms underlying Theorem~\ref{fact:sf-algorithms} use only addition, subtraction, multiplication by real constants, and comparison on cost values.
They can therefore operate directly on lexicographic pairs
without assigning a numerical value to $\gamma$.
\end{remark}

\begin{remark}
We can consider a version of submodular flow obtained by replacing the base~polyhedra $B(\rho^\pm)$ in~\eqref{eq:boundary-polyhedron-B} with the submodular polyhedra $P(\rho^\pm)$, i.e., the boundary polyhedron $\mathcal{P}$~is
\begin{equation}\label{eq:boundary-polyhedron-P}
    \mathcal{P} = \Set{y \in \R^V}{y|_{V^+} \in P(\rho^+),\ -y|_{V^-} \in P(\rho^-)}.
\end{equation}
The two problems with~\eqref{eq:boundary-polyhedron-B} and~\eqref{eq:boundary-polyhedron-P} can be reduced to each other in strongly polynomial time~\citep[Section~5.2]{F2005}, and Theorems~\ref{fact:sf-algorithms} and~\ref{lexico_characterization} remain valid for the problem with~\eqref{eq:boundary-polyhedron-P}. 
\end{remark}

\section{Our Model and Demand Structures}
In this section, we introduce our multi-unit trading-network
model with heterogeneous transfers and establish the demand
properties underlying our price-adjustment procedure.
All proofs in this section are given in Appendix~\ref{proof_sec3}.
\subsection{Our Model}
Consider a loopless trading network with a set of agents $N =\{1, \dots, n\}\; (n \geq 2)$ and a finite set $E$ of directed edges representing possible bilateral trades, with $m \coloneqq |E|$.
Each trade $e \in E$ is directed from its seller to its buyer.
The \emph{capacity} vector $c \in \Z_{+}^E$ specifies the maximum number of \emph{indivisible} units tradable on each edge.
For each agent $i \in N$, let $\delta_+(i)$ and $\delta_-(i)$ denote the sets of outgoing edges (trades on which $i$ sells) and incoming edges (trades on which $i$ buys), respectively, and let $\delta(i) \coloneqq \delta_+(i) \cup \delta_-(i)$.
We write $\cA\coloneqq\Set{(i,e)}{i\in N,\ e\in\delta(i)}$
for the set of agent-trade~incidences.

\begin{remark}
\label{zero_extension}
In the rest of this paper, we identify vectors indexed by $\delta(i)$
with vectors indexed by $E$ by setting all coordinates
outside $\delta(i)$ to zero.
This convention also applies to sets of~such~vectors.
\end{remark}

An \emph{outcome} consists of a \emph{trade vector} $z \in [\zeros, c]_{\Z}$ and a \emph{price vector} $p \in \R_+^E$, where $z_e$ is the number of units traded on $e$.
Let $z_i\coloneqq z|_{\delta(i)}$ be the bundle of trades
executed by agent $i$.
Each $i \in N$ has a valuation function
$v_i\colon [\zeros,c|_{\delta(i)}]_{\Z}\to\R\cup\{-\infty\}$
with nonempty effective domain, where $-\infty$ encodes hard trading constraints, such as the seller's inventory constraint.

To model \emph{frictions} between prices and monetary transfers, we introduce transfer functions, which are assumed to be publicly known.\footnote{This is a natural assumption since frictions typically
arise from observable factors such as platform fees, tax rates,
or tariffs, which are common knowledge.}
For each $(i,e)\in\cA$, let $q_{ie}:\R_+\to\R_+$ specify
agent $i$'s monetary transfer \emph{per unit} of trade $e$: if $e \in \delta_-(i)$, buyer $i$ \emph{pays} $q_{ie}(p_e) z_e$; if $e \in \delta_+(i)$, seller $i$ \emph{receives} $q_{ie}(p_e) z_e$. 
The posted price $p_e$ is common to the buyer and seller, 
while their per-unit transfers may differ. 
We impose the following assumptions on these functions: 
\begin{assumption}\label{ass:transfers}
Each transfer function $q_{ie}$ is continuous, strictly increasing,
and piecewise linear with finitely many pieces, and satisfies
$q_{ie}(0)=0$. We write $q'_{ie}$ for its right derivative.
\end{assumption}

At prices $p \in \R_+^E$, define agent $i$'s utility and demand set by
\begin{align*}
    u_i(z_i; p) &\coloneqq v_i(z_i) - \sum_{e\in \delta_-(i)} q_{ie}(p_e)z_e + \sum_{e\in \delta_+(i)} q_{ie}(p_e)z_e,\\
    D_i(p) &\coloneqq \argmax_{z_i\in [\zeros, c|_{\delta(i)}]_{\Z}} u_i(z_i;p).
\end{align*}
Our goal is to compute a competitive equilibrium, defined as~follows.
\begin{definition}
\label{Competitive}
A \emph{competitive equilibrium} is a pair $(z,p)$ with
$z\in[\zeros,c]_{\Z}$ and $p\in\R_+^E$ such that
$z_i\in D_i(p)$ for every $i\in N$.
If such a $z$ exists for $p$, we call $p$ an
\emph{equilibrium price vector}.
\end{definition}
With multi-unit trades represented as sets of unit contracts,
competitive equilibrium outcomes are also trail-stable
in the resulting market~\citep[Theorem~2]{FJJT2019}.

To formulate our substitutability assumption, we adopt a common
sign convention for purchases and sales.
Following \citet{IT2015} and \citet{ISST2015}, we define the \emph{twist operator}
$\tau_i:\Z^{\delta(i)}\to\Z^{\delta(i)}$~by
\[
    (\tau_i(z_i))_e \coloneqq
    \begin{cases}
        z_e  & \text{if } e \in \delta_-(i) \text{ (buying)}, \\
        -z_e & \text{if } e \in \delta_+(i) \text{ (selling)}.
    \end{cases}
\]

Let $\cY_i\coloneqq\tau_i([\zeros,c|_{\delta(i)}]_{\Z})$.
Define the \emph{twisted valuation} on $\cY_i$ by
$\tilde v_i(y_i)\coloneqq v_i(\tau_i(y_i))$.
The function $\tilde v_i$ is \emph{\Mn-concave} if, for any
$x,y\in\dom\tilde v_i$ and $e\in\supp_+(x-y)$, there exists
$f\in\supp_-(x-y)\cup\set{0}$ such that
$x-\chi_e+\chi_f,\ y+\chi_e-\chi_f\in\cY_i$ and
\[
\tilde v_i(x)+\tilde v_i(y)
\le \tilde v_i(x-\chi_e+\chi_f)
   +\tilde v_i(y+\chi_e-\chi_f).
\]

\citet{ISST2015} extend full substitutability in trading
networks~\citep{HK2012,HKNOW2013} to multi-unit trades,
introducing the generalized full-substitutes (GFS) condition.
GFS coincides with full substitutability for unit-capacity trades.
They show that GFS is equivalent to \Mn-concavity
of the twisted valuation. 
We impose GFS in this equivalent form and use properties
of \Mn-concave functions in our analysis.
\begin{assumption}
\label{ass:valuation}
For every $i\in N$, the valuation $v_i$
(i) is normalized, i.e., $v_i(\zeros)=0$;
(ii) is \emph{monotone in purchases}:
$v_i(z_i)\leq v_i(z'_i)$ whenever $z_i\leq z'_i$  and
$z_e=z'_e$ for all $e\in\delta_+(i)$; and
(iii) satisfies \emph{generalized full substitutes} (GFS),
equivalently, $\tilde v_i$ is \Mn-concave.
\end{assumption}
\begin{remark}
\label{rem:monotone}
\citet{OS2026} assume that buyers' valuations are
nondecreasing, whereas we require monotonicity only in
purchase coordinates.
Sellers may incur production costs, so $v_i$ may decrease
as sales increase and take negative values.
Monotonicity in purchases is only used in
Lemma~\ref{no_underdemanded}.
\end{remark}

Condition~(iii) also implies the laws of aggregate demand
and supply~\citep{ISST2015}.
We can therefore apply Schlegel's lattice
theorem~\citep[Theorem~1(i)]{S2022}
by expanding each trade into unit contracts. 

\begin{corollary}
\label{cor:equilibrium-lattice}
For any equilibrium price vectors $p$ and $r$,
their componentwise minimum $p\wedge r$ and maximum
$p\vee r$ are also equilibrium price vectors.
\end{corollary}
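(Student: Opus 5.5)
The plan is to reduce the statement to Schlegel's lattice theorem~\citep[Theorem~1(i)]{S2022}, as the text preceding the corollary suggests. This requires three things: recast our multi-unit market as a unit-contract trading network with imperfectly transferable utility; check Schlegel's hypotheses there; and translate his conclusion back into our terms.

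\textbf{Step 1: expanding into unit contracts.} For each trade $e\in E$ with capacity $c_e$, I introduce $c_e$ parallel unit contracts $e^{(1)},\dots,e^{(c_e)}$. Each has the same seller, the same buyer, and the common price $p_e$. For agent $i$, a set $\Omega$ of contracts determines a trade vector $z_i(\Omega)$ by counting contracts on each edge. The valuation is $v_i(z_i(\Omega))$. The transfer on each contract is $q_{ie}(p_e)$: paid when $i$ is the buyer and received when $i$ is the seller.

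Schlegel works with arbitrary contract-specific prices. So I first allow the copies $e^{(k)}$ to carry possibly different prices. Because $q_{ie}$ is strictly increasing (Assumption~\ref{ass:transfers}), utility depends on prices only through the transfers. I will then check that restricting to equal prices across copies loses nothing.

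\textbf{Step 2: checking Schlegel's hypotheses.} His framework needs the following:
\begin{enumerate}[label=(\roman*)]
\item utilities that are continuous and strictly monotone in money, with limit conditions. These follow from Assumption~\ref{ass:transfers}, since transfers are continuous, strictly increasing, and vanish at zero price.
\item full substitutability of preferences over contract sets.
\item the laws of aggregate demand and supply.
\end{enumerate}
For (ii) and (iii), GFS is equivalent to \Mn-concavity of $\tilde v_i$ (Assumption~\ref{ass:valuation}(iii)). By \citet{ISST2015}, this yields full substitutability of the unit-contract expansion together with the laws of aggregate demand and supply. The one point needing care is that these properties must hold for the quasilinear-in-transfer utility even though transfers are nonlinear in price. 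Since the transfer is a strictly increasing function of each price, raising a price is equivalent to raising the transfer. Substitutability and aggregate-demand statements are ordinal in prices, so they transfer directly.

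\textbf{Step 3: translating back, the expected obstacle.} With the hypotheses in place, Schlegel's theorem says that the set of equilibrium contract-price vectors is a lattice under componentwise min and max. I must relate these to our equilibria on $E$.

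The first direction is easy. Given an equilibrium $(z,p)$ in our sense, assign price $p_e$ to every copy of $e$ and let $z_e$ copies be traded. This gives an equilibrium in the expanded market, because demand over contract sets with identical prices on copies coincides with $D_i(p)$.

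Conversely, given equilibrium prices $p$ and $r$ on $E$, I lift them to copy-constant contract prices, which are equilibria by the first direction. Taking the meet or join componentwise keeps prices constant across copies, and the result equals the lift of $p\wedge r$ or $p\vee r$. Schlegel's theorem then makes this lifted vector an equilibrium contract-price vector, with some contract allocation. Counting traded copies per edge gives $z\in[\zeros,c]_\Z$, and $z_i\in D_i(p\wedge r)$ because agent $i$'s contract demand at copy-constant prices corresponds exactly to $D_i$. The same argument applies to $p\vee r$.

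The main obstacle is ensuring that Schlegel's equilibrium notion, including any boundedness or nonnegativity of prices, matches ours on $\R_+^E$. Nonnegativity is preserved by $\wedge$ and $\vee$, and the identification of copy demand with $D_i$ relies only on the copies being interchangeable.
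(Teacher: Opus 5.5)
Your route is the paper's: expand each edge into $c_e$ unit contracts, invoke \citet[Theorem~1(i)]{S2022}, and pull the lattice property back through the copy-constant lift, which commutes with $\wedge$ and $\vee$. Your Step~3 and your ordinality argument for the nonlinear transfers (strictly increasing $q_{ie}$ preserve coordinatewise order and equality) are what the paper uses. The gap is in Step~2, at the point you raised yourself and then left unresolved. Schlegel's hypotheses must hold at arbitrary contract-price vectors, so two copies of the same edge may carry different prices. The GFS and aggregate-demand results of \citet{ISST2015} that you cite concern multi-unit demand with a single price per edge. Ordinality absorbs the nonlinearity of $q_{ie}$, but it says nothing about demand when copies of $e$ have different effective prices. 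That case needs a strong-substitutes statement, which your citation does not provide.

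The paper fills this gap as follows. It records each sale edge through its retained units $c_e+y_{ie}$, so the agent's choice becomes a nonnegative bundle whose valuation is a translate of $\tilde v_i$, hence \Mn-concave. By \citet[Theorem~4.1]{ST2015}, treating every unit as a distinct item then gives a gross-substitutes set valuation. Utility equals this valuation minus the effective prices of the selected units, plus a term depending only on prices. \citet[Theorems~4.1(ii) and~4.7(i)]{ST2015} then yield full substitutability and the laws of aggregate demand and supply, because the number of selected units equals net purchases plus a constant.

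Within your own framing, you could instead prove directly that the split twisted valuation $x\mapsto\tilde v_i(\bar x)$ is again \Mn-concave, where $\bar x_e$ sums $x$ over the copies of $e$. Take a copy $e'$ of $e$ with $x_{e'}>x'_{e'}$. Either $\bar x_e>\bar x'_e$, and you apply the exchange property of $\tilde v_i$ and replace its partner $f\neq0$ by a copy $f'$ of $f$ with $x_{f'}<x'_{f'}$. Or some other copy $e''$ of $e$ has $x_{e''}<x'_{e''}$, and exchanging $e'$ with $e''$ leaves both aggregates unchanged. The unit-capacity case of \citet{ISST2015}, where GFS coincides with full substitutability, then applies to the expanded market.

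A smaller omission: Schlegel works with real-valued prices. Before applying his theorem you must extend each $q_{ie}$ continuously and strictly increasingly to negative arguments, as the paper does. Your remark that $\wedge$ and $\vee$ preserve nonnegativity then handles the restriction back to $\R_+^E$.
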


\citet{FJJT2019} established equilibrium existence under
full substitutability and bounded compensating variations.
Our assumptions do not guarantee the latter condition.
Instead, our ascending mechanism constructively establishes
the existence of an equilibrium and computes the minimum
equilibrium price vector.

\subsection{Demand Structures and Equilibrium Prices}
\label{subsec:twist}
We study the properties of agent $i$'s \emph{twisted demand set} at price $p$,
defined by $T_i(p)\coloneqq\tau_i(D_i(p))$.
Define agent $i$'s \emph{twisted utility} by
\begin{equation}
\label{eq:twisted-utility}
    \tilde u_i(y_i;p)
    \coloneqq
    u_i(\tau_i(y_i);p)
    =
    \tilde v_i(y_i)
    - \sum_{e \in \delta(i)} q_{ie}(p_e)y_e.
\end{equation}
The twist gives all agents a common utility representation,
encoding sales as negative quantities.
The twisted demand set is thus given~by
\[
    T_i(p)
    =
    \argmax_{y_i \in \cY_i} \tilde u_i(y_i;p).
\]
Let $\Tc_i(p)$ denote the componentwise minimal
elements of $T_i(p)$.
The following fact follows from
\citet[Theorem~5.1]{ISST2015} and
\citet[Theorem~4.5 and Remarks~4.1--4.2]{M2022}.
\begin{fact}
\label{prop:twisted_m_convex}
Under Assumption~\ref{ass:valuation}~(iii), for every $p\in\R_+^E$,
$T_i(p)$ is an \Mn-convex set and $\Tc_i(p)$ is an M-convex set.
\end{fact}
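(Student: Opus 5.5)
The plan is to view $T_i(p)$ as the set of maximizers of an \Mn-concave function perturbed by a linear term, and then to obtain $\Tc_i(p)$ as the minimal layer of an integral generalized polymatroid. Fix $p$ and set $w_e\coloneqq q_{ie}(p_e)$ for $e\in\delta(i)$. By~\eqref{eq:twisted-utility}, $\tilde u_i(\cdot;p)=\tilde v_i-\angle{w,\cdot}$ on $\cY_i$. The domain $\cY_i=\tau_i([\zeros,c|_{\delta(i)}]_\Z)$ is a box in $\Z^{\delta(i)}$, because the twist only flips the signs of the sale coordinates; a box is \Mn-convex. The exchange condition defining \Mn-concavity of $\tilde v_i$ involves only differences $x-\chi_e+\chi_f$ and $y+\chi_e-\chi_f$. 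Hence adding the linear function $-\angle{w,\cdot}$ does not change the inequality: the linear terms on the two sides agree, since $\angle{w,x}+\angle{w,y}=\angle{w,x-\chi_e+\chi_f}+\angle{w,y+\chi_e-\chi_f}$. So $\tilde u_i(\cdot;p)$ is \Mn-concave with finite domain $\dom\tilde v_i$, and it attains its maximum.

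Next I would show that the argmax $T_i(p)$ is \Mn-convex. Take $x,y\in T_i(p)$ and $e\in\supp_+(x-y)$, and pick the $f$ given by \Mn-concavity. Then $\tilde u_i(x')+\tilde u_i(y')\ge\tilde u_i(x)+\tilde u_i(y)=2\max$. Each term on the left is at most the maximum, so both $x'$ and $y'$ are maximizers. This is the standard argument (\citet{M2003}), and it is what the cited results of \citet{ISST2015} and \citet{M2022} provide. Nonemptiness holds because $\dom\tilde v_i$ is nonempty and finite.

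For $\Tc_i(p)$, $T_i(p)$ is a bounded \Mn-convex set, so by Section~\ref{Mconvex} it is the set of integer points of an integral generalized polymatroid, $\Set{y}{\mu(S)\le y(S)\le\rho(S)}$. I claim that its componentwise minimal elements are exactly the points $y\in T_i(p)$ with $y(\delta(i))=\mu(\delta(i))$. One inclusion is clear: a point attaining the minimum total sum has no strictly smaller point in the set. For the converse, let $y$ be minimal with $y(\delta(i))>\mu(\delta(i))$, and choose $x\in T_i(p)$ attaining $\mu(\delta(i))$. Apply the \Mn-exchange to $y,x$ with some $e\in\supp_+(y-x)$, which is nonempty since $y(\delta(i))>x(\delta(i))$. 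If the exchange returns $f=0$, then $y-\chi_e\in T_i(p)$ lies below $y$, a contradiction. Otherwise $y-\chi_e+\chi_f$ is a new point closer to $x$ in $\ell_1$. I would iterate, choosing $e$ and then the pair to reduce $\|y-x\|_1$.

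This iteration is the delicate step, because the intermediate points need not stay minimal. The alternative I would use instead is the known fact that the set of integer points of a g-polymatroid minimizing the total sum is a base polyhedron, namely $B(\mu^\#)$ with the dual rank function, and hence M-convex. I would then show that every minimal point lies on this face. For this I would invoke the lemma that for an \Mn-convex set $Q$ and any $y\in Q$ with $y(E)>\min_{Q}z(E)$, there is some $e$ with $y-\chi_e\in Q$. This is the main obstacle. I would prove it by induction on $\|y-x\|_1$: the exchange either yields $f=0$, which gives the claim directly, or yields $y'=y-\chi_e+\chi_f$ with the same total sum. In the second case, symmetric use of the exchange from $x$'s side lets me conclude, via the \Mn-exchange axiom applied to the pair $(y',x)$, that $y$ also admits a decrement.

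The M-convexity of $\Tc_i(p)$ then follows, since it is the integer point set of a base polyhedron and is nonempty.
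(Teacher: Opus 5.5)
Your argument is correct. It differs from the paper mainly in being self-contained. The paper obtains this fact purely by citing \citet[Theorem~5.1]{ISST2015} and \citet[Theorem~4.5 and Remarks~4.1--4.2]{M2022}. In the text right after the fact, it then \emph{uses} the M-convexity of $\Tc_i(p)$ to deduce that minimal elements are exactly the minimum-sum bundles. You go the other way: you first prove that minimal elements are the minimum-sum bundles, and then read off M-convexity from the minimum-sum layer.

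Your first half is the standard argument behind the citation: a linear term leaves the exchange inequality unchanged, so the argmax inherits the \Mn-exchange.

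In the second half, your sketch of the decrement lemma is muddled as written. Applying the exchange to the pair $(y',x)$ does not by itself give a decrement of $y$. The repair is the ``symmetric'' move you allude to: keep $y$ fixed and move $x$. For $e\in\supp_+(y-x)$, the exchange yields either $y-\chi_e\in Q$, or $x'\coloneqq x+\chi_e-\chi_f\in Q$ with $x'(E)=x(E)<y(E)$ and $\|y-x'\|_1=\|y-x\|_1-2$. Since $\|y-x\|_1\ge y(E)-x(E)\ge1$, induction on $\|y-x\|_1$ closes. Your first direct attempt stalled only because you moved $y$ rather than $x$.

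You also do not need the face $B(\mu^\#)$. If $x$ and $y$ both attain the minimum sum, the \Mn-exchange can never return $f=0$, since $x-\chi_e$ would then have a smaller sum. So both exchanged points stay in the minimum-sum layer, which is therefore M-convex directly from Definition~\ref{def:m_convex}.

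Your route gives an elementary proof from the exchange axioms alone. The paper's route gains brevity by outsourcing both halves to the literature.
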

By M-convexity, all elements of $\Tc_i(p)$ have the same
total sum.
Moreover, for every $y_i\in T_i(p)$, there exists
$x_i\in\Tc_i(p)$ with $x_i\leq y_i$.
Therefore, these minimal elements are precisely
the bundles with minimum total sum:
\[
\Tc_i(p)=\argmin_{y_i\in T_i(p)} y_i(E).
\]

Using the zero-extension convention in
Remark~\ref{zero_extension}, let $\mu_i(\cdot;p)$ and
$\rhoh_i(\cdot;p)$ denote the lower and upper rank functions
of the twisted demand set $T_i(p)$, respectively,
as defined in Section~\ref{Mconvex}.
We write $\mu\coloneqq\sum_{i\in N}\mu_i$ and
$\rhoh\coloneqq\sum_{i\in N}\rhoh_i$.

To characterize equilibrium prices,
we introduce a \emph{market-clearing vector} $\sigma\in\Z^E$.
Market clearing in twisted coordinates is expressed as
$\sum_{i\in N}y_i=\sigma$, where $\sigma$ specifies
the required net purchases on each trade.
In trading networks, purchases and sales cancel on each trade,
so $\sigma=\zeros$.
We allow $\sigma$ to have entries of either sign
to accommodate the setting of~Section~\ref{sec:sy}.
Given $\sigma$, a set of trades $X\subseteq E$
is \emph{over-demanded} at $p$ if $\mu(X;p)>\sigma(X)$,
and \emph{under-demanded} if $\rhoh(X;p)<\sigma(X)$.
The following lemma characterizes equilibrium prices
by the absence of such~sets:
\begin{lemma}
\label{price_characterization}
For any $p\in\R_+^E$, there exist bundles
$y_i\in T_i(p)$ for all $i\in N$ satisfying
$\sum_{i\in N}y_i=\sigma$
if and only if no set is over-demanded or under-demanded at $p$.
\end{lemma}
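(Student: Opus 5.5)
The plan is to reduce the statement to the description of a Minkowski sum of integral generalized polymatroids. Necessity is immediate. If $y_i\in T_i(p)$ satisfy $\sum_i y_i=\sigma$, then for every $X\subseteq E$ we have $\mu_i(X;p)\le y_i(X)\le\rhoh_i(X;p)$. Summing over $i\in N$ gives $\mu(X;p)\le\sigma(X)\le\rhoh(X;p)$, so no set is over-demanded or under-demanded.

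For sufficiency, I first check boundedness. Each $T_i(p)$ lies in the bounded box $\cY_i$ (zero-extended to $E$), and it is nonempty because $\dom v_i\neq\emptyset$ and the domain is finite. Hence $\mu_i(\cdot;p)$ and $\rhoh_i(\cdot;p)$ are well defined and finite. By Fact~\ref{prop:twisted_m_convex}, each $T_i(p)$ is \Mn-convex. By the representation in Section~\ref{Mconvex}, it is the set of integer points of the integral generalized polymatroid
\[
Q_i=\Set{y\in\R^E}{\mu_i(S;p)\le y(S)\le\rhoh_i(S;p)\ \text{for all } S\subseteq E}.
\]
Here $(\mu_i,\rhoh_i)$ is a paramodular pair: $\mu_i$ is supermodular, $\rhoh_i$ is submodular, and the cross inequality $\rhoh_i(X)-\mu_i(Y)\ge\rhoh_i(X\setminus Y)-\mu_i(Y\setminus X)$ holds because these are the rank functions of a nonempty g-polymatroid. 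I then invoke two classical results of \citet{FT1988} (see also \citet[Theorem~14.2.15]{F2005}).

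First, the Minkowski sum of g-polymatroids given by paramodular pairs is the g-polymatroid given by the summed pair $(\mu,\rhoh)$. Second, integral g-polymatroids have the \emph{integer decomposition property}: every integer point of $Q_1+\dots+Q_n$ is a sum of integer points $y_i\in Q_i\cap\Z^E=T_i(p)$. In the language of discrete convex analysis, this is the statement that the Minkowski sum of \Mn-convex sets is \Mn-convex and coincides with the integer points of the summed polyhedron \citep{M2003}.

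With these in hand, the absence of over- and under-demanded sets says exactly that $\sigma$ satisfies $\mu(X;p)\le\sigma(X)\le\rhoh(X;p)$ for all $X$. Thus $\sigma$ lies in $\sum_i Q_i$, and being integral it decomposes as $\sum_i y_i$ with $y_i\in T_i(p)$, which proves sufficiency.

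The main obstacle is justifying the integral Minkowski-sum decomposition in this setting: the rank functions here may be negative and non-monotone, and the sets are zero-extended from $\delta(i)$ to $E$. I would handle this by noting that the g-polymatroid results of \citet{FT1988} require neither monotonicity nor nonnegativity. Zero-extension just makes $T_i(p)$ lie in the coordinate face $y|_{E\setminus\delta(i)}=\zeros$, which is still an \Mn-convex set; its rank functions are the zero-extended ones, consistent with Remark~\ref{zero_extension}.

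If a self-contained route is preferred, I would instead build a submodular-flow instance. Take $V^+$ to be copies of the incidences $\cA$ grouped by agent, with generalized-polymatroid constraints $T_i(p)$ on each group. Take $V^-=E$ with the boundary fixed to $\sigma$. Arcs go from $(i,e)$ to $e$. Feasibility then follows from Frank's feasibility theorem for submodular flows, whose cut condition reduces, after separating the per-agent contributions, to $\mu(X)\le\sigma(X)\le\rhoh(X)$. Integrality of $\rho^\pm$ yields an integral feasible flow, and hence the integral bundles $y_i$.
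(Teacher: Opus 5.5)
Your proof is correct and follows essentially the same route as the paper. The paper forms the Minkowski sum $T(p)=\sum_{i}T_i(p)$, uses that it is \Mn-convex (Murota's Theorem~4.23) and hence equal to the set of integer points of its convex hull, and identifies that hull with $\Set{y\in\R^E}{\mu(X;p)\le y(X)\le\rhoh(X;p)\text{ for all }X\subseteq E}$; your g-polymatroid sum together with the integer decomposition property is the same argument stated in Frank--Tardos language, and your separate proof of necessity and the remarks on non-monotone rank functions and zero-extension are consistent with it.
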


The algorithm in Section~\ref{Mechanism} raises the prices of trades
in the unique minimal maximizer $X^*(p)$ of
$\mu(\cdot;p)-\sigma(\cdot)$.\footnote{Since
$\mu(\cdot;p)-\sigma(\cdot)$ is supermodular, its maximizers
are closed under union and intersection and thus have a
unique inclusionwise minimal element.}
We next express this maximization problem in terms of
the rank functions of the minimal twisted demand sets.
Define the rank function of $\check T_i(p)$ by
\begin{align}
\label{rank}
    \rhoc_i(S;p)
    \coloneqq
    \max_{y_i\in\Tc_i(p)}y_i(S)
    \quad(S\subseteq E),
\end{align}
which is integer-valued and submodular in $S$
for each fixed $p$, but need not be monotone or nonnegative.
We write $\rhoc\coloneqq\sum_{i\in N}\rhoc_i$.
Since the minimum defining $\mu_i$ is attained on
$\Tc_i(p)$~\citep[Fact~A.3]{OS2026}, whose elements
have total sum $\rhoc_i(E;p)$, we have
$\mu_i(X;p)=\rhoc_i(E;p)-\rhoc_i(E\setminus X;p)$.
Summing over agents~yields
\begin{equation}
\label{mu_rho}
    \mu(X;p)=\rhoc(E;p)-\rhoc(E\setminus X;p).
\end{equation}
By~\eqref{mu_rho}, we have the following relation:
\begin{equation}
\label{argmin}
    \argmax_{X\subseteq E}
    \bigl\{\mu(X;p)-\sigma(X)\bigr\}
    =
    \argmin_{X\subseteq E}
    \bigl\{\sigma(X)+\rhoc(E\setminus X;p)\bigr\}.
\end{equation}
To analyze how sufficiently small price increases affect
the rank functions, we use the following decomposition
into contributions from the updated coordinates
and their complement:
\begin{lemma}
\label{lem:perturbation_decomposition}
For any $p,d\in\R_+^E$ with $X=\supp_+(d)$,
there exists $\delta>0$ such that, for all
$\varepsilon\in(0,\delta)$, $i\in N$, and $S\subseteq E$,
\begin{align}
\rhoh_i(S;p+\varepsilon d)
&=\rhoh_i(S\setminus X;p)
+\rhoc_i(S\cap X;p+\varepsilon d),\\
\rhoc_i(S;p+\varepsilon d)
&=\rhoc_i(S\setminus X;p)
+\rhoc_i(S\cap X;p+\varepsilon d).
\end{align}
\end{lemma}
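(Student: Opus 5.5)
The plan is to first reduce everything to a linear perturbation of the demand set at $p$, and then to use the face structure of generalized polymatroids.

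\textbf{Step 1 (local linearization).} By Assumption~\ref{ass:transfers}, each $q_{ie}$ is piecewise linear with finitely many pieces. Hence there is $\delta_1>0$ such that for $\varepsilon\in(0,\delta_1)$ we have $q_{ie}(p_e+\varepsilon d_e)=q_{ie}(p_e)+\varepsilon w_{ie}$ exactly, where $w_{ie}\coloneqq q'_{ie}(p_e)d_e$. Since $q_{ie}$ is strictly increasing, every slope is positive, so $w_{ie}>0$ for $e\in X$ and $w_{ie}=0$ for $e\notin X$. Then by \eqref{eq:twisted-utility},
\[
\tilde u_i(y_i;p+\varepsilon d)=\tilde u_i(y_i;p)-\varepsilon\angle{w_i,y_i}.
\]
Because $\cY_i$ is finite, the gap between the optimal and the best suboptimal value of $\tilde u_i(\cdot;p)$ is positive. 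So for $\varepsilon$ below some $\delta_2>0$,
\[
T_i(p+\varepsilon d)=\argmin_{y_i\in T_i(p)}\angle{w_i,y_i}.
\]
In the same way, $\Tc_i(p+\varepsilon d)$ is the set of minimal-sum elements of this face. Take $\delta$ to be the minimum of these thresholds over the finitely many $i$. From here on $\varepsilon$ is fixed, and the face $F_i\coloneqq T_i(p+\varepsilon d)$ does not depend on $\varepsilon$.

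\textbf{Step 2 (face splits as a direct sum).} By Fact~\ref{prop:twisted_m_convex}, $T_i(p)$ is the set of integer points of a generalized polymatroid with rank functions $\mu_i(\cdot;p)$ and $\rhoh_i(\cdot;p)$. Minimizing a linear function $w_i$ that is strictly positive on $X$ and zero elsewhere selects, via the greedy algorithm, the face cut out by the chain of tight lower constraints $y(C_j)=\mu_i(C_j;p)$. Here $C_1\subset\dots\subset C_k=X$ are the level sets of $w_i$ on $X$, and there are no constraints on $E\setminus X$. The key claim is that this face is the direct sum
\[
F_i=\Set{y}{y|_X\in F_i|_X,\ y|_{E\setminus X}\in T_i(p)|_{E\setminus X}}.
\]
In words, fixing the tight constraint on $y(X)$ does not restrict which values are attainable on $E\setminus X$. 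I would prove this with the paramodular cross inequality
\[
\rhoh_i(A)-\mu_i(B)\ge\rhoh_i(A\setminus B)-\mu_i(B\setminus A),
\]
as in Frank--Tardos: it shows that $\max\Set{y(S')}{y\in T_i(p),\ y(X)=\mu_i(X)}=\rhoh_i(S')$ for $S'\subseteq E\setminus X$. Equivalently, I would use an M$^\natural$-exchange argument: start from a maximizer of $y(S\setminus X)$ over $T_i(p)$ and an element of $F_i$, and exchange coordinates inside $X$ to lower $\angle{w_i,y}$ without changing $y|_{E\setminus X}$. This direct-sum step is the main obstacle. One must check carefully that the exchanges stay inside $X$, that is, that the element $f$ from Definition~\ref{def:m_convex} can be chosen in $X\cup\{0\}$ and never in $E\setminus X$.

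\textbf{Step 3 (read off the ranks).} The direct sum gives
\[
\rhoh_i(S;p+\varepsilon d)=\rhoh_i(S\setminus X;p)+\max_{y\in F_i}y(S\cap X).
\]
Every element of $F_i|_X$ is componentwise minimal on $X$, because it is tight at $y(X)=\mu_i(X)$. Hence the second term equals $\max_{y\in\Tc_i(p+\varepsilon d)}y(S\cap X)=\rhoc_i(S\cap X;p+\varepsilon d)$, which is the first identity. For the second identity, $\Tc_i(p+\varepsilon d)$ consists of the minimal elements of $F_i$. Since minimality is separate in the two summands, these are pairs of the form ($X$-part in $F_i|_X$, minimal element of $T_i(p)|_{E\setminus X}$). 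Using the same direct-sum argument applied to $\Tc_i(p)$, the $E\setminus X$ part has rank $\rhoc_i(S\setminus X;p)$, which yields the second formula.
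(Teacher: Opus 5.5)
\textbf{There is a genuine gap: your Step~2 claim is false, and your derivation of the second identity depends on it.}

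Fixing $y(X)=\mu_i(X;p)$ \emph{does} restrict what is attainable on $E\setminus X$. For $B\subseteq E\setminus X$, the constraint $y(B\cup X)\ge\mu_i(B\cup X;p)$ becomes $y(B)\ge\mu_i^X(B)\coloneqq\mu_i(B\cup X;p)-\mu_i(X;p)$. By supermodularity this is $\ge\mu_i(B;p)$, and it can be strictly larger.

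A concrete counterexample: take a buyer with $\delta(i)=\set{a,b}$ (both purchases, unit capacities), $v_i(z)=2\min\set{1,z_a+z_b}$, identity transfers and $p=(1,1)$. Then $T_i(p)=\set{(1,0),(0,1)}$. With $d=\chi_a$ and $X=\set{a}$ we get $F_i=\set{(0,1)}$. Your direct sum $F_i|_X\oplus T_i(p)|_{E\setminus X}=\set{0}\times\set{0,1}$ contains $(0,0)\notin T_i(p)$. The exchange you flagged fails exactly here: moving $(1,0)$ into $F_i$ forces $f=b\in E\setminus X$.

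What is true is $F_i=F_i|_X\oplus F_i|_{E\setminus X}$, where $F_i|_{E\setminus X}$ is the integral g-polymatroid of the pair $(\mu_i^X,\rhoh_i(\cdot;p)|_{E\setminus X})$. Your cross inequality, applied to $A$ and $B\cup X$, shows this pair is paramodular. So its upper rank is still $\rhoh_i(\cdot;p)$, and your first identity survives.

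The second identity does not. You take the $E\setminus X$ part of $\Tc_i(p+\varepsilon d)$ to be the minimal elements of the projection $T_i(p)|_{E\setminus X}$. Their rank is $\mu_i(E\setminus X;p)-\mu_i((E\setminus X)\setminus S';p)$. But by the per-agent form of \eqref{mu_rho}, $\rhoc_i(S';p)=\mu_i(E;p)-\mu_i(E\setminus S';p)$. In the example, your recipe gives $\Tc_i(p+\varepsilon d)=\set{(0,0)}$ with rank $0$ on $\set{b}$. Actually $\Tc_i(p+\varepsilon d)=\set{(0,1)}$ and $\rhoc_i(\set{b};p)=1$.

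\textbf{How to repair it.} Take the minimal elements of the \emph{contracted} component, namely $B(\mu_i^X)$. Its upper rank is $\mu_i^X(E\setminus X)-\mu_i^X((E\setminus X)\setminus S')=\mu_i(E;p)-\mu_i(E\setminus S';p)=\rhoc_i(S';p)$.

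Equivalently, use the second inclusion of Lemma~\ref{subset}, $\Tc_i(p+\varepsilon d)\subseteq\Tc_i(p)$. This gives $\Tc_i(p+\varepsilon d)=\argmin_{y\in\Tc_i(p)}\angle{w_i,y}$ for your $w_i$. Then rerun your first-identity argument on the M-convex set $\Tc_i(p)$. The face splits with $E\setminus X$ component $B(\rhoc_i(\cdot;p)|_{E\setminus X})$, which is the decomposition appearing in Lemma~\ref{greedy_structure}.

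With this correction, your polyhedral route (linearize, identify the face, split along the tight set $X$ via paramodularity) is a valid self-contained alternative. The paper instead proves the two demand inclusions of Lemma~\ref{subset} and imports the exchange-and-submodularity argument of \citet{OS2026}, Proposition~3.8(ii), in twisted coordinates. The second of those inclusions is exactly the ingredient missing from your Step~3.
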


To apply the proof of \citet[Proposition~3.8~(ii)]{OS2026}
in twisted coordinates, it suffices to establish the following
demand inclusions, since the remaining argument does not
require nonnegative demand vectors.
The lemma is proved in Appendix~\ref{proof-decomposition}.
\begin{lemma}
\label{subset}
For any $p,d\in\R_+^E$, there exists $\delta>0$ such that,
for every $i\in N$ and all $\varepsilon\in(0,\delta)$,
$T_i(p+\varepsilon d)\subseteq T_i(p)$ and 
$\Tc_i(p+\varepsilon d)\subseteq\Tc_i(p)$.
\end{lemma}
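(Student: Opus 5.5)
Fix $i\in N$; since $N$ is finite, it suffices to find a $\delta_i>0$ for each agent and take the minimum. The plan has two parts: upper semicontinuity of the argmax over the finite set $\dom\tilde v_i\cap\cY_i$, which gives the first inclusion, and then a monotonicity argument that passes minimal elements from $p+\varepsilon d$ back to $p$, which gives the second.

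\textbf{First inclusion.} By Assumption~\ref{ass:transfers}, each $q_{ie}$ is piecewise linear with finitely many pieces. Hence there is $\delta_1>0$ such that, for $\varepsilon\in(0,\delta_1)$,
\[
q_{ie}(p_e+\varepsilon d_e)=q_{ie}(p_e)+\varepsilon d_e q'_{ie}(p_e)
\]
for all $e\in\delta(i)$, using the right derivative. By~\eqref{eq:twisted-utility},
\[
\tilde u_i(y;p+\varepsilon d)=\tilde u_i(y;p)-\varepsilon\,\angle{w,y},
\qquad w_e\coloneqq d_e q'_{ie}(p_e)\ge 0.
\]
The domain is finite, so there is a gap
\[
g\coloneqq\min\Set{\max\tilde u_i(\cdot;p)-\tilde u_i(y;p)}{y\notin T_i(p)}>0
\]
(vacuous if every $y$ is optimal). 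Also $|\angle{w,y}|$ is bounded by some $K$ on the domain. If $y\notin T_i(p)$ and $\varepsilon<g/(2K)$, then $y$ is beaten at $p+\varepsilon d$ by any $x\in T_i(p)$. Therefore $T_i(p+\varepsilon d)\subseteq T_i(p)$, and in fact $T_i(p+\varepsilon d)=\argmin_{y\in T_i(p)}\angle{w,y}$.

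\textbf{Second inclusion.} Take $y\in\Tc_i(p+\varepsilon d)$. By the first part, $y\in T_i(p)$, and we need $y$ to be minimal in $T_i(p)$. Suppose instead some $x\in T_i(p)$ satisfies $x\le y$ and $x\neq y$. Since $w\ge 0$, we get $\angle{w,x}\le\angle{w,y}$. Because $y$ minimizes $\angle{w,\cdot}$ over $T_i(p)$, it follows that $x\in T_i(p+\varepsilon d)$. This contradicts the minimality of $y$ in $T_i(p+\varepsilon d)$. Hence $y\in\Tc_i(p)$.

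\textbf{Main obstacle.} The step I expect to need the most care is the sign condition $w\ge 0$. It holds because $d\ge 0$ and each $q_{ie}$ is strictly increasing, so $q'_{ie}\ge 0$ (indeed $>0$). Under the twist, sales are negative coordinates, so a rising price makes larger twisted vectors costlier in every coordinate. This is exactly what makes the minimal-element argument go through without nonnegativity of demand vectors or monotonicity of rank functions.
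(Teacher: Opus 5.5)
Your proof is correct and follows essentially the same route as the paper: a finite-domain gap (continuity) argument gives $T_i(p+\varepsilon d)\subseteq T_i(p)$, and a dominated $x\le y$ in $T_i(p)$ would stay demanded at $p+\varepsilon d$, contradicting minimality. The only difference is that you linearize the transfers using piecewise linearity, whereas the paper compares the increments $q_{ie}(p_e+\varepsilon d_e)-q_{ie}(p_e)\ge 0$ directly and so needs only continuity and monotonicity; in exchange, your version yields $T_i(p+\varepsilon d)=\argmin_{y\in T_i(p)}\angle{w,y}$ with your vector $w$, which is the description the paper later uses in Lemma~\ref{greedy_structure}.
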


\section{Our Mechanism}
\label{Mechanism}
In this section, we present a framework for computing competitive
equilibria via directional price updates.
These updates allow prices to rise at different rates,
accommodating heterogeneous transfers.
Proofs omitted from this section are given in Appendix~B.

\subsection{Overview}
\begin{algorithm}[t]
\caption{\textsc{Directional Ascending Price Adjustment}}
\label{algo1}
\begin{algorithmic}[1]
\STATE Set $p\leftarrow\zeros$.
\LOOP
    \STATE Compute a direction $d\in\R_+^E$ satisfying \LSC
at $p$, together with its support $X=\supp_+(d)$.
    \IF{$X=\emptyset$}
        \STATE \textbf{return} $p$.
    \ENDIF
    \STATE Compute $\lambda=\lambda_{\mathrm{evt}}(p,d)$
    using~\eqref{event}.
    \STATE Set $p\leftarrow p+\lambda d$.
\ENDLOOP
\end{algorithmic}
\end{algorithm}

Our mechanism, presented in Algorithm~\ref{algo1}, starts from
zero prices and, at each nonterminal iteration, computes
a direction and a step size for updating the prices.
Following \citet{OS2026}, we require each direction
$d\in\R_+^E$ to satisfy the following \emph{local stability
condition} \LSC at price $p$, where $X=\supp_+(d)$:
\[
\boxed{
    \exists\,\delta>0 \quad
    X=X^*(p+\varepsilon d)
    \quad\text{for all }\varepsilon\in[0,\delta).
}
\tag{\textbf{LSC}}
\]
Taking $\varepsilon=0$ gives $X=X^*(p)$. 
The following example illustrates why prices may need
to rise at different rates to satisfy \LSC.
\begin{example}
\label{ex:intermediary}
Consider a chain of three agents: seller $S$,
intermediary $I$, and buyer $B$.
The trades $a:S\to I$ and $b:I\to B$ each have unit capacity.
Seller $S$ has unit production cost $1$, and buyer $B$
has unit value $4$.
Intermediary $I$ can resell only after purchasing:
$v_I(z_a,z_b)=0$ if $z_b\le z_a$ and $-\infty$ otherwise.
All transfers are identity functions except that $I$
pays $2p_a$ per purchased unit.
Its resale profit is therefore $p_b-2p_a$.
This market falls outside the one-sided settings
\citep{BEF2024,OS2026}, but is still covered by our framework.

At $p=(0,0)$, we have $\Tc_I(p)=\{(0,0),(1,-1)\}$.
Its rank function is nonmonotone:
$\rhoc_I(\{a\};p)=1>0=\rhoc_I(\{a,b\};p)$.
Since $\rhoc(\emptyset;p)=0$ and
$\rhoc(\{a\};p)=\rhoc(\{b\};p)=\rhoc(\{a,b\};p)=1$,
we have $X^*(p)=\{a,b\}$.
At $p=(\varepsilon,\varepsilon)$ for sufficiently small
$\varepsilon>0$, resale is unprofitable and $I$ does not trade.
Thus, $X^*(p)=\{b\}$, violating \LSC.

The direction $d^*=(1/2,1)$ instead keeps resale profit
at zero, preserving $X^*=\{a,b\}$ and satisfying \LSC.
At step size $2$, the seller $S$ becomes willing to trade,
yielding the minimum equilibrium price vector $p^*=(1,2)$
with both trades executed.
\end{example}

To compute update directions satisfying \LSC, 
we assume access to the following oracles for each agent $i \in N$:

\begin{itemize}
\item \textbf{Demand Oracle.}
Given $p\in\R_+^E$, return some bundle $y_i\in T_i(p)$.

\item \textbf{Exchange Capacity Oracle.}
Given $p\in\R_+^E$, $y_i\in T_i(p)$, and distinct
$e,f\in\delta(i)\cup\{0\}$, return the maximum $\alpha \in \Z_+$ such that $y_i-\alpha\chi_e+\alpha\chi_f\in T_i(p)$.
\end{itemize}

The following theorem guarantees efficient computation
of a direction satisfying \LSC.
We prove it in \Cref{subsec:lexico-dual,subsec:solving}.

\begin{theorem}
\label{thm:direction-computable}
For any price $p\in\R_+^E$, a direction $d^*$ satisfying \LSC at $p$ can be computed in strongly polynomial time using the demand and exchange capacity oracles for all agents.
\end{theorem}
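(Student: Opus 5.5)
We construct $d^*$ in two stages, following the lexicographic primal-dual scheme of \citet{OS2026} with minimum-cost submodular flow taking the place of their polymatroid intersection.

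\textbf{Stage 1: compute $X^*(p)$.} By~\eqref{argmin}, $X^*(p)$ is the unique minimal minimizer of $\sigma(X)+\rhoc(E\setminus X;p)$. Since $\rhoc=\sum_i\rhoc_i$ and each $\rhoc_i(\cdot;p)$ is the rank function of the M-convex set $\Tc_i(p)$, this is a submodular-flow feasibility/min-cut problem. Let $V^+=\cA$, let $V^-=E$, and join each arc $(i,e)\to e$. Put $\rho^+=\bigoplus_i\rhoc_i(\cdot;p)$ on $V^+$, and on $V^-$ use the base polyhedron of the modular function $-\sigma$. This way a boundary corresponds to a choice $y_i\in\Tc_i(p)$ together with a net sum. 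We relax the sum constraint to a submodular polyhedron, as in~\eqref{eq:boundary-polyhedron-P}, and penalize the shortfall. The min-cut of the resulting flow then equals $\min_X\{\sigma(X)+\rhoc(E\setminus X;p)\}$, and the componentwise minimal optimal potential (Theorem~\ref{fact:sf-algorithms}) encodes the minimal minimizer, with $X^*(p)$ read off as the set of coordinates of positive potential.

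Value oracles for $\rhoc_i(\cdot;p)$ come from the oracles in the standard way. The demand oracle gives some $y_i\in T_i(p)$. Repeated exchange-capacity calls with $f=0$ lower it to an element of $\Tc_i(p)$. Then $\max_{y\in\Tc_i(p)}y(S)$ is obtained by greedy exchanges that move mass into $S$, using exchange capacities with $e\notin S$ and $f\in S$. All of this is strongly polynomial.

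\textbf{Stage 2: compute the rate vector on $X=X^*(p)$.} Lemma~\ref{lem:perturbation_decomposition} shows that for small $\varepsilon$ only $\rhoc_i(\cdot\cap X;p+\varepsilon d)$ changes. Lemma~\ref{subset} shows that $\Tc_i(p+\varepsilon d)\subseteq\Tc_i(p)$, and by linearity of $q_{ie}$ near $p$ it equals $\argmax_{y\in\Tc_i(p)}\bigl(-\sum_e q'_{ie}(p_e)d_e y_e\bigr)$ on the relevant coordinates. So perturbing $p$ by $\varepsilon d$ amounts to adding a secondary cost $\gamma c^1$ with $c^1_{(i,e)}=q'_{ie}(p_e)\,d_e$. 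We therefore pose a lexicographic flow problem whose primary cost is the Stage~1 problem and whose secondary dual variables are the rates $d_e$ on $X$, coupled through arcs weighted by $q'_{ie}(p_e)$. By Theorem~\ref{lexico_characterization}, its optimal potentials decompose as $\pi^0+\gamma\pi^1$. We take $d^*$ to be the $E$-part of the componentwise minimal optimal $\pi^1$, normalized so that $d^*$ vanishes off $X$ and is positive on $X$. Remark~\ref{lexicographic_computation} and Theorem~\ref{fact:sf-algorithms} compute it in strongly polynomial time.

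\textbf{Stage 3: verify \LSC.} This is the main obstacle. We must show $X^*(p+\varepsilon d^*)=X$ for small $\varepsilon$. Using Lemma~\ref{lem:perturbation_decomposition}, the objective at $p+\varepsilon d^*$ splits into a part on $E\setminus X$ frozen at $p$ and a part on $X$ given by $\rhoc_i(\cdot;p+\varepsilon d^*)$, which is the lexicographic secondary rank. There are two directions to check.
\begin{itemize}
\item \emph{No proper subset $Y\subsetneq X$ becomes a minimizer.} This follows from complementary slackness between the optimal secondary flow and $\pi^1$: equal rates along tight arcs keep every agent indifferent, exactly as $d^*=(1/2,1)$ does in Example~\ref{ex:intermediary}.
\item \emph{No set outside $X$ enters.} This follows from the strict Stage~1 inequality together with a continuity margin.
\end{itemize}
The delicate point is the set of sign conditions $d^*_e>0$ on all of $X$, which must hold without monotonicity of the rank functions. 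We plan to derive them from the dual effective-domain constraints: finiteness of $\inf_{y\in\mathcal{P}^0}\angle{\pi,y}$ forces the potentials to be ordered consistently with the tight sets of $\mathcal{P}^0$. Minimality of $\pi^1$ then rules out zero rates on $X$, since a zero rate would exhibit a smaller minimizer, contradicting the minimality of $X^*(p)$.
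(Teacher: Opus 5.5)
Your Stage~1 and your rank-oracle construction are fine and agree with the paper's primary problem and Lemma~\ref{lem:boundary-oracles}. Stage~2, however, has a genuine gap. A submodular-flow dual links potentials only additively, via $\pi_u-\pi_v=c_a$. What determines $\Tc_i(p+\varepsilon d)$, by contrast, is the weak ordering of the \emph{products} $q'_{ie}(p_e)d_e$ over $e\in X^*\cap\delta(i)$. A problem whose secondary potentials are the rates $d_e$ themselves, ``coupled through arcs weighted by $q'_{ie}(p_e)$'', is a flow with gains, and Theorem~\ref{fact:sf-algorithms} does not cover it. If you force it into additive form and read $d^*$ off $\pi^1$, you control differences of rates rather than ratios, and no normalization repairs that.

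For a concrete case, take the chain $S\to I_1\to I_2\to B$ on trades $a,b,c$, where $I_1$ pays $2p_a$, $I_2$ pays $3p_b$, and all other data are as in Example~\ref{ex:intermediary}. There $X^*(\zeros)=\{a,b,c\}$, and \LSC at $p=\zeros$ forces $d\propto(1,2,6)$ exactly.

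The missing idea is the logarithmic reparametrization. The paper takes weights $w_{ie}=1-\gamma\log q'_{ie}(p_e)$, so the secondary cost is $\log q'_{ie}(p_e)$. It then sets $d^*_e\coloneqq\exp(t_e)$ on $X^*$, where $z^*=\chi_{X^*}+\gamma t$ is the minimal optimal solution of $(D)$. This makes $\log\bigl(q'_{ie}(p_e)d^*_e\bigr)=\log q'_{ie}(p_e)+t_e$ additive. Because $\log$ is monotone, it preserves exactly the orderings that matter (Lemma~\ref{greedy_structure}), and positivity of $d^*$ on $X^*$ is automatic.

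Stage~3 also rests on the wrong mechanism. Indifference along tight arcs does not give the strict inequalities $\rhoc(X;p+\varepsilon d^*)>\sigma(X)$ for \emph{every} nonempty $X\subseteq X^*$, which Lemma~\ref{any_direction} requires. In the paper these come from componentwise minimality of $z^*$.

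Membership $z^*\in\dom\cL$ gives $\kappa_{ie}=\log q'_{ie}(p_e)+t_e\le 0$ for all incidences with $e\in X^*$. These inequalities, not $d^*_e>0$, are the sign conditions supplied by the effective domain. They ensure that $z^*-\gamma\xi\chi_X$ stays in $\dom\cL$. They also ensure that each supremum over the down-closed polyhedron $P(\rhoc_i)$ is attained on $B(\rhoc_i)$ and splits greedily, even though $\rhoc_i$ need not be monotone. Evaluating both objectives at a common greedy optimizer then gives Lemma~\ref{ineq_L}:
\[
\cL(z^*-\gamma\xi\chi_X)\le\cL(z^*)+\gamma\xi\bigl(\rhoc(X;p+\varepsilon d^*)-\sigma(X)\bigr).
\]
Minimality of $z^*$ makes the left-hand side strictly larger than $\cL(z^*)$, which yields condition (b). You use minimality only to exclude zero rates, but with the exponential parametrization positivity is free; minimality is needed precisely for strictness.
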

For computing the step size in Line~7 of Algorithm~\ref{algo1},
we additionally assume access to the following oracle
for each $i\in N$:

\begin{itemize}
\item\textbf{Demand-Change Oracle.}
Given $p,d\in\R_+^E$, return $\lambda_{\mathrm{dem},i}(p,d)$, defined as the infimum of all $\lambda>0$ such that $D_i(p+\lambda d)\ne D_i(p+td)$ for some $t\in(0,\lambda)$.
\end{itemize}
Using the (public) transfer functions $q_{ie}$, we also compute $\lambda_{\mathrm{tr}}(p,d)$, which is the infimum of $\lambda>0$ such that $q'_{ie}(p_e+\lambda d_e)\ne q'_{ie}(p_e)$ for some $(i,e) \in \mathcal{A}$.
In both definitions, $\inf\emptyset=+\infty$.
Then, the step size to the next event in Line~7 is given by
\begin{equation}
\label{event}
    \lambda_{\mathrm{evt}}(p,d)
    \coloneqq
    \min\set[\Big]{
        \min_{i\in N}\lambda_{\mathrm{dem},i}(p,d),
        \lambda_{\mathrm{tr}}(p,d)
    }.
\end{equation}
Nonterminal step sizes are positive and finite;
see Appendix~\ref{Convergence}.

\begin{remark}
Our framework applies even without access to numerical valuations,
provided agents can answer the above queries.
Our oracle model also includes the value-oracle model as a special case:
a value oracle returning $v_i(\bm z_i)$ for a given bundle
$\bm z_i$ suffices to implement all three oracles in polynomial time.
\end{remark}
The next theorem establishes correctness and finite termination.
\begin{theorem}
\label{thm:correctness_algo1}
Whenever Algorithm~\ref{algo1} terminates, it outputs
the minimum equilibrium price vector. 
Using the algorithm of Theorem~\ref{thm:direction-computable},
it terminates in finitely many iterations.
\end{theorem}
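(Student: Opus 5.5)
The proof splits into two parts: correctness of any terminal output, and finite termination when directions come from Theorem~\ref{thm:direction-computable}.

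\textbf{Correctness.} Two facts are needed.

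First, the output $p$ is an equilibrium price vector. Termination means $X^*(p)=\emptyset$, so $\max_X\{\mu(X;p)-\sigma(X)\}=0$ and no set is over-demanded. To rule out under-demanded sets, I would use Lemma~\ref{no_underdemanded}, which relies on monotonicity in purchases. It should show that along the algorithm no set is ever under-demanded: this holds at $p=\zeros$, and it is preserved by each step, since raising the prices in $X^*(p)$ only lowers demand on sets that were over-demanded. Lemma~\ref{price_characterization} then gives bundles $y_i\in T_i(p)$ with $\sum_i y_i=\sigma$, and untwisting yields an equilibrium.

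Second, $p\le r$ for every equilibrium price vector $r$. By Corollary~\ref{cor:equilibrium-lattice}, the equilibrium prices form a lattice. I would show that a minimum equilibrium $p^{\min}$ exists and argue by induction that $p\le p^{\min}$ holds throughout the run. It holds at $p=\zeros$. Suppose $p\le p^{\min}$ at some iteration, and the segment $p+\varepsilon d$ first crosses $p^{\min}$ in some coordinate $e\in X$. Then at the crossing point $p'$, the set $Y=\{e\in X: p'_e=p^{\min}_e\}$ is nonempty. I would compare the demands at $p'$ and at $p^{\min}$ via the twisted utility~\eqref{eq:twisted-utility}. On $Y$ the prices agree, while off $Y$ the prices at $p^{\min}$ are weakly higher. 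By a gross-substitutes-type argument in twisted coordinates (\Mn-concavity), the lower rank on $X^*$-subsets should only increase when prices outside them rise.

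Since \LSC holds on the open segment, the set $X$ is over-demanded at $p'-\varepsilon d$. Passing to the limit, and using Lemma~\ref{lem:perturbation_decomposition} to handle the piecewise structure, $Y$ or a subset of $X$ containing it would remain over-demanded at $p^{\min}$, a contradiction. This comparison step, which requires monotonicity of $\mu$ under price increases outside the set, is the delicate part. It must work with heterogeneous strictly increasing $q_{ie}$, which fortunately enter utility only through $q_{ie}(p_e)$ per coordinate.

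\textbf{Finite termination.} Between events, every $q_{ie}$ is linear along the ray, and by the definition of $\lambda_{\mathrm{dem},i}$ the demand sets $D_i$ are constant on the open segment. Each iteration either changes some demand set or crosses a breakpoint of some $q_{ie}$.

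Breakpoints of the $q_{ie}$ are finitely many, and prices stay bounded by $p^{\min}$ (from the correctness part), so only finitely many breakpoint events occur. Between breakpoints, the transfer is affine in each coordinate. Demand changes then correspond to the price vector $(q_{ie}(p_e))$ crossing regions of a finite polyhedral complex determined by the finitely many bundles of $\tilde v_i$ (the cells where a given argmax is attained).

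The main obstacle is excluding cycling within one linearity region. The directions $d^*$ from Theorem~\ref{thm:direction-computable} are determined by the combinatorial data $(p$'s demand sets, active pieces$)$ through the lexicographic dual of Section~\ref{subsec:lex-flow}. Each step is therefore determined by a finite state, and prices strictly increase in the coordinates of $X\ne\emptyset$. I would argue that a state cannot repeat along a strictly increasing bounded trajectory: a repeated state would reproduce the same direction and step, giving an unbounded sequence. Hence there are only finitely many iterations.
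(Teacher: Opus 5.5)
Your first step matches the paper: termination gives $X^*(p)=\emptyset$, so nothing is over-demanded; Lemma~\ref{no_underdemanded} excludes under-demand; Lemma~\ref{price_characterization} concludes. (That invariant holds because of \LSC via Lemma~\ref{any_direction}~(b), not merely because prices rise on an over-demanded set.) The other two parts have genuine gaps.

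\textbf{Finite termination.} Your argument is circular. Under the paper's assumptions no equilibrium is known to exist in advance, since bounded compensating variations is not assumed. Existence is exactly what termination establishes, so you cannot bound the path by $p^{\min}$. Without that bound you also lose finiteness of individual steps, which the paper proves directly (Lemma~\ref{lem:positive-finite-step}, by an \Mn-exchange against bundles $b_i$ with $\sum_i b_i=\sigma$).

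Independently, the no-repetition step fails. A repeated combinatorial state (the sets $\Tc_i(p)$ and the active slopes) does reproduce $d^*$, but it does not reproduce the step. The quantity $\lambda_{\mathrm{dem},i}(p,d^*)$ depends on the utility gaps at $p$ of non-demanded bundles, and $\lambda_{\mathrm{tr}}$ depends on the distance to the next breakpoint. So the dynamics are not a function of a finite state, a monotone path can re-enter the same demand cell, and nothing excludes infinitely many shrinking steps.

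What is missing is a monotone potential. The paper uses a hierarchy:
\begin{itemize}
\item $\omega(p)=\max_X(\mu(X;p)-\sigma(X))$ is a nonincreasing nonnegative integer (Lemma~\ref{lem:over-demand-progress}).
\item While $\omega$ is constant, every change of $X^*$ strictly enlarges it (Lemma~\ref{event_a}).
\item With $X^*$, $\omega$, and the slopes fixed, the value $V$ of $(P)$ and then the minimal dual solution $z^*$ are nondecreasing over finitely many possibilities (Lemmas~\ref{lem:update-stability} and~\ref{event_b}).
\item Once $z^*$, hence $d^*$, is fixed, utilities are affine along the ray and only finitely many demand events remain.
\end{itemize}
Slope changes are finite by monotonicity of prices alone, with no bound needed.

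\textbf{Minimality.} You need not show that a minimum equilibrium exists, and you did not. Compare against an arbitrary equilibrium $\bar p$; since the output is itself an equilibrium, it is then the minimum.

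More importantly, the crux is left open. Over-demand of $X^*$ does not transfer to $\bar p$, because $\mu(X^*;\cdot)$ may drop when prices rise on $X^*\setminus Y$. Lemma~\ref{mu_rho_relation} controls only $\mu(Y;\cdot)$ for a set $Y$ on which the two price vectors agree. Nor is $Y$ itself over-demanded at $p$: supermodularity gives $\mu(Y;p)\le\mu(X^*;p)-\mu(X^*\setminus Y;p)$, which is the wrong direction.

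The missing idea is the virtual perturbation $\widehat p=p+\varepsilon\chi_{X_2}\le\bar p$ with $X_2=X^*\setminus Y$. By~\eqref{mu_rho} and Lemma~\ref{lem:perturbation_decomposition},
\[
\mu(Y;\widehat p)-\sigma(Y)=[\sigma(X_2)+\rhoc(E\setminus X_2;p)]-[\sigma(X^*)+\rhoc(E\setminus X^*;p)].
\]
This is positive because $X^*$ is the \emph{inclusionwise minimal} minimizer of~\eqref{argmin} and $X_2\subsetneq X^*$. Lemma~\ref{mu_rho_relation} then gives $\mu(Y;\bar p)\ge\mu(Y;\widehat p)>\sigma(Y)$, contradicting that $\bar p$ is an equilibrium.

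No limit is needed. The last point $p\le\bar p$ lies strictly before the next iterate, where \LSC and constant demand keep $X^*(p)=\supp_+(d)$.
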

Minimality and finite termination are proved in
Appendices~\ref{minimality} and~\ref{Convergence}, respectively.
Here, we show that the output is an equilibrium price vector
using the following invariant.

\begin{lemma}
\label{no_underdemanded}
Throughout Algorithm~\ref{algo1}, no set is under-demanded.
\end{lemma}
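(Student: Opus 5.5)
We argue by induction over the iterations of Algorithm~\ref{algo1}, proving that every price $p$ visited by the algorithm, including every point $p+\varepsilon d$ on a segment, satisfies $\rhoh(S;p)\ge\sigma(S)$ for all $S\subseteq E$.

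\textbf{Base case.} At $p=\zeros$ all transfers vanish because $q_{ie}(0)=0$. So $\tilde u_i(y_i;\zeros)=\tilde v_i(y_i)$, and by monotonicity in purchases (Assumption~\ref{ass:valuation}~(ii)) some maximizer buys the full capacity on every purchase edge. Indeed, from any $y_i\in T_i(\zeros)$, raising the purchase coordinates to $c_e$ while keeping the sale coordinates fixed does not decrease $v_i$. Call this maximizer $y_i^{\max}$.

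For $\sigma=\zeros$ in a trading network, each edge $e$ has exactly one buyer and one seller. The buyer's vector $y^{\max}_{\text{buyer}}$ has $e$-coordinate $c_e$, and the seller's $e$-coordinate is at least $-c_e$. Hence $\sum_i y^{\max}_i\ge\zeros=\sigma$ componentwise.

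It remains to pass from "some demanded profile dominates $\sigma$" to "no set is under-demanded". Since $\rhoh_i(S;\zeros)\ge y_i^{\max}(S)$, we get $\rhoh(S;\zeros)\ge \sum_i y^{\max}_i(S)$. This bound alone is not enough, because $S$ may contain edges where the sum exceeds $\sigma$ while other coordinates are fixed. It suffices, however, since $\sum_i y_i^{\max}\ge\sigma$ componentwise gives $\sum_i y_i^{\max}(S)\ge\sigma(S)$ for every $S$. For a general $\sigma$ (Section~\ref{sec:sy}), we would instead invoke the analogous capacity bound $\sigma\le$ the buyer-side capacities, which I expect the later sections to set up.

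\textbf{Inductive step.} Suppose no set is under-demanded at $p$, and let $d$ satisfy \LSC with $X=\supp_+(d)$. For small $\varepsilon$, Lemma~\ref{lem:perturbation_decomposition} gives
\[
\rhoh(S;p+\varepsilon d)=\rhoh(S\setminus X;p)+\rhoc(S\cap X;p+\varepsilon d).
\]
By \LSC, $X=X^*(p+\varepsilon d)$ minimizes $\sigma(\cdot)+\rhoc(E\setminus\cdot\,;p+\varepsilon d)$ by~\eqref{argmin}. Comparing $X$ with $X\setminus (S\cap X)$ gives
\[
\rhoc(S\cap X;p+\varepsilon d)\ge\sigma(S\cap X),
\]
after using submodularity and the decomposition of Lemma~\ref{lem:perturbation_decomposition} applied to $\rhoc$, which splits $\rhoc(E\setminus X\cup T)$ additively between the parts inside and outside $X$. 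Combined with $\rhoh(S\setminus X;p)\ge\sigma(S\setminus X)$ from the induction hypothesis, this yields $\rhoh(S;p+\varepsilon d)\ge\sigma(S)$.

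To reach the full step $\lambda_{\mathrm{evt}}$, note that on $(0,\lambda_{\mathrm{evt}})$ the demand sets are constant by the definition of the demand-change oracle. At the endpoint $p+\lambda d$, I would argue by upper semicontinuity: the argmax at a limit point contains the limits of the maximizers, so $T_i(p+\lambda d)\supseteq T_i(p+td)$ for $t$ slightly below $\lambda$. This follows from continuity of $\tilde u_i$ in $p$ over a finite domain, and it gives $\rhoh(\cdot;p+\lambda d)\ge\rhoh(\cdot;p+td)\ge\sigma$.

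\textbf{Main obstacle.} The delicate step is deriving $\rhoc(T;p+\varepsilon d)\ge\sigma(T)$ for $T\subseteq X$ from minimality in~\eqref{argmin}. Since $\rhoc$ is not monotone, the comparison has to be made with the additive split of Lemma~\ref{lem:perturbation_decomposition} at $S=E\setminus(X\setminus T)$, which expresses $\rhoc(E\setminus(X\setminus T))$ as $\rhoc(E\setminus X;p)+\rhoc(T;p+\varepsilon d)$. Optimality of $X$ over $X\setminus T$ then gives
\[
\sigma(X)+\rhoc(E\setminus X)\le\sigma(X\setminus T)+\rhoc(E\setminus X)+\rhoc(T),
\]
which is exactly the needed inequality.
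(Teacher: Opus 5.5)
Your proof is correct and follows the paper's argument. The two proofs share the base case at $p=\zeros$ (monotonicity in purchases), the inductive step combining Lemma~\ref{lem:perturbation_decomposition} with \LSC, constancy of demand sets on $(0,\lambda)$, and the endpoint inclusion $T_i(p+(\lambda-\varepsilon)d)\subseteq T_i(p+\lambda d)$. The only difference is that you derive the weak inequality $\rhoc(T;p+\varepsilon d)\ge\sigma(T)$ for $T\subseteq X$ directly from minimality of $X$ in~\eqref{argmin}, whereas the paper cites Lemma~\ref{any_direction}~(b), whose proof is exactly this comparison and which yields a strict inequality that is not needed here.
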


At termination, no set is over-demanded by the stopping
condition in Line~4 or under-demanded by Lemma~\ref{no_underdemanded}.
Thus, Lemma~\ref{price_characterization} implies that
the output $p^*$ is an equilibrium price vector.
At $p^*$, an equilibrium allocation can be computed in
strongly polynomial time using the demand and exchange
capacity oracles; see Section~\ref{Allocation}.

To prove the invariant, we use \LSC and the following
characterization. 
This characterization is also used in the proof of
Lemma~\ref{d_weighted}.

\begin{lemma}
\label{any_direction}
Let $X^*=X^*(p)$ and $d\in\R_+^E$ satisfy $\supp_+(d)=X^*$.
For sufficiently small $\varepsilon>0$, the following are equivalent:
\begin{enumerate}[label={\rm(\alph*)},leftmargin=2em]
    \item The direction $d$ satisfies \LSC at $p$.
    \item $\rhoc(X;p+\varepsilon d)>\sigma(X)$
    for every nonempty $X\subseteq X^*$.
\end{enumerate}
\end{lemma}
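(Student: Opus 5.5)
The plan is to reduce \LSC to a statement about the minimal minimizer of
\[
g_\varepsilon(S)\coloneqq\sigma(S)+\rhoc(E\setminus S;p+\varepsilon d),
\qquad S\subseteq E .
\]
By~\eqref{argmin}, $X^*(p+\varepsilon d)$ is the unique minimal minimizer of $g_\varepsilon$. For $\varepsilon=0$ this minimal minimizer is $X^*$ by definition, so \LSC amounts to requiring that $X^*$ remain the minimal minimizer of $g_\varepsilon$ for all small $\varepsilon>0$. We fix $\delta>0$ from Lemma~\ref{lem:perturbation_decomposition}, applied with $X=\supp_+(d)=X^*$ and summed over $i\in N$. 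Throughout we take $\varepsilon\in(0,\delta)$.

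The key step is to show that $g_\varepsilon$ separates across $X^*$ and its complement. We write any $S\subseteq E$ as $S=A\cup B$ with $A\subseteq X^*$ and $B\subseteq E\setminus X^*$. Then
\[
(E\setminus S)\setminus X^*=(E\setminus X^*)\setminus B,
\qquad
(E\setminus S)\cap X^*=X^*\setminus A .
\]
The decomposition of $\rhoc(\cdot\,;p+\varepsilon d)$ in Lemma~\ref{lem:perturbation_decomposition} therefore gives
\[
g_\varepsilon(A\cup B)=h(B)+k_\varepsilon(A),
\]
where
\[
h(B)\coloneqq\sigma(B)+\rhoc((E\setminus X^*)\setminus B;p),
\qquad
k_\varepsilon(A)\coloneqq\sigma(A)+\rhoc(X^*\setminus A;p+\varepsilon d).
\]
Because the two parts depend on disjoint variables, the minimal minimizer of $g_\varepsilon$ is the union of the minimal minimizer of $h$ (over subsets of $E\setminus X^*$) and that of $k_\varepsilon$ (over subsets of $X^*$).

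Next we show that the $h$-part always contributes $\emptyset$. Note that $h$ does not depend on $\varepsilon$. For $B\subseteq E\setminus X^*$ we have
\[
g_0(X^*\cup B)=\sigma(X^*)+h(B),
\]
since $\rhoc$ is additive over the disjoint sets involved only through $E\setminus(X^*\cup B)=(E\setminus X^*)\setminus B$. As $X^*$ minimizes $g_0$, this yields $h(B)\ge h(\emptyset)$. Thus $\emptyset$ minimizes $h$, and it is trivially the minimal minimizer.

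It follows that \LSC holds if and only if $X^*$ is the minimal minimizer of $k_\varepsilon$, i.e., $k_\varepsilon(A)>k_\varepsilon(X^*)$ for every proper subset $A\subsetneq X^*$. Here $k_\varepsilon(X^*)=\sigma(X^*)$ because $\rhoc(\emptyset)=0$. Substituting $X\coloneqq X^*\setminus A$, which ranges over the nonempty subsets of $X^*$, the condition becomes
\[
\sigma(X^*)-\sigma(X)+\rhoc(X;p+\varepsilon d)>\sigma(X^*),
\]
which is exactly~(b).

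The main care point is the quantifier over $\varepsilon$. The equivalence derived above holds pointwise for each $\varepsilon\in(0,\delta)$, and \LSC at $\varepsilon=0$ is automatic from the definition of $X^*$. Hence (a) on an interval $[0,\delta')$ is equivalent to (b) holding for all $\varepsilon\in(0,\delta')$, which is how ``for sufficiently small $\varepsilon$'' is to be read. The only other thing to verify is the separability step, which rests entirely on Lemma~\ref{lem:perturbation_decomposition} being applied with $X=X^*=\supp_+(d)$.
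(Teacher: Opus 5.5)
Your proposal is correct and follows the paper's proof essentially step for step. It uses the same split $Y=Y_1\cup Y_2$ via Lemma~\ref{lem:perturbation_decomposition}, the same observation that the complement part is minimized at $Y_2=\emptyset$ because $X^*$ minimizes~\eqref{argmin} at $p$, and the same reduction to strict inequalities over proper subsets of $X^*$.

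Your explicit treatment of the $\varepsilon$-quantifier is slightly more careful than the paper's ``fix sufficiently small $\varepsilon$.'' The pointwise reading, with a single fixed small $\varepsilon$, additionally needs that $\rhoc(\cdot\,;p+\varepsilon d)$ is constant on some interval $(0,\delta)$; this follows from the piecewise linearity of the transfers and the finiteness of the feasible bundles.
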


\begin{proof}[Proof of Lemma~\ref{no_underdemanded}]
We proceed by induction over iterations.
At $p=\zeros$, monotonicity in purchases and $q_{ie}(0)=0$
allow us to choose $y_i^0\in T_i(\zeros)$ with
$y_{ie}^0=c_e$ for all $e\in\delta_-(i)$.
Since $y_{ie}^0\geq-c_e$ for $e\in\delta_+(i)$, 
no set is initially under-demanded by 
\[
\rhoh(X;\zeros)
\geq \sum_{i\in N}y_i^0(X\cap\delta(i))
\geq \sum_{e\in X}(c_e-c_e)
=0=\sigma(X)
\quad(X\subseteq E).
\]

Suppose no set is under-demanded at the current price $p$.
Let $d$ and $\lambda$ be the update direction and step size,
respectively, and let $p'=p+\lambda d$ be the next price vector.
Write $X^*=X^*(p)=\supp_+(d)$.
Since $d$ satisfies \LSC, Lemmas~\ref{lem:perturbation_decomposition} 
and~\ref{any_direction}~(b) give
\begin{align*}
\rhoh(X_1\cup X_2;p+td)
=\rhoc(X_1;p+td)+\rhoh(X_2;p)
\geq \sigma(X_1)+\sigma(X_2)
=\sigma(X_1\cup X_2)
\end{align*}
for any $X_1\subseteq X^*$, $X_2\subseteq E\setminus X^*$,
and sufficiently small $t>0$.
Since demand sets are constant on $0<t<\lambda$,
the inequality holds throughout.
Finally, finiteness of the feasible sets and continuity of
utilities give 
$T_i(p+(\lambda-\varepsilon)d)\subseteq T_i(p')$
for all $i$ and sufficiently small $\varepsilon>0$.
Thus, $\rhoh(X;p')\geq\sigma(X)$ for all $X\subseteq E$,
completing the induction.
\end{proof}

\subsection{Constructing Directions with LSC}\label{subsec:lexico-dual}

This section constructs a direction satisfying \LSC at a given price $p \in \R_+^E$.
We write $X^*=X^*(p)$ and abbreviate
$P(\rhoc_i(\cdot;p))$ as $P(\rhoc_i)$ for $i\in N$.
We identify each $x_i\in\R^{\delta(i)}$ with its zero extension to $E$.
For the primal-dual formulation below, define the weights~by
\[
    w_{ie}\coloneqq 1-\gamma\log q'_{ie}(p_e)
    \qquad((i,e)\in\mathcal A),
\]
where $\gamma$ is the formal infinitesimal used for
lexicographic comparisons in Section~\ref{subsec:lex-flow}.
We consider the following primal-dual pair:
\begin{align*}
(P)\quad &
\max_{(x_1,\dots,x_n)}
\Set*{
    \sum_{(i,e)\in\mathcal A}w_{ie}x_{ie}
}{
    x_i\in P(\rhoc_i)\ (\forall i),\
    \sum_{i\in N}x_i\leq\sigma
},
\notag\\
(D)\quad &
\min_{z\in\dom\cL}\ \cL(z),
\end{align*}
where the dual objective $\cL\colon \R_+^E \to \R \cup \set{+\infty}$ is defined by
\[
    \cL(z)\coloneqq
    \sum_{i\in N}
    \sup_{x_i\in P(\rhoc_i)}
    \set[\Bigg]{
        \sum_{e\in\delta(i)}(w_{ie}-z_e)x_{ie}
    }
    +\sum_{e\in E}\sigma_ez_e.
\]
In the following, we derive the primal-dual relation between $(P)$ and $(D)$ via reduction to the submodular flow problem.
This reduction also enables us to apply Theorems~\ref{fact:sf-algorithms} and~\ref{lexico_characterization}.

Let $E'\coloneqq\Set{e'}{e\in E}$ be a copy of $E$,
and define the directed bipartite graph $D=(V,A)$ by
\[
    V\coloneqq\mathcal A\cup E',
    \qquad
    A\coloneqq
    \Set{\bigl((i,e),e'\bigr)}{(i,e)\in\mathcal A},
\]
where each incidence node $(i,e)$ has a single outgoing arc
to its trade node $e'$.
Take $V^+\coloneqq\mathcal A$ with
$\rho^+\coloneqq\bigoplus_{i\in N}\rhoc_i$
and $V^-\coloneqq E'$ with $\rho^-(S)\coloneqq\sigma(S)$.
Here, each $\rhoc_i$ is restricted to $\delta(i)$
and indexed by its incidence nodes,
and we identify $E'$ with $E$.
Write $y_i\coloneqq(y_{(i,e)})_{e\in\delta(i)}$.
Let $\mathcal{P}$ be the polyhedron~\eqref{eq:boundary-polyhedron-P} defined by $\rho^\pm$, i.e.,
\[
    \mathcal P
    =
    \Set{y\in\R^V}{
        y_i\in P(\rhoc_i)\ (i\in N),\
        -y_{e'}\leq\sigma_e\ (e\in E)
    }.
\]
Set $c_a\coloneqq-w_{ie}$ for each arc $a=((i,e),e')$,
and identify a flow $f\in\R^A$ with $x\in\R^{\mathcal A}$
by $f_{((i,e),e')}=x_{ie}$.
Then $\angle{c,f}=-\angle{w,x}$.

The identities $(\partial f)_{(i,e)}=x_{ie}$ and
$(\partial f)_{e'}=-\sum_{i:\,e\in\delta(i)}x_{ie}$
show that the boundary constraints are equivalent to
$x_i\in P(\rhoc_i)$ and
$\sum_{i:\,e\in\delta(i)}x_{ie}\leq\sigma_e$.
After negating the objective, $(P)$ is equivalent to~\eqref{prob:flow} with $c=c^0+\gamma c^1$, where
$c^0=-\ones$ and
$c^1=(\log q'_{ie}(p_e))_{(i,e)\in\mathcal A}$.

We next show that negating the flow dual objective yields $(D)$.
Let
$\pi=\bigoplus_{i\in N}\pi_i^+\oplus\pi^-\in\R^V$
be a potential on $D$, where
$\pi_i^+=(\pi_{ie}^+)_{e\in\delta(i)}$ and
$\pi^-=(\pi_e^-)_{e\in E}$.
Since $\mathcal P$ is the direct sum over $V^+$ and $V^-$,
the dual of \eqref{prob:flow} is
\[
\begin{array}{ll}
    \displaystyle\maximize_{\pi\in\R^V}
    &
    \displaystyle
    \sum_{i\in N}
    \inf_{x_i\in P(\rhoc_i)}
    \set[\Bigg]{
        \sum_{e\in\delta(i)}\pi_{ie}^+x_{ie}
    }
    +\sum_{e\in E}\inf_{-\sigma_e\leq y}\pi_e^-y
    \\[1ex]
    \text{subject to}
    &
    \pi_{ie}^+-\pi_e^-=-w_{ie}
    \quad((i,e)\in\mathcal A).
\end{array}
\]
The following conditions characterize
the dual objective's effective domain.
Since $P(\rhoc_i)$ is down-closed and unbounded below,
the term for agent $i$ is finite if and only if
$\pi_{ie}^+\leq0$ for all $e\in\delta(i)$,
and the infimum is attained on $B(\rhoc_i)$.
The term for trade $e$ is finite if and only if
$\pi_e^-\geq0$, and then equals $-\sigma_e\pi_e^-$,
regardless of the sign of $\sigma_e$.

Set $z\coloneqq\pi^-$ so that $\pi_{ie}^+=z_e-w_{ie}$.
The finiteness conditions give
\begin{align}
    \dom\cL
    =
    \Set{z\in\R_+^E}{
        z_e\leq w_{ie}\text{ for all }(i,e)\in\mathcal A
    }.\label{eq:dom-L}
\end{align}
On this domain, the flow dual objective is $-\cL(z)$.
Thus, $(P)$ and $(D)$ form a primal-dual pair
with equal optimal values.
By Theorem~\ref{lexico_characterization}, we obtain the following
characterization of the minimal optimal solution of $(D)$.
Its proof is deferred to Section~\ref{subsec:convex-dual-proof}.
\begin{theorem}\label{convex-dual}
The dual problem $(D)$ has a unique componentwise minimal optimal solution
of the form $z^*=\chi_{X^*}+\gamma t$ for some $t\in\R^E$.
\end{theorem}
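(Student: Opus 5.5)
Using Theorem~\ref{lexico_characterization}, I reduce to the primary and secondary problems separately. Every optimal solution of $(D)$ has the form $z=z^0+\gamma z^1$, where $z^0$ is optimal for the primary dual (costs $w^0_{ie}=1$) and $z^1$ is optimal for the secondary dual restricted to the face $\mathcal P^0$ determined by $z^0$. Componentwise minimality in the lexicographic order therefore splits into two steps. First take the componentwise minimal optimal $z^0$. Then, with $z^0$ fixed, take the componentwise minimal optimal $z^1$ of the secondary dual. Here I rely on the fact that the set of optimal potentials of a submodular flow dual is a lattice, which is the fact underlying Theorem~\ref{fact:sf-algorithms}. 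So the main work is to show that the minimal optimal primary dual solution is exactly $\chi_{X^*}$, and that the secondary problem has an optimal solution, so that $t$ exists.

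\textbf{Primary problem.} With $w\equiv\ones$, the domain~\eqref{eq:dom-L} becomes $[\zeros,\ones]^E$. The primary dual objective is
\[
\cL^0(z)=\sum_i \max_{x_i\in P(\rhoc_i)}\angle{\ones-z,x_i}+\angle{\sigma,z}.
\]
This is the Lovász extension of $S\mapsto \rhoc(S)$ evaluated at $\ones-z$, plus $\angle{\sigma,z}$. On indicator vectors $z=\chi_X$ it equals $\rhoc(E\setminus X;p)+\sigma(X)$. Because both terms are linear along the chain of level sets of $z$, $\cL^0$ is the Lovász extension of $g(X)\coloneqq\sigma(X)+\rhoc(E\setminus X;p)$, and $g$ is submodular. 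Hence $\min_{[0,1]^E}\cL^0=\min_X g(X)$. Every optimal $z$ has all of its level sets $\{z\ge\theta\}$, for $\theta\in(0,1]$, in $\argmin g$. By~\eqref{argmin}, the minimal minimizer of $g$ is $X^*$. So every optimal $z$ satisfies $z\ge\chi_{X^*}$: for $\theta\in(0,1]$, each level set is a minimizer and therefore contains $X^*$. Also $\chi_{X^*}$ is itself optimal. Thus $\chi_{X^*}$ is the unique minimal primary optimum.

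\textbf{Secondary problem and the lexicographic minimum.} With $z^0=\chi_{X^*}$ fixed, the face $\mathcal P^0$ is nonempty. The secondary problem~\eqref{prob:flow-lex2} has finite optimum because the original lexicographic primal $(P)$ has an optimum. For this I note that $(P)$ is feasible: the point $x_i=$ a very negative vector works since $P(\rhoc_i)$ is down-closed, so $(P)$ is feasible for every $\sigma$. It is also bounded, since $(D)$ is feasible at $z=\zeros$ once I check that $\zeros\le w$ lexicographically, which holds because the leading coefficient is $1$. LP duality over the ordered field of lexicographic pairs, valid by Remark~\ref{lexicographic_computation}, then gives optimal potentials. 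The optimal potential set of the secondary problem is nonempty and closed under componentwise minimum, but it may be unbounded below. Here the constraint $z\ge0$ from $\dom\cL$ helps: coordinates outside $X^*$ must satisfy $z^1_e\ge0$, because $z^0_e=0$ there. Coordinates in $X^*$ are bounded below by optimality, since they are tied to the finite face $\mathcal P^0$ through the base polytope $B(\rhoc_i)$. This yields a minimal $t=z^1$.

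The main obstacle is establishing uniqueness and existence of the minimal element lexicographically. Specifically, I must show that the minimal primary optimum and the minimal secondary optimum combine into a componentwise minimal element among all optima. Comparing $\chi_{X^*}+\gamma t$ with an arbitrary optimum $z^0+\gamma z^1$: if $z^0\ne\chi_{X^*}$, then $z^0\ge\chi_{X^*}$ with strict inequality in some coordinate. If $z^0=\chi_{X^*}$, then Theorem~\ref{lexico_characterization} forces $z^1$ to be secondary-optimal, so $z^1\ge t$. One further point needs care: when $z^0>\chi_{X^*}$ in some coordinate but $z^0=\chi_{X^*}$ elsewhere, the comparison in the other coordinates requires $z^1\ge t$ there as well. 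I would handle this using the lattice property of the full lexicographic optimal set, applying Theorem~\ref{fact:sf-algorithms} over the ordered field of pairs, rather than arguing coordinatewise.
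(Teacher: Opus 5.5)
Your primary step is the paper's Lemma~\ref{gamma_0} and is correct. The gap is in the existence of the minimal secondary solution $t$.

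\textbf{The lower bound on $t|_{X^*}$ is not justified.} You say these coordinates are tied to a \emph{finite} face $\mathcal P^0$, but $\mathcal P^0$ is unbounded. With $z^0=\chi_{X^*}$, the primary potential is $0$ on incidences $(i,e)$ with $e\in X^*$ and $-1$ on those with $e\in A\coloneqq E\setminus X^*$. So the $V^+$-part of $\mathcal P^0$ is $B(\rhoc_i|_A)\oplus P(\rhot_i)$, with $\rhot_i=\rhoc_i^A$, which is down-closed in the $X^*$-coordinates. This unboundedness only produces the \emph{upper} bounds $\kappa_{ie}\le 0$ of~\eqref{eq:c0}.

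A lower bound needs the strict minimality of $X^*$, which you do not use at this step. Write $u\coloneqq z^1|_{X^*}$ and $g(X)=\sigma(X)+\rhoc(E\setminus X;p)$. The secondary objective in these coordinates is $\sum_i\max_{y\in B(\rhot_i)}\sum_e(-\log q'_{ie}(p_e)-u_e)y_e+\langle\sigma|_{X^*},u\rangle$. Its recession slope along $-\chi_S$, for $\emptyset\ne S\subseteq X^*$, is $\rhoc(A\cup S)-\rhoc(A)-\sigma(S)=g(X^*\setminus S)-g(X^*)$. Via the Lov\'asz extension, the slope along every nonzero $-r$ with $r\ge\zeros$ is positive, and this holds only because no proper subset of $X^*$ minimizes $g$.

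Your reasoning would apply verbatim to a primary optimum $\chi_Y$ where $Y\supsetneq X^*$ is another minimizer of $g$. There, the slope along $-\chi_{Y\setminus X^*}$ is $g(X^*)-g(Y)=0$, and the secondary optimal set is unbounded below. So the argument as stated cannot be right.

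\textbf{The combination step is only announced.} It does go through once you have the lattice property of the full lexicographic optimal set $\mathcal Z$. Note that Theorem~\ref{fact:sf-algorithms} does not give this; it computes a minimal potential only if one exists. The paper obtains the lattice property from conjugacy and $L^\natural$-convexity, i.e., submodularity of $\cL$ on the box $\dom\cL$. With it, the step closes as follows:
\begin{itemize}
\item For any optimum $z$, the meet $z\wedge z^*$ lies in $\mathcal Z$ and has primary part $\chi_{X^*}$, since $z^0\ge\chi_{X^*}$.
\item By the converse in Theorem~\ref{lexico_characterization}, its secondary part is secondary-optimal, hence at least $t$.
\item Therefore $z\wedge z^*=z^*$, i.e., $z\ge z^*$.
\end{itemize}

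\textbf{Once you have the lattice property, the two-stage construction is redundant.} This is the paper's route:
\begin{itemize}
\item Minimize each coordinate over $\mathcal Z$ directly. The minimum is attained by linear programming over lexicographic pairs, because $\mathcal Z$ has a finite description with real coefficients and $z\ge\zeros$ on it.
\item Take the meet $z^*$ of these coordinate minimizers.
\item Pin its primary part by sandwiching: $z^0\ge\chi_{X^*}$ by Lemma~\ref{gamma_0}, and $z^0\le\chi_{X^*}$ because $z^*\le\widehat z=\chi_{X^*}+\gamma\widehat t$.
\end{itemize}
This avoids any separate lower-bound analysis of the secondary problem. The minimality of $X^*$ enters only through the primary lemma, and it implicitly supplies the bound you were missing.
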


We now construct a direction from the minimal optimal solution
$z^*$ of $(D)$.
With $t$ from Theorem~\ref{convex-dual},
define $d^*\in\R_+^E$ by $d_e^*\coloneqq\exp(t_e)$
for $e\in X^*$ and $d_e^*\coloneqq0$ otherwise.
At each iteration, our implementation of Algorithm~\ref{algo1}
uses $d^*$ constructed at the current price as its update direction.

\subsection{Proof of Theorem~\ref{thm:direction-computable}}\label{subsec:solving}
Here, we show that the direction $d^*$ constructed above
satisfies \LSC and can be computed in strongly polynomial time.

We first show \LSC.
Let $z^* = \chi_{X^*}+\gamma t$ be the minimal optimal solution to $(D)$.
By construction, we have $\supp_+(d^*)=X^*$.
By $z^* \in \dom \mathcal{L}$ and~\eqref{eq:dom-L}, $z_e^*\le w_{ie}$ holds for $(i,e)\in\mathcal A$.
Specifically, for $(i,e)\in\mathcal A$ with $e\in X^*$, we have
\begin{equation}\label{eq:c0}
    \kappa_{ie}\coloneqq\log q'_{ie}(p_e)+t_e\le0.
\end{equation}
This bound leads to the following inequality, which allows us to derive the strict inequalities
in Lemma~\ref{any_direction}~(b).

\begin{lemma}\label{ineq_L}
Let $z^*$ be the minimal optimal solution of $(D)$
and $d^*$ the direction defined in Section~\ref{subsec:lexico-dual}.
For sufficiently small $\varepsilon>0$ and every nonempty
$X\subseteq X^*$, there exists $\tau>0$ such~that
$    \cL(z^*-\gamma\xi\chi_{\scriptscriptstyle X})
    \le \cL(z^*)+\gamma\xi
    \bigl(\rhoc(X;p+\varepsilon d^*)-\sigma(X)\bigr)
    \quad (\forall\xi\in(0,\tau)).$
\end{lemma}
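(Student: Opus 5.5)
The plan is to compute $\cL$ along the ray $\xi\mapsto z^*-\gamma\xi\chi_X$ agent by agent, and to show that its growth rate is at most $\rhoc(X;p+\varepsilon d^*)-\sigma(X)$. The $\sigma$-term is exact: $\sum_e\sigma_e(z^*_e-\gamma\xi\chi_X(e))=\sum_e\sigma_e z^*_e-\gamma\xi\sigma(X)$. So everything reduces to bounding, for each $i\in N$,
\[
 \Phi_i(\xi)\coloneqq\sup_{x_i\in P(\rhoc_i)}\sum_{e\in\delta(i)}\bigl(w_{ie}-z^*_e+\gamma\xi\chi_X(e)\bigr)x_{ie}
 \;\le\;\Phi_i(0)+\gamma\xi\,\rhoc_i(X;p+\varepsilon d^*)
\]
for all small $\xi>0$. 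The first step is to record the signs of the coefficients $c_{ie}\coloneqq w_{ie}-z^*_e$.

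For $e\in X^*$ we have $c_{ie}=-\gamma\kappa_{ie}$, and this is a nonnegative infinitesimal by~\eqref{eq:c0}. For $e\notin X^*$ we have $z^*_e=\gamma t_e$ with $t_e\ge0$, because $z^*\ge\zeros$. Hence $c_{ie}=1-\gamma(\log q'_{ie}(p_e)+t_e)$, which is lexicographically of order $1$. All coefficients are therefore nonnegative. Since $P(\rhoc_i)$ is down-closed, the supremum is attained on $B(\rhoc_i)$, and the greedy algorithm evaluates it.

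Because the coefficients on $E\setminus X^*$ dominate those on $X^*$, the greedy order processes $\delta(i)\setminus X^*$ first. The maximizers then have $x_i(\delta(i)\setminus X^*)=\rhoc_i(E\setminus X^*;p)$. On $X^*$ they range over the contraction $B(\rhoc_i^{E\setminus X^*})$, weighted by $-\kappa_{ie}+\xi\chi_X(e)$. For $\xi$ smaller than the smallest positive gap between distinct values among the $-\kappa_{ie}$, the greedy order changes only by tie-breaking. Then $\Phi_i$ is linear in $\xi$ on $[0,\tau)$, with slope $\gamma\max\{x_i(X): x_i\in F_i\}$. Here $F_i$ is the optimal face of $B(\rhoc_i^{E\setminus X^*})$ for the weights $-\kappa_i$. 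Taking $\tau$ as the minimum over agents, it remains to prove
\[
\max_{x_i\in F_i}x_i(X)\le\rhoc_i(X;p+\varepsilon d^*)\qquad(X\subseteq X^*).
\]

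To prove this inequality, I would identify $\Tc_i(p+\varepsilon d^*)$ inside $\Tc_i(p)$. By Lemma~\ref{subset}, and because the $q_{ie}$ are piecewise linear, for small $\varepsilon$ the set $\Tc_i(p+\varepsilon d^*)$ is the set of minimizers over $\Tc_i(p)$ of $\sum_{e\in X^*}q'_{ie}(p_e)d^*_e y_e=\sum_{e\in X^*}e^{\kappa_{ie}}y_e$. Combined with Lemma~\ref{lem:perturbation_decomposition}, this gives $\rhoc_i(X;p+\varepsilon d^*)=\max\{y(X): y\in\Tc_i(p+\varepsilon d^*)\}$ for $X\subseteq X^*$. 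The goal is then to show that $F_i$ is contained in (the $X^*$-projection of) this face, or at least that $x_i(X)$ is maximized over $F_i$ at a point of it.

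I expect this to be the main obstacle. $F_i$ is cut out by the weights $-\kappa_{ie}$, whereas demand at $p+\varepsilon d^*$ is cut out by the weights $-e^{\kappa_{ie}}$. These two linear functionals are different, but they induce the same ordering of the coordinates, because $\exp$ is strictly increasing. On a base polyhedron the optimal face of a linear objective depends only on the induced ordered partition into level sets, via greedy chains of tight sets. So both faces are the intersection of $B(\rhoc_i^{E\setminus X^*})$ with the same chain of tight sets, and they coincide. I would verify this order-invariance carefully, using the chain description of faces of base polyhedra and checking that equal values of $\kappa_{ie}$ give equal values of $e^{\kappa_{ie}}$. With the faces equal, summing the bound over $i$ and using $\rhoc=\sum_i\rhoc_i$ gives the claimed inequality.
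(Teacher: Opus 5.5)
Your proposal is correct and follows essentially the paper's route. It uses the same greedy separation of $E\setminus X^*$ from the contraction $B(\rhoc_i^{E\setminus X^*})$, the same tie-preserving choice of $\tau$, and the same order-invariance of base-polyhedron faces under $\kappa_{ie}\mapsto e^{\kappa_{ie}}$, which the paper isolates as Lemma~\ref{greedy_structure}; your slope $\max_{x\in F_i}x(X)$ is the paper's common greedy optimizer $y_i^\xi$, phrased as a directional derivative.
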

To prove Lemma~\ref{ineq_L}, we first describe the
base-polyhedron decomposition induced by $d^*$.~Write
\[
A:=E\setminus X^*,
\qquad
\rhoc_i:=\rhoc_i(\cdot;p),
\qquad
\rhot_i:=\rhoc_i^A.
\]
We use $\oplus$ to denote the direct sum over disjoint
coordinate sets.
\begin{lemma}
\label{greedy_structure}
For all sufficiently small $\varepsilon>0$ and every $i\in N$,
\begin{equation}
\label{eq:ineq-L-base-decomposition}
B\bigl(\rhoc_i(\cdot;p+\varepsilon d^*)\bigr)
=
B(\rhoc_i|_A)\oplus
\argmax_{y_i\in B(\rhot_i)}
\sum_{e\in X^*\cap\delta(i)}(-\kappa_{ie})y_{ie}.
\end{equation}
\end{lemma}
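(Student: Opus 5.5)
We prove \eqref{eq:ineq-L-base-decomposition} by combining Lemma~\ref{lem:perturbation_decomposition}, which splits the perturbed rank function along $A$ and $X^*$, with a first-order description of how the twisted utility on $\Tc_i(p)$ changes under the small price move $p+\varepsilon d^*$. Fix $i\in N$. By Lemma~\ref{subset}, for small $\varepsilon>0$ we have $\Tc_i(p+\varepsilon d^*)\subseteq\Tc_i(p)$. Hence $B(\rhoc_i(\cdot;p+\varepsilon d^*))$ is the convex hull of those $y_i\in\Tc_i(p)$ that remain in the minimal twisted demand at the new price.

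\emph{Identifying the surviving bundles.} On $\Tc_i(p)$ all bundles have the same utility at $p$. For small $\varepsilon$ and $e\in X^*$, piecewise linearity gives $q_{ie}(p_e+\varepsilon d^*_e)=q_{ie}(p_e)+\varepsilon q'_{ie}(p_e)d^*_e$, and $d^*_e=0$ for $e\in A$. Therefore, on $\Tc_i(p)$,
\[
\tilde u_i(y_i;p+\varepsilon d^*)=\tilde u_i(y_i;p)-\varepsilon\sum_{e\in X^*\cap\delta(i)}q'_{ie}(p_e)\exp(t_e)\,y_{ie}.
\]
Since $q'_{ie}(p_e)\exp(t_e)=\exp(\kappa_{ie})$, the surviving bundles are exactly the minimizers of the linear form $\sum_{e\in X^*}\exp(\kappa_{ie})y_{ie}$ over $\Tc_i(p)$. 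Here we also use that $T_i(p+\varepsilon d^*)\subseteq T_i(p)$ and that the minimal elements are the minimum-sum elements.

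There is a subtlety. The minimal elements of the new demand set might not be the minimizers of this form restricted to $\Tc_i(p)$, because some non-minimal elements of $T_i(p)$ could become minimal. Lemma~\ref{subset} rules this out: it states $\Tc_i(p+\varepsilon d^*)\subseteq\Tc_i(p)$. Since $\Tc_i(p)$ is M-convex and the objective is linear, the set of optimizers is an M-convex face of $B(\rhoc_i)$.

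\emph{Matching the face with the right-hand side.} We must show this face equals $B(\rhoc_i|_A)\oplus\argmax_{B(\rhot_i)}\sum(-\kappa_{ie})y_{ie}$. Lemma~\ref{lem:perturbation_decomposition} gives $\rhoc_i(S;p+\varepsilon d^*)=\rhoc_i(S\cap A;p)+\rhoc_i(S\cap X^*;p+\varepsilon d^*)$. So the new base polyhedron is a direct sum of $B(\rhoc_i|_A)$ and a base polyhedron on $X^*$. Its $A$-part is tight on $A$, so its $X^*$-part lies in the contraction $B(\rhot_i)$.

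It remains to see that, within $B(\rhot_i)$, the optimizers of $\min\sum\exp(\kappa_{ie})y_{ie}$ coincide with the optimizers of $\max\sum(-\kappa_{ie})y_{ie}$. This holds because, on a base polyhedron, the optimal face of a linear objective depends only on the weak ordering of the weights (by the greedy algorithm). The map $\kappa\mapsto-\exp(\kappa)$ and the map $\kappa\mapsto-\kappa$ are both strictly decreasing, so they induce the same ordering and hence the same greedy faces. This ordinal invariance is the point of defining $d^*$ through $\exp(t)$.

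\emph{Main obstacle.} The hard step is justifying that every new minimal bundle is $\rhoc_i(A)$-tight on $A$, so that its $X^*$-part lies in $B(\rhot_i)$ rather than just in $B(\rhoc_i)$ projected. This is exactly what the perturbation decomposition supplies. We verify it via the rank identity at $S=A$: $\rhoc_i(A;p+\varepsilon d^*)=\rhoc_i(A;p)$. Combined with equal total sums, this forces tightness on $A$. Finally, a uniform $\varepsilon$ is obtained by taking the minimum over the finitely many $i$ and the finitely many thresholds from Lemmas~\ref{lem:perturbation_decomposition} and~\ref{subset}.
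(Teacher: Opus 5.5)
Your proposal is correct and follows essentially the same route as the paper. Lemma~\ref{subset} identifies $\Tc_i(p+\varepsilon d^*)$ with the minimizers over $\Tc_i(p)$ of the linearized payment increase $\sum_{e\in X^*}\exp(\kappa_{ie})y_{ie}$; this minimizing face of $B(\rhoc_i)$ splits off $B(\rhoc_i|_A)$ and leaves a minimizing face of $B(\rhot_i)$; and the ordinal invariance of greedy faces lets you replace $\exp(\kappa_{ie})$ by $\kappa_{ie}$.

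The one difference is how you get tightness on $A$. You derive it from Lemma~\ref{lem:perturbation_decomposition}, and this requires the full direct-sum identity: the identity at $S=A$ together with $\rhoc_i(E)$ unchanged does not force every element to be tight. The paper instead reads tightness off the greedy rule, since the weights are zero on $A$ and strictly positive on $X^*$, so your ``main obstacle'' is in fact automatic.
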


\begin{proof}
For sufficiently small $\varepsilon>0$, local linearity
of the transfers implies that the price increase from $p$
to $p+\varepsilon d^*$ changes the net payments for
a bundle $y_i$ by
\[
\varepsilon
\sum_{e\in X^*\cap\delta(i)}
q'_{ie}(p_e)d_e^*y_{ie}.
\]
By Lemma~\ref{subset},
both $T_i(p+\varepsilon d^*)\subseteq T_i(p)$ and
$\Tc_i(p+\varepsilon d^*)\subseteq\Tc_i(p)$ hold.
Since all bundles in $T_i(p)$ yield the same utility at $p$,
the updated demand consists precisely of those minimizing
this payment increase.
The second inclusion ensures that the minimum is attained
in $\Tc_i(p)$.
Conversely, any minimizer in $\Tc_i(p)$ is demanded after
the update and remains componentwise minimal.
Thus, $\Tc_i(p+\varepsilon d^*)$ consists precisely of
the minimizers of the payment increase over~$\Tc_i(p)$.

Taking convex hulls gives the corresponding minimizing
face of $B(\rhoc_i)$.
Since the coefficients are zero on $A$ and positive on $X^*$,
the greedy rule decomposes this face into $B(\rhoc_i|_A)$
and a minimizing face of $B(\rhot_i)$.
On $B(\rhot_i)$, replacing the positive coefficients
$q'_{ie}(p_e)d_e^*$ by their logarithms
$\kappa_{ie}=\log q'_{ie}(p_e)+t_e$
preserves their weak ordering and hence the minimizing face.
\end{proof}

\begin{proof}[Proof of Lemma~\ref{ineq_L}]
Fix $\varepsilon>0$ sufficiently small for
Lemma~\ref{greedy_structure},
and let $\emptyset\neq X\subseteq X^*$.
For $\xi\geq0$, define 
$z^\xi:=z^*-\gamma\xi\chi_{\scriptscriptstyle X}$ and 
$\kappa^\xi_{ie}:=
\kappa_{ie}-\xi(\chi_{\scriptscriptstyle X})_e$
for any $e\in X^*\cap\delta(i)$. 
The perturbed point $z^\xi$ belongs to $\dom\cL$:
its primary components on $X$ remain one, preserving
nonnegativity, and $z^\xi\leq z^*$ preserves the upper bounds.
Thus, all coefficients $w_{ie}-z^\xi_e$ are~lexicographically nonnegative, and each supremum
in $\cL(z^\xi)$ is attained on $B(\rhoc_i)$.
For each $i\in N$,~define
\[
h_i(\xi):=
\max_{y_i\in B(\rhot_i)}
\sum_{e\in X^*\cap\delta(i)}(-\kappa^\xi_{ie})y_{ie}.
\]

In each maximization defining $\cL(z^\xi)$,
the primary coefficients are one on $A$ and zero on $X^*$.
The primary objective therefore maximizes $x_i(A)$.
The greedy decomposition separates the optimization over
$B(\rhoc_i|_A)$ from that over $B(\rhot_i)$.
The former contributes a term independent of $\xi$,
while the latter contributes $\gamma h_i(\xi)$.
The linear term $\sum_{e\in E}\sigma_e z^\xi_e$
changes by $-\gamma\xi\sigma(X)$.
Subtracting the value at $\xi=0$ gives
\begin{equation}
\label{eq:ineq-L-difference}
\cL(z^\xi)-\cL(z^*)
=
-\gamma\xi\sigma(X)
+\gamma\sum_{i\in N}\bigl(h_i(\xi)-h_i(0)\bigr).
\end{equation}

Choose $\tau>0$ sufficiently small that, for every $i\in N$
and $\xi\in(0,\tau)$, the perturbation from $-\kappa_i$
to $-\kappa^\xi_i$ preserves every strict coefficient
ordering on $X^*\cap\delta(i)$.
A nonincreasing ordering for $-\kappa^\xi_i$ is then
also valid for $-\kappa_i$.
The greedy solution $y_i^\xi$ for this ordering
therefore attains both $h_i(\xi)$ and $h_i(0)$.
By~\eqref{eq:ineq-L-base-decomposition},
it extends to a point of
$B(\rhoc_i(\cdot;p+\varepsilon d^*))$.
Evaluating both objectives at this common optimizer yields
$h_i(\xi)-h_i(0)
=
\xi y_i^\xi(X)
\leq
\xi\rhoc_i(X;p+\varepsilon d^*)$. 
Substituting into~\eqref{eq:ineq-L-difference} gives
$\cL(z^\xi)-\cL(z^*)
\leq
\gamma\xi\bigl(\rhoc(X;p+\varepsilon d^*)-\sigma(X)\bigr)$.
\end{proof}

We now establish \LSC for~$d^*$.
\begin{lemma}\label{d_weighted}
The direction $d^*$ satisfies \LSC at $p$.
\end{lemma}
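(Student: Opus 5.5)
By Lemma~\ref{any_direction}, it suffices to verify condition~(b): for sufficiently small $\varepsilon>0$, every nonempty $X\subseteq X^*$ satisfies $\rhoc(X;p+\varepsilon d^*)>\sigma(X)$. Note that $\supp_+(d^*)=X^*$ holds by construction, so the hypothesis of Lemma~\ref{any_direction} is met. I plan to argue by contradiction using the minimality of $z^*$ among the optimal solutions of $(D)$, combined with Lemma~\ref{ineq_L}.

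Fix $\varepsilon>0$ small enough for both Lemma~\ref{ineq_L} and Lemma~\ref{greedy_structure}, and fix a nonempty $X\subseteq X^*$. Suppose that $\rhoc(X;p+\varepsilon d^*)\le\sigma(X)$. Lemma~\ref{ineq_L} gives some $\tau>0$ such that, for all $\xi\in(0,\tau)$,
\[
\cL(z^*-\gamma\xi\chi_{\scriptscriptstyle X})\le\cL(z^*)+\gamma\xi\bigl(\rhoc(X;p+\varepsilon d^*)-\sigma(X)\bigr)\le\cL(z^*).
\]
We must also check that $z^\xi\coloneqq z^*-\gamma\xi\chi_{\scriptscriptstyle X}$ lies in $\dom\cL$. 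This was already observed in the proof of Lemma~\ref{ineq_L}: the primary components on $X$ equal one, so $z^\xi$ remains lexicographically nonnegative, and $z^\xi\le z^*$ preserves the upper bounds $z_e\le w_{ie}$.

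Since $z^*$ is optimal, the displayed inequality forces $\cL(z^\xi)=\cL(z^*)$, so $z^\xi$ is also an optimal solution of $(D)$. However, $z^\xi\le z^*$ with $z^\xi\ne z^*$, because $X\neq\emptyset$ and $\gamma\xi>0$ lexicographically. This contradicts Theorem~\ref{convex-dual}, which says $z^*$ is the unique componentwise minimal optimal solution: uniqueness of a minimal element means every optimal solution dominates it, and $z^\xi$ does not.

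Hence $\rhoc(X;p+\varepsilon d^*)>\sigma(X)$ for every nonempty $X\subseteq X^*$, and Lemma~\ref{any_direction} yields \LSC.

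The one delicate point is the order of quantifiers in $\varepsilon$. Lemma~\ref{any_direction} is stated for sufficiently small $\varepsilon$, and Lemma~\ref{ineq_L} holds for sufficiently small $\varepsilon$ uniformly over $X$. Since there are only finitely many subsets $X$, we can take the minimum of the corresponding thresholds and use a common $\varepsilon$. Beyond this, I expect no real obstacle, because the substantive work is already contained in Lemma~\ref{ineq_L}.
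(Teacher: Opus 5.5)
Your argument is correct and is essentially the paper's proof. Both fix $\varepsilon$ small enough for Lemmas~\ref{any_direction} and~\ref{ineq_L}, check that $z^*-\gamma\xi\chi_{\scriptscriptstyle X}\in\dom\cL$, and combine the optimality and minimality of $z^*$ with Lemma~\ref{ineq_L} to obtain condition~(b). You argue by contradiction, whereas the paper directly asserts $\cL(z^*-\gamma\xi\chi_{\scriptscriptstyle X})>\cL(z^*)$. One minor remark: minimality of $z^*$ alone already excludes an optimal $z^\xi\le z^*$ with $z^\xi\neq z^*$. So you do not need the claim that uniqueness of a minimal element forces every optimal solution to dominate it. That claim is false in general posets, though true here since the proof of Theorem~\ref{convex-dual} shows $z^*$ is the least optimal solution.
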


\begin{proof}
Fix $\varepsilon>0$ sufficiently small for
Lemmas~\ref{any_direction} and~\ref{ineq_L}.
For any nonempty $X\subseteq X^*$, take $\tau>0$
from Lemma~\ref{ineq_L}.
For $\xi\in(0,\tau)$, we have
$z^*-\gamma\xi\chi_{\scriptscriptstyle X}\in\dom\cL$.
Since $z^*$ is the componentwise minimal optimal solution of $(D)$, we have 
$\cL(z^*-\gamma\xi\chi_{\scriptscriptstyle X})>\cL(z^*)$.
Lemma~\ref{ineq_L} therefore gives
$\rhoc(X;p+\varepsilon d^*)>\sigma(X)$.
Since $X$ was arbitrary and $\supp_+(d^*)=X^*$,
Lemma~\ref{any_direction} shows that $d^*$ satisfies~\LSC.
\end{proof}

The submodular flow algorithm in Theorem~\ref{fact:sf-algorithms}
requires value oracles for $\rho^+$ and $\rho^-$,
which define $\mathcal P$.
Since $\rho^-(S)=\sigma(S)$ is directly computable,
it suffices to construct a value oracle for $\rho^+$.
These values can be computed efficiently using the demand
and exchange capacity oracles;
see Section~\ref{app:boundary-oracles} for the proof.

\begin{lemma}\label{lem:boundary-oracles}
    Let $\rho^+$ be the function defined in Section~\ref{subsec:lexico-dual}.
    Given the demand and exchange capacity oracles for all agents, the value $\rho^+(S)$ for $S \subseteq \mathcal{A}$ can be computed in strongly polynomial time.
\end{lemma}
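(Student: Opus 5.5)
Since $\rho^+=\bigoplus_{i\in N}\rhoc_i(\cdot;p)$, with each $\rhoc_i$ indexed by the incidence nodes of agent $i$, we have $\rho^+(S)=\sum_{i\in N}\rhoc_i(S_i;p)$, where $S_i\coloneqq\Set{e\in\delta(i)}{(i,e)\in S}$. It therefore suffices to compute $\rhoc_i(S_i;p)=\max_{y_i\in\Tc_i(p)}y_i(S_i)$ for a single agent in strongly polynomial time. I would do this in two phases. Phase~1 finds some $x_i\in\Tc_i(p)$. Phase~2 runs a greedy augmentation inside the M-convex set $\Tc_i(p)$ (Fact~\ref{prop:twisted_m_convex}) to maximize the linear function $y\mapsto y(S_i)$.

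\textbf{Phase 1.} Call the demand oracle to get $y_i\in T_i(p)$. Then, for each $e\in\delta(i)$ in turn, call the exchange capacity oracle with the pair $(e,0)$. It returns the largest $\alpha$ with $y_i-\alpha\chi_e\in T_i(p)$, and we set $y_i\leftarrow y_i-\alpha\chi_e$. One pass suffices, which I would justify using the \Mn-convexity of $T_i(p)$. Suppose that after the pass the point $y_i$ is not minimal. Then some $x\in T_i(p)$ satisfies $x\le y_i$, $x\ne y_i$. Apply the \Mn-exchange axiom to $y_i$ and $x$ at a coordinate $e\in\supp_+(y_i-x)$. Since $\supp_-(y_i-x)=\emptyset$, the exchange partner must be $0$, so $y_i-\chi_e\in T_i(p)$. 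Decreasing other coordinates can only keep $e$ decreasable, since any minimal element below the current point stays below it, so the same argument applies at the moment $e$ was processed. That contradicts the maximality of $\alpha$ at $e$. The result is $x_i\in\Tc_i(p)$, using $O(|\delta(i)|)$ oracle calls.

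\textbf{Phase 2.} Elements of $\Tc_i(p)$ all have the same total sum, so moving within $\Tc_i(p)$ means exchanges $+\chi_f-\chi_e$ with $f\in S_i$ and $e\in\delta(i)\setminus S_i$. By M-convexity, a point maximizing a linear function over an M-convex set is characterized by local optimality with respect to such single exchanges. I would therefore run a greedy procedure. Order the coordinates so that those of $S_i$ come first, and compute the greedy base of $B(\rhoc_i)$ by a sequence of exchange-capacity saturations. For each $f\in S_i$ and each $e\notin S_i$, query the capacity of moving $e\to f$ within $\Tc_i(p)$ and apply the maximal move. This is the standard greedy construction of the lexicographically maximal base. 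It uses $O(|\delta(i)|^2)$ exchanges, and the value $x_i(S_i)$ at the end equals $\rhoc_i(S_i;p)$. The reason is that after the moves no exchange from outside $S_i$ into $S_i$ is feasible, which is the optimality criterion.

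\textbf{Main obstacle.} The oracle answers exchange capacities in $T_i(p)$, not in $\Tc_i(p)$. I need to check that for $x_i\in\Tc_i(p)$ and $e,f\in\delta(i)$, the capacity returned for $T_i(p)$ is the capacity within $\Tc_i(p)$. This holds because a move $-\alpha\chi_e+\alpha\chi_f$ preserves the total sum, and $\Tc_i(p)=\argmin_{y\in T_i(p)}y(E)$. Hence every point of $T_i(p)$ reached by such a move has the minimum total sum and lies in $\Tc_i(p)$. Summing over agents then gives $\rho^+(S)$ with polynomially many oracle calls and arithmetic operations, which is strongly polynomial.
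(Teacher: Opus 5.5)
Your proposal is correct and follows essentially the same route as the paper. One pass of maximal single-coordinate decreases yields a point of $\Tc_i(p)$ by \Mn-convexity, sum-preserving exchange capacities in $T_i(p)$ coincide with those in $\Tc_i(p)$, and $\rho^+(S)$ then follows by exchange-capacity-based linear optimization. The paper cites this last step and treats all agents at once via $\prod_{i}\Tc_i(p)$ with zero cross-agent capacities, whereas you split by agent and spell out the saturation loop.

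The claim you leave unjustified, that a single pass of saturations suffices, does hold with $f$ as the outer loop. Once the minimal tight set containing $f$ lies inside $S_i$, every later move from outside $S_i$ into $S_i$ keeps that set tight, so at the end $S_i$ is a union of tight sets.
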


We are now ready to prove Theorem~\ref{thm:direction-computable}.

\begin{proof}[Proof of Theorem~\ref{thm:direction-computable}]
The flow formulation in Section~\ref{subsec:lexico-dual}
has $O(\abs{\mathcal A}+\abs{E})$ nodes and arcs.
By Lemma~\ref{lem:boundary-oracles} and
Theorem~\ref{fact:sf-algorithms},~the minimal optimal potential $\pi=\pi^0+\gamma\pi^1$
can be computed in strongly polynomial time
using lexicographic pairs (Remark~\ref{lexicographic_computation}).
Since the correspondence with dual solutions preserves
componentwise order, $z^*=\pi|_{E'}$ is the minimal
optimal solution of $(D)$.~
By Theorem~\ref{convex-dual}, we recover
$X^*=\Set{e\in E}{\pi^0_{e'}=1}$
and set $d_e^*=~\exp(\pi^1_{e'})$~on $X^*$
and zero otherwise.
Lemma~\ref{d_weighted} shows that $d^*$ satisfies \LSC.
\end{proof}

\begin{remark}\label{rem:simpler}
The overall proof strategy is similar to that of \citet{OS2026}, but the technical details differ. In their setting, demand vectors and the submodular functions
defining our flow formulation are nonnegative,
allowing a polymatroid-intersection formulation. In our setting, neither property is guaranteed due to the twist. We instead exploit the constraints imposed by $\dom \mathcal{L}$,
which yield~\eqref{eq:c0} throughout $X^*$ and 
enable the preceding~analysis.
\end{remark}

\subsection{Computing Equilibrium Allocations}\label{Allocation}
This section describes an algorithm for computing an equilibrium allocation $y = {(y_i)}_{i\in N} \in \R^\cA$ from an equilibrium price $p^* \in \R_+^E$, i.e., $y_i\in T_i(p^*)$ for every $i\in N$ and $\sum_{i\in N}y_i=\sigma$. 
Such a profile exists by Lemma~\ref{price_characterization}.
We focus on allocations that maximize total trade volume.
In our single-seller application in Section~\ref{sec:single-seller-comparison}, this ensures
that all supply, including zero-price goods, is allocated.

\begin{proposition}
\label{prop:allocation}
Given an equilibrium price $p^*$ and access to the demand and exchange capacity oracles for all agents, an equilibrium allocation maximizing total trade volume can be computed in strongly polynomial time.
\end{proposition}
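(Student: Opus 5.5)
We reduce the computation to a submodular flow problem on the same bipartite graph $D=(V,A)$ as in Section~\ref{subsec:lexico-dual}, now with the full twisted demand sets $T_i(p^*)$ in place of $\Tc_i(p)$. Fix $p=p^*$. By Fact~\ref{prop:twisted_m_convex}, each $T_i(p^*)$ is an \Mn-convex set, and it is bounded since $\cY_i$ is finite. Hence it is the set of integer points of the generalized polymatroid with lower and upper rank functions $\mu_i(\cdot;p^*)$ and $\rhoh_i(\cdot;p^*)$.

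To handle the generalized polymatroid, we encode it as a base polyhedron on an enlarged ground set. For each agent we add a dummy node $0_i$ and define a submodular function $\rho_i^\sharp$ on $\delta(i)\cup\{0_i\}$ with $\rho_i^\sharp(S)=\rhoh_i(S;p^*)$ and $\rho_i^\sharp(S\cup\{0_i\})=-\mu_i(\delta(i)\setminus S;p^*)$ for $S\subseteq\delta(i)$. The projection of $B(\rho_i^\sharp)$ to $\delta(i)$ is then exactly the g-polymatroid $\Set{y}{\mu_i\le y(S)\le\rhoh_i}$, by the standard Frank--Tardos correspondence~\citep{FT1988}.

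The flow network has the following parts.
\begin{itemize}
\item Incidence nodes $(i,e)$, each with an arc to $e'$.
\item Arcs from each $0_i$ to a single sink-side node $0'$.
\item On $V^-$, the modular function $\sigma$ imposed as an equality, i.e.\ the base version~\eqref{eq:boundary-polyhedron-B} with $\rho^-=\sigma$ on $E'$. The node $0'$ carries a free (bounded) range, which can be encoded by a direct sum with a suitable interval.
\end{itemize}
A feasible flow is then exactly a profile $(y_i)$ with $y_i$ in the g-polymatroid of $T_i(p^*)$ and $\sum_i y_i=\sigma$. Feasibility holds by Lemma~\ref{price_characterization} because $p^*$ is an equilibrium price. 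Integrality of $\rho^\pm$ gives an integral optimal flow by Theorem~\ref{fact:sf-algorithms}, and integral points of the g-polymatroid are exactly elements of $T_i(p^*)$.

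For the objective, total trade volume is $\sum_{e}z_e$ with $z_e=y_{ie}$ for the buyer $i$ of $e$. We therefore put cost $-1$ on arcs $((i,e),e')$ with $e\in\delta_-(i)$ and $0$ elsewhere. In the single-seller setting with $\sigma=s$, the analogous choice maximizes allocated supply on the buyers' side. Since $\sum_i y_i=\sigma$ fixes each coordinate of the sum, we must check that the volume is genuinely determined by the buyer-side coordinates. In trading networks it is, since $\sigma=\zeros$ forces seller coordinates to equal minus the buyer coordinates. Minimizing this linear cost over feasible flows gives the desired allocation.

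The main obstacle is the oracle step: supplying value oracles for $\rhoh_i(\cdot;p^*)$ and $\mu_i(\cdot;p^*)$, hence for $\rho_i^\sharp$, in strongly polynomial time. We would proceed analogously to Lemma~\ref{lem:boundary-oracles}. First, obtain some $y_i\in T_i(p^*)$ from the demand oracle. Then compute $\max_{y\in T_i(p^*)} y(S)$ by greedy exchanges: repeatedly use the exchange capacity oracle with pairs $(f,e)$, $f\notin S$, $e\in S$, and with $(0,e)$, moving by the maximum capacity each time. The \Mn-convex exchange property ensures that no improving exchange remains only at a maximizer, and each exchange saturates a pair, so the number of oracle calls is polynomial in $|\delta(i)|$. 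The value $\mu_i$ is obtained symmetrically by minimizing. Combining these oracles with Theorem~\ref{fact:sf-algorithms} on a graph with $O(|\cA|+|E|+n)$ nodes yields the strongly polynomial bound.
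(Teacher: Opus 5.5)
Your proof is correct and follows the paper's overall strategy: a submodular flow on the bipartite graph of Section~\ref{subsec:lexico-dual}, cost $-1$ on buyer arcs, feasibility from Lemma~\ref{price_characterization}, and integrality from integral rank functions. The difference is how you turn the generalized polymatroid into a base polyhedron.

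The paper slices $\bigoplus_{i}\operatorname{conv}(T_i(p^*))$ with the hyperplane $y(\cA)=\sigma(E)$. This leaves the network unchanged, but $\rho^+$ then couples the agents. Its value oracle therefore needs three things: an initial integral point of the slice (found by adjusting the total sum, using $\mu(E;p^*)\le\sigma(E)\le\rhoh(E;p^*)$), cross-agent exchange capacities, and a cited routine that evaluates a rank function from a base point and exchange capacities.

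You instead lift each agent separately with a dummy element $0_i$ and push the coupling into the network through $0'$. Then $\rho^+$ stays a direct sum whose values are just $\rhoh_i(\cdot;p^*)$ and $-\mu_i(\cdot;p^*)$. These are computable from each agent's own oracles, at the cost of a few extra nodes.

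Two details need tightening.

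First, the boundary at $0'$ is not a free range. Since $\rho^+(V^+)=\sum_i\rho_i^\sharp(\delta(i)\cup\{0_i\})=0$, flow conservation forces its value, so you should simply set $\rho^-(\{0'\})=-\sigma(E)$. An interval is not expressible in the base formulation anyway.

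Second, the claim that each exchange saturates a pair does not by itself bound the number of oracle calls. Work with the lifted vector $\hat y=(y_i,-y_i(\delta(i)))\in B(\rho_i^\sharp)$. Saturation of a pair is certified by a set $X$ with $\hat y(X)=\rho_i^\sharp(X)$ that contains the increased element but not the decreased one. A later exchange that decreases a coordinate in $X$ and increases one in $S\setminus X$ destroys this certificate, so the pair can reopen.

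The fix is to order the calls: for each $e\in S$, call the oracle on $(f,e)$ for all $f\in(\delta(i)\setminus S)\cup\{0\}$ consecutively, moving by the full capacity each time. The certificates found during this sweep stay tight within it. Their intersection is a tight set $T_e$ with $e\in T_e\subseteq S$. $T_e$ remains tight afterwards, because every later exchange moves weight from outside $S$ into $S$. Such an exchange cannot decrease $\hat y(T_e)$, which is already at its upper bound. At the end, $S=\bigcup_{e}T_e$ is tight. So a single pass of $O(|\delta(i)|^2)$ oracle calls gives $\rhoh_i(S;p^*)$, and the symmetric procedure gives $\mu_i$.
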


\begin{proof}
Consider the directed bipartite graph $D=(V^+ \cup V^-, A)$ in Section~\ref{subsec:lexico-dual},
and identify a flow $f\in\R^A$ on $D$ with $y\in\R^{\cA}$ by $f_{((i,e),e')}=y_{ie}$.
We consider boundary polytopes
\begin{align}
    \mathcal{P}^+ &\coloneqq \textstyle\bigoplus_{i \in N} \operatorname{conv}(T_i(p^*)) \cap \Set{y \in \R^\mathcal{A}}{y(\mathcal{A}) = \sigma(E)},\label{eq:Pplus}\\
    \mathcal{P}^- &\coloneqq  \set{\sigma},\notag
\end{align}
on $V^+ = \mathcal{A}$ and $V^- = E'$, respectively, where $\operatorname{conv}$ denotes the convex hull.
Since each $T_i(p^*)$ is \Mn-convex, the fixed-sum slice
$\mathcal P^+$ and the singleton $\mathcal P^-$ are 
base polyhedra of submodular functions~$\rho^\pm$.
Thus, integral feasible flows correspond to
equilibrium allocations in twisted coordinates.
We show in Section~\ref{app:boundary-oracles} how to compute
$\rho^+(S)$ in strongly polynomial time;
$\rho^-(S)$ is directly~computable.

Set $c^1\coloneqq\zeros$ and define $c^0_{((i,e),e')}$
to be $-1$ for $e\in\delta_-(i)$ and $0$ for $e\in\delta_+(i)$.
For the corresponding trade vector $z$, counting each trade
on its buyer's arc gives $\angle{c^0,f}=-\sum_{e\in E}z_e$.
Thus, minimizing flow cost maximizes total trade volume.
The feasible region is nonempty and bounded.
By Theorem~\ref{fact:sf-algorithms}, an integral optimal flow
can be computed in strongly polynomial time.
\end{proof}

\begin{remark}
\label{rem:allocation-sy}
The same construction applies to the network in
Section~\ref{sec:sy}, with $\cA=N\times M$ and
$\sigma=\tau(s)$, where $\tau$ is the twist operator defined there
and $s$ is the supply vector.
Feasibility requires all supply to be sold.
Any integral feasible flow gives an equilibrium allocation,
and we may therefore take $c^0=c^1=\zeros$.
\end{remark}

\section{Applications to Single-Seller Markets}
\label{sec:applications}

In this section, we apply our framework to single-seller markets
with a fixed supply of goods.
We first recover the minimum equilibrium price vector in the
market model of \citet{OS2026} via a transaction-price projection.
We then extend the double-track auction \citep{SY2009,SY2015} 
to bidder- and good-specific transfer~functions.

We consider a market with a finite set $M$ of goods and a seller
supplying $s_j$ units of each good $j\in M$ to a set
$N=\{1,2,\ldots,n\}$ of bidders, where each $s_j$
is a positive integer and $n\ge2$.
Each bidder $i$ has a coordinatewise nondecreasing valuation
$\phi_i:[0,s]_{\mathbb Z}\to\mathbb R$.
Without loss of generality, normalize each valuation so that
$\phi_i(\zeros)=0$.
At prices $p\in\mathbb R_+^M$, 
define bidder $i$'s utility and demand set by
\begin{align*}
    u_i(x_i;p)
    &\coloneqq
    \phi_i(x_i)-\sum_{j\in M}q_{ij}(p_j)x_{ij},\\
    D_i(p)
    &\coloneqq
    \argmax_{x_i\in[0,s]_{\mathbb Z}}u_i(x_i;p),
\end{align*}
where each $q_{ij}$ satisfies 
Assumption~\ref{ass:transfers}.
A competitive equilibrium consists of prices $p$
and bundles $x_i\in D_i(p)$ satisfying $\sum_i x_i=s$.

\subsection{The Single-Seller Correspondence}
\label{sec:single-seller-comparison}
We apply our trading-network framework to the market of
\citet{OS2026} and compare the resulting prices with its
minimum equilibrium prices.
Assume that each valuation $\phi_i$ is \Mn-concave.
The network has agents $N\cup\{0\}$, where $0$ is the seller.
For each $(i,j)\in N\times M$, we introduce a trade $e_{ij}$
from $0$ to $i$ with capacity $s_j$.
The bidders' valuations and payment functions remain unchanged.
The seller's transfer functions are identities.
Its valuation is zero if
$\sum_{i\in N}z_{e_{ij}}\le s_j$ for every $j\in M$,
and $-\infty$ otherwise.

Transaction prices for the same good may differ across bidders.
At equilibrium, however, profit maximization requires the seller to
trade only at the highest transaction price for each good.
Let $p^\star$ be the price vector returned by
Algorithm~\ref{algo1} on this network, and let $\bar p$
denote the minimum equilibrium price vector of the original~market.
\begin{lemma}
\label{thm:single-seller-minimum-prices}
For each $j\in M$,
$\bar p_j=\max_{i\in N}p_{e_{ij}}^\star$.
Moreover, let $x^\star$ be an equilibrium allocation in the
trading network at $p^\star$ maximizing total trade volume,
and identify it with the bidders' bundles.
Then, $(x^\star,\bar p)$ is a competitive equilibrium
of the original market.
\end{lemma}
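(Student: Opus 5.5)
The plan is to compare equilibria of the two markets through two maps. The \emph{lift} $L\colon \R_+^M\to\R_+^E$ is $L(p)_{e_{ij}}\coloneqq p_j$. The \emph{projection} $\Pi\colon\R_+^E\to\R_+^M$ is $\Pi(r)_j\coloneqq\max_{i\in N} r_{e_{ij}}$. Both maps are monotone and satisfy $\Pi(L(p))=p$. By Theorem~\ref{thm:correctness_algo1}, $p^\star$ is the minimum equilibrium price vector of the network; for the original market I take from \citet{OS2026} that the minimum equilibrium price vector $\bar p$ exists. The statement then follows from two transfer claims:
\begin{itemize}
\item[(A)] $L$ maps equilibrium prices of the original market to equilibrium prices of the network.
\item[(B)] $\Pi$ maps network equilibrium prices, together with a volume-maximizing equilibrium allocation, to an equilibrium of the original market.
\end{itemize}
Given these, (A) and minimality of $p^\star$ give $p^\star\le L(\bar p)$, so $\Pi(p^\star)\le\bar p$. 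Claim (B) and minimality of $\bar p$ give $\bar p\le\Pi(p^\star)$. Hence $\bar p=\Pi(p^\star)$, and the second assertion of the lemma is (B) applied to $p^\star$ and $x^\star$.

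For (A), let $(x,\bar p)$ be an equilibrium of the original market with $\sum_i x_i=s$, and put $z_{e_{ij}}\coloneqq x_{ij}\le s_j$. At $L(\bar p)$ each bidder faces exactly the transfers $q_{ij}(\bar p_j)$, so $z_i\in D_i(L(\bar p))$. The seller's transfers are identities, so its utility is $\sum_j \bar p_j\sum_i z_{e_{ij}}$ on its feasible region. This is maximized by any bundle selling all $s_j$ units of every good, because $\bar p_j\ge0$ and every edge for good $j$ carries the same price $\bar p_j$. Hence $z$ lies in the seller's demand, and $L(\bar p)$ is an equilibrium price vector.

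For (B), let $r$ be a network equilibrium price vector with an equilibrium allocation $z$ maximizing total trade volume, and write $\hat p\coloneqq\Pi(r)$. The argument has three steps, carried out in this order.
\begin{enumerate}
\item \emph{Seller optimality.} If $z_{e_{ij}}>0$ but $r_{e_{ij}}<\hat p_j=r_{e_{kj}}$, the seller could move one unit from $e_{ij}$ to $e_{kj}$. This stays feasible because the capacity of $e_{kj}$ is $s_j$ and the aggregate supply constraint is unchanged, and it strictly raises the seller's utility, a contradiction. Similarly, if $\sum_i z_{e_{ij}}<s_j$ and $\hat p_j>0$, the seller could sell one more unit on $e_{kj}$. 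So every positive trade of good $j$ occurs at price $\hat p_j$, and goods that are not fully sold have $\hat p_j=0$.
\item \emph{Bidder optimality at $\hat p$.} For bidder $i$, raising the price of each good $j$ from $r_{e_{ij}}$ to $\hat p_j$ affects only coordinates where $x_{ij}=0$. Increasing transfers on goods the bidder does not consume leaves $u_i(x_i;\cdot)$ unchanged and can only lower the utility of every other bundle. Hence $x_i\in D_i(\hat p)$.
\item \emph{Full supply.} It remains to show $\sum_i x_i=s$ on goods with $\hat p_j=0$. Here I would use volume maximality. Since $\hat p_j=0$, every $r_{e_{ij}}=0$, so $q_{ij}(r_{e_{ij}})=0$ by Assumption~\ref{ass:transfers}. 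Because $\phi_i$ is nondecreasing, adding a unit of good $j$ to any bidder's bundle keeps it demanded at $r$. The seller is indifferent to selling more of good $j$ at price zero. So if good $j$ were undersold, adding a unit would yield an equilibrium allocation of larger volume, contradicting maximality.
\end{enumerate}

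I expect step~3 to be the main obstacle. The concern is that adding a unit of good $j$ should not break demand membership through interactions with other goods. This is handled by the zero transfer together with coordinatewise monotonicity: the augmented bundle has utility at least that of $x_i$, so it is still a maximizer. The step must also preserve the trade-vector capacity $z_{e_{ij}}\le s_j$, which holds because the total over bidders stays at most $s_j$. Once steps~1 through~3 are checked, (B) follows and the lemma is proved.
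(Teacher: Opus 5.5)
Your proposal is correct and follows essentially the same route as the paper. The paper also uses seller revenue maximization to show that every executed trade of good $j$ occurs at $P_j=\max_i p^\star_{e_{ij}}$ and that unsold goods have $P_j=0$. It uses volume maximality and monotonicity of $\phi_i$ to rule out unsold goods, and concludes that $(x^\star,P)$ is an equilibrium, so $\bar p\le P$. It then lifts an equilibrium at $\bar p$ to the network and obtains $P\le\bar p$ from minimality of $p^\star$. Your lift/projection framing makes the two directions explicit. Together, (A) and (B) already show that $\Pi(p^\star)$ is the least equilibrium price of the original market, so you do not actually need to import the existence of $\bar p$ from \citet{OS2026}.
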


\begin{proof}
Let $x^\star$ be an allocation as in the statement and define
$P\in\mathbb R_+^M$ by
$P_j\coloneqq\max_{i\in N}p_{e_{ij}}^\star$.
The seller's demanded bundle yields the maximum
revenue $\sum_j s_jP_j$.
Hence, every executed trade in good $j$ has price $P_j$,
and $s_j$ units are sold whenever~$P_j>0$.

If good $j$ is not fully allocated, then
$p_{e_{ij}}^\star=P_j=0$ for every $i$.
Assigning one additional unit of $j$ to any bidder is feasible
and leaves all payments unchanged.
By monotonicity, the enlarged bundle remains demanded.
This yields another equilibrium at $p^\star$ with greater
total trade volume, contradicting the choice of $x^\star$.
Thus, $\sum_i x_i^\star=s$.
At good prices $P$, each bidder pays the same for $x_i^\star$
and weakly more for any alternative bundle.
Therefore, $(x^\star,P)$ is a competitive equilibrium 
of the original market, and $\bar p\le P$.

For the opposite direction, consider an equilibrium
of the original market at $\bar p$.
Assigning price $\bar p_j$ to every trade $e_{ij}$
and retaining~the allocation gives an equilibrium
in the trading network:
bidder utilities are unchanged, and selling all supply
maximizes seller revenue.
Since $p^\star$ is the minimum equilibrium price~vector
in the trading network, $p_{e_{ij}}^\star\le\bar p_j$
for all $i,j$.
Hence, $P\le\bar p$, proving $\bar p=P$.
\end{proof}

\subsection{Substitutes and Complements}
\label{sec:sy}

We now extend the substitutability assumption to allow
complementarities.
Fix a partition $M=S_1\mathbin{\dot\cup}S_2$.
Here, we adapt the twist operator to this goods partition:
${(\tau(x))}_j\coloneqq x_j$ for $j\in S_1$ and
${(\tau(x))}_j\coloneqq-x_j$ for $j\in S_2$.
The twisted valuation is
$(\phi_i\circ\tau)(y_i)\coloneqq\phi_i(\tau(y_i))$
on $\tau([0,s]_{\Z})$.
We assume that each $\phi_i$ satisfies \emph{generalized
gross substitutes and complements} (GSC):
goods are substitutes within each group and
complements across groups.
Equivalently, each twisted valuation $\phi_i\circ\tau$
is \Mn-concave~\citep{SY2015}.
When $S_2=\emptyset$, the twist is the identity and
the model reduces to the setting of
Section~\ref{sec:single-seller-comparison}.

For good-price vectors $p,r\in\mathbb R_+^M$,
define the order $\le_g$ by
\[
    p\le_g r
    \quad\Longleftrightarrow\quad
    p|_{S_1}\le r|_{S_1}
    \quad\text{and}\quad
    p|_{S_2}\ge r|_{S_2}.
\]
Corollary~\ref{cor:equilibrium-lattice}
also holds in this model, with meet and join taken
with respect to $\le_g$;
see Appendix~\ref{app:equilibrium-lattice-schlegel} for more details.

To apply Algorithm~\ref{algo1} to this market,
we reverse price coordinates on $S_2$, turning
double-track adjustment into an ascending process.
Choose $\bar p_j>0$ for each $j\in S_2$ satisfying
\[
    q_{ij}(\bar p_j)>\phi_i(s)-\phi_i(\zeros)
    \qquad (i\in N,\ j\in S_2).
\]
Unbounded transfers ensure such bounds exist.\footnote{To choose $\bar p$ via demand queries, start with
$r=1$ and double $r$ until each demand oracle returns
$\zeros$ at the original prices $r\ones$.
The search terminates since $\zeros$~is uniquely demanded
for sufficiently large $r$.
Its optimality yields
$\phi_i(s)-\phi_i(\zeros)\leq\sum_{k\in M}q_{ik}(r)s_k$.
It therefore suffices to choose $\bar p_j$ such that
$q_{ij}(\bar p_j)>\sum_{k\in M}q_{ik}(r)s_k$
for all $i\in N$.}
Since no bidder demands good $j$ when $p_j\ge\bar p_j$,
every equilibrium satisfies $p_j<\bar p_j$ on $S_2$.
We therefore restrict attention to $p\in\R_+^M$ with
$p_j\le\bar p_j$ for all $j\in S_2$, and define
the transformed prices by
$\pi_j\coloneqq p_j$ for $j\in S_1$ and
$\pi_j\coloneqq\bar p_j-p_j$ for $j\in S_2$.
Increasing $\pi$ raises prices on $S_1$ and lowers them
on $S_2$, as in the double-track auction \citep{SY2009,SY2015}. 
The componentwise order on $\pi$ corresponds to
$\le_g$ on the original~prices.

Set $y_i\coloneqq\tau(x_i)$ and normalize
the transfers at zero by defining
\begin{align*}
    \tilde\phi_i(y_i)
    &\coloneqq
    \phi_i(\tau(y_i))
    +\sum_{j\in S_2}q_{ij}(\bar p_j)y_{ij},\\
    Q_{ij}(\pi_j)
    &\coloneqq
    \begin{cases}
        q_{ij}(\pi_j), & j\in S_1,\\
        q_{ij}(\bar p_j)-q_{ij}(\bar p_j-\pi_j),
        & j\in S_2.
    \end{cases}
\end{align*}
In these coordinates, each bidder's utility is given by
\[
    u_i(x_i;p)
    =
    \tilde\phi_i(y_i)
    -\sum_{j\in M}Q_{ij}(\pi_j)y_{ij}.
\]
The added linear term preserves \Mn-concavity,
and hence the transformed demand set
$T_i(\pi)\coloneqq\tau(D_i(p))$ is \Mn-convex.
Each $Q_{ij}$ satisfies the conditions of
Assumption~\ref{ass:transfers} on its domain.

Market clearing becomes $\sum_i y_i=\sigma$,
where $\sigma\coloneqq\tau(s)$ has entries $s_j$ on $S_1$
and $-s_j$ on $S_2$.
With this $\sigma$, Lemma~\ref{price_characterization}
characterizes equilibrium prices in the transformed coordinates.
Assuming access to the oracles of Section~\ref{Mechanism}
for the transformed demand sets, we then run our implementation
of Algorithm~\ref{algo1} from $\pi=\zeros$.
This corresponds to initial prices $p_j=0$ on $S_1$
and $p_j=\bar p_j$ on $S_2$.

\begin{theorem}
\label{thm:sy-convergence}
The double-track mechanism computes the minimum
equilibrium price vector under $\le_g$ in finitely many
iterations.
Each update direction is computable in strongly polynomial
time under our oracle and arithmetic models.
\end{theorem}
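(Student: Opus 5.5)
The plan is to reduce Theorem~\ref{thm:sy-convergence} to Theorems~\ref{thm:correctness_algo1} and~\ref{thm:direction-computable}. We recast the transformed market as an instance of the general framework and check that every ingredient used in their proofs holds there.

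\textbf{Step 1: recast the transformed market.} Take the ground set to be $M$. Each bidder $i$ gets the twisted valuation $\tilde\phi_i$ on $\tau([0,s]_\Z)$, which is \Mn-concave because $\phi_i\circ\tau$ is \Mn-concave and a linear term is added. The transfers are $Q_{ij}$, the market-clearing vector is $\sigma=\tau(s)$, and the price variable is $\pi$ with $\pi_j\le\bar p_j$ on $S_2$. The incidence set is $\cA=N\times M$, as in Remark~\ref{rem:allocation-sy}.

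\textbf{Step 2: check that the earlier results transfer.} Fact~\ref{prop:twisted_m_convex}, Lemmas~\ref{price_characterization}, \ref{lem:perturbation_decomposition}, \ref{subset} and~\ref{any_direction}, and the construction of Section~\ref{subsec:lexico-dual} use only three things: \Mn-convexity of $T_i(\pi)$, piecewise linearity and strict monotonicity of the transfers, and the given $\sigma$. They make no use of the trading-network structure. Each $Q_{ij}$ is continuous, strictly increasing, piecewise linear and satisfies $Q_{ij}(0)=0$ on $[0,\bar p_j]$. Hence Theorem~\ref{thm:direction-computable} applies verbatim, which gives strongly polynomial computability of each update direction.

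The only result that needs fresh justification is Lemma~\ref{no_underdemanded}, whose base case used monotonicity in purchases.
\begin{itemize}
\item At $\pi=\zeros$ we have $p_j=\bar p_j$ on $S_2$. By the choice of $\bar p$, no bidder demands any unit of $S_2$, so $y_{ij}=0$ there. With prices zero on $S_1$ and $\phi_i$ nondecreasing, bidder $i$ can take $x_{ij}=s_j$ on $S_1$.
\item For the lower bound we need a single demanded bundle that works for all $X$. With $x_{ij}=s_j$ on $S_1$ and $0$ on $S_2$, we get $\rhoh(X;\zeros)\ge \sum_i y_i(X)=n\,s(X\cap S_1)\ge s(X\cap S_1)\ge \sigma(X)$, using $\sigma=-s\le 0$ on $S_2$ and $n\ge1$.
\item I must still check that such a bundle is demanded. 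Maximal purchase on $S_1$ is optimal because $\phi_i$ is nondecreasing, and buying on $S_2$ is strictly dominated by the choice of $\bar p$.
\item The inductive step of Lemma~\ref{no_underdemanded} is unchanged.
\end{itemize}

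\textbf{Step 3: termination and the box constraint.} Finite termination follows as in Appendix~\ref{Convergence}, once we know the iterates stay in the region $\pi_j\le\bar p_j$. I expect this to be the main obstacle, since $Q_{ij}$ is only defined on $[0,\bar p_j]$. My first attempt is to argue that coordinates $j\in S_2$ never rise to $\bar p_j$. An over-demanded minimal set $X^*$ can contain $j\in S_2$ only while bidders still demand fewer than $s_j$ units in twisted terms, that is, while $\mu$ exceeds $\sigma$ there, which requires positive demand for good $j$. If that fails, I would treat $\pi_j=\bar p_j$ as a breakpoint captured by $\lambda_{\mathrm{tr}}$ and extend $Q_{ij}$ linearly beyond $\bar p_j$. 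At such prices $\bar p_j-\pi_j\le 0$ means no demand for $j$, so no equilibrium and no algorithmic step can cross that point.

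\textbf{Step 4: minimality under $\le_g$.} The minimality argument of Appendix~\ref{minimality} shows the output is the componentwise minimum equilibrium $\pi$ among those in the box. Every equilibrium satisfies $p_j<\bar p_j$ on $S_2$, so every equilibrium lies in the box. The map $p\mapsto\pi$ is an order isomorphism from $\le_g$ to the componentwise order, so the output is the $\le_g$-minimum, consistent with the lattice version of Corollary~\ref{cor:equilibrium-lattice}.
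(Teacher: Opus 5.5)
\textbf{There is a genuine gap in Step 3, and it is where the proof needs its one new idea.} You must show that the output corresponds to an original equilibrium, i.e., that $\pi_j\le\bar p_j$ on $S_2$. Beyond that point, $p_j=\bar p_j-\pi_j<0$. Existence of an original equilibrium is part of what is being proved, so you cannot invoke minimality against an original equilibrium to pull the output back into the box.

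\begin{itemize}
\item \textbf{First attempt.} It reasons about a single coordinate. But $X^*(\pi)$ is the inclusionwise minimal maximizer of a supermodular set function, and $j\in X^*$ says nothing direct about $\mu(\{j\};\pi)$ versus $\sigma_j$.
\item \textbf{Fallback.} It has the economics reversed. When $\bar p_j-\pi_j\le 0$, good $j$ is free, or subsidized under the linear extension. By monotonicity, every bidder then demands all $s_j$ units, i.e., $y_{ij}=-s_j$, not zero.
\item \textbf{Breakpoint idea.} Making $\bar p_j$ a breakpoint of $\lambda_{\mathrm{tr}}$ only stops the current step there. It does not prevent the next step from continuing past it.
\end{itemize}
The same extension of $Q_{ij}$ beyond $\bar p_j$ is also what makes your ``verbatim'' use of the direction construction and Lemma~\ref{lem:perturbation_decomposition} legitimate at $\pi_j=\bar p_j$.

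\textbf{The missing idea is a set-level statement at the boundary, combined with \LSC.} The paper's argument runs as follows.
\begin{itemize}
\item At any $\pi$ with $\pi_j=\bar p_j$, every $y_i\in\Tc_i(\pi)$ has $y_{ij}=-s_j$. Hence, for every $X\ni j$,
\[
\sigma(X\setminus\{j\})+\rhoc\bigl((M\setminus X)\cup\{j\};\pi\bigr)-\bigl(\sigma(X)+\rhoc(M\setminus X;\pi)\bigr)=s_j-|N|s_j\le 0,
\]
so $j\notin X^*(\pi)$.
\item \LSC together with the event step size gives $X^*(\pi+sd^*)=\supp_+(d^*)$ for all $0\le s<\lambda$. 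A step raising $\pi_j$ would therefore have to pass a point where $\pi_j=\bar p_j$ and $j$ is active, which is impossible.
\item Hence $\pi_j\le\bar p_j$ throughout. The paper does the same on $S_1$ with bounds $b_j$, where the change is $-s_j$.
\end{itemize}

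\textbf{Alternative repair.} You could instead run the linearly extended model directly. For $\pi_j>\bar p_j$, every bidder strictly demands all $s_j$ units of $j$, so total demand is $|N|s_j>s_j$ because $n\ge 2$. Thus no equilibrium of the extended instance lies outside the box, and the general correctness theorem applied to that instance forces the output into it. This route needs the excess-demand argument, not the ``no demand'' claim you wrote.
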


This establishes equilibrium existence constructively.
Among all equilibria, the output minimizes prices in $S_1$
and maximizes prices in $S_2$.
An equilibrium allocation can be obtained by applying
the construction in Remark~\ref{rem:allocation-sy}
at these prices and undoing the twist.

\begin{proof}
For the analysis, extend each $Q_{ij}$ on $S_2$
linearly beyond $\bar p_j$ with a positive slope.
Define $\Tc_i(\pi)$ and $\rhoc_i(\cdot;\pi)$ as before.
The rank decomposition in 
Lemma~\ref{lem:perturbation_decomposition}
then applies in these coordinates.
Use the network of Section~\ref{subsec:lexico-dual}
with goods $M$ and incidences $N\times M$.
Take $\rho^+=\bigoplus_i\rhoc_i$
and $\rho^-(H)=\sigma(H)$ for $H\subseteq M$,
and set $w_{ij}\coloneqq1-\gamma\log Q'_{ij}(\pi_j)$
using right derivatives.
For $j\in S_2$ with $\pi_j<\bar p_j$,
$Q'_{ij}(\pi_j)$ equals the left derivative of $q_{ij}$
at $p_j=\bar p_j-\pi_j$.
The construction in the proof of
Theorem~\ref{thm:direction-computable}
therefore yields a direction satisfying \LSC
in strongly polynomial~time.

At $\pi=\zeros$, monotonicity and the choice of $\bar p$
ensure that each bidder has a demanded bundle containing
the full supply of every group-1 good and no group-2 goods.
The sum of these demanded bundles in twisted coordinates
is componentwise at least $\sigma$.
Thus, no set is initially under-demanded.
The induction step of Lemma~\ref{no_underdemanded}
preserves this invariant throughout.

We next show that the price path remains bounded.
For $j\in S_1$, choose $b_j$ such that
$q_{ij}(b_j)>\phi_i(s)-\phi_i(\zeros)$ for every $i$.
At $\pi_j=b_j$, every demanded bundle has $y_{ij}=0$.
For $j\in S_2$, at $\pi_j=\bar p_j$ we have $p_j=0$,
and monotonicity implies $y_{ij}=-s_j$
for every $y_i\in\Tc_i(\pi)$.

At either boundary, removing $j$ from a set $X$ containing it
changes $\sigma(X)+\rhoc(M\setminus X;\pi)$ by
$-s_j$ on $S_1$ and $-(|N|-1)s_j$ on $S_2$.
Since these changes are nonpositive, the inclusionwise
minimal minimizer $X^*(\pi)$ cannot contain $j$.
By \LSC and the choice of step size, the active set remains
the minimal minimizer before the next event.
An update increasing coordinate $j$ therefore cannot cross
the boundary, where $j$ must be inactive.
Thus, the price path is bounded.
Since each nonterminal update increases some coordinate
at a fixed positive rate, its step length is finite.

The argument in Appendix~\ref{Convergence} applies to the
transformed model and establishes finite termination.
At termination, Lemma~\ref{price_characterization}
guarantees an equilibrium.
Every equilibrium price vector of the original market
lies within the chosen bounds and therefore corresponds to
an equilibrium price vector of the transformed model.
The argument in Appendix~\ref{minimality} then establishes
componentwise minimality of the terminal prices.
In the original price coordinates, this yields the minimum
equilibrium price vector under $\le_g$.
\end{proof}

\section{Concluding Remarks}
\label{Concluding}
We developed an ascending-price procedure for
multi-unit trading networks with generalized full-substitutes
valuations and piecewise-linear transfers.
It computes the minimum equilibrium price vector
in finitely many iterations, with each direction
computable in strongly polynomial time under our oracle
and arithmetic models.
At these prices, an equilibrium allocation maximizing
total trade volume can be computed in strongly
polynomial time.
The method combines a twist of sale coordinates
with submodular flow to compute locally stable directions.
We also recover minimum good prices in the
single-seller model of \citet{OS2026} by projection while
preserving realized buyer payments, and extend double-track
adjustment~\citep{SY2009,SY2015}
to bidder- and good-specific transfers.

Future work could explore more general transfer functions,
in addition to seeking polynomial iteration bounds.
Extending beyond piecewise-linear transfers would require
stability and termination arguments that do not rely
on locally constant marginal rates.
Another direction is to identify economically natural
valuation representations enabling efficient implementation
of our~oracles.

\section*{Acknowledgements}
	Taihei Oki was supported by JST FOREST Grant Number JPMJFR232L, JST BOOST Grant Number JPMJBY25A6, and JSPS KAKENHI Grant Number JP22K17853.
    Ryosuke Sato was supported by JST ERATO Grant Number JPMJER2301 and JSPS KAKENHI Grant Number JP26K21174.

We used OpenAI's GPT and Anthropic's Claude to assist with developing research ideas, constructing and checking proofs, and drafting and revising the manuscript.
The authors take full responsibility for the content of this paper, including all mathematical claims and proofs.

\bibliographystyle{plainnat}
\bibliography{competitive}

\appendix

\section{Omitted Proofs in Section 3}
\label{proof_sec3}

\subsection{Lattice Properties}
\label{app:equilibrium-lattice-schlegel}

We establish the lattice properties for trading networks
and the transformed single-seller market of
Section~\ref{sec:sy} by reducing both to unit-trade markets.
This expansion is used only to establish the lattice
property; our algorithm works directly with the original
capacities.

\begin{proof}[Proof of Corollary~\ref{cor:equilibrium-lattice}]
Replace each edge $e$ by $c_e$ unit trades with the same
endpoints and transfer functions.
Extend the transfers continuously and strictly increasingly
to negative prices.

Represent each agent's choice by purchased units and
retained sale units.
For a bundle $z_i$, write $y_i=\tau_i(z_i)$.
On a sale edge $e$, the retained quantity is
$c_e-z_e=c_e+y_{ie}$.
The valuation on these nonnegative counts is a translation
of $\tilde v_i$ and remains \Mn-concave.
By \citet[Theorem~4.1]{ST2015}, it satisfies strong
substitutes: treating each unit as a distinct item yields
a gross-substitutes, and hence \Mn-concave, set valuation.
Utility equals this valuation minus the effective prices
of the selected units, plus a term depending only on prices.
The transfers preserve coordinatewise order and equality, and the
number of selected units equals net purchases plus a constant.
The gross-substitutes and aggregate-demand properties
of \Mn-concave set valuations
\citep[Theorems~4.1(ii) and~4.7(i)]{ST2015}
imply full substitutability and the laws of aggregate
demand and supply for the expanded trade demand.
These properties hold even when copies of the same trade
have different prices.

Let $\mathcal E$ and $\widehat{\mathcal E}$ denote the
equilibrium price sets of the original and expanded markets,
respectively.
Let $L$ assign each edge price to all its unit copies.
At prices $L(p)$, choosing copies to match each trade
quantity gives an expanded equilibrium; counting the
executed copies recovers an original equilibrium.
For $p\in\R_+^E$,
$p\in\mathcal E$ if and only if
$L(p)\in\widehat{\mathcal E}$.
By \citet[Theorem~1(i)]{S2022}, $\widehat{\mathcal E}$
is a lattice under coordinatewise minimum and maximum.
Since $L$ preserves these operations,
$\mathcal E$ has the same lattice property.
\end{proof}

For the setting of Section~\ref{sec:sy}, apply the same
unit-trade reduction to the transformed prices $\pi$,
valuations $\tilde\phi_i$, and transfers $Q_{ij}$.
To enforce fixed supply, introduce one auxiliary seller
per unit on $S_1$ and one auxiliary buyer per unit on $S_2$.
Each must execute exactly one trade with a bidder,
with zero valuation on feasible bundles and identity transfers.
These auxiliary agents satisfy full substitutability and
the laws of aggregate demand and supply.
After extending $Q_{ij}$ continuously and strictly increasingly
to real prices, the preceding argument establishes
these properties for the bidders' expanded~demands.

At prices common to all unit trades of each good,
the auxiliary agents are indifferent between bidders.
The same correspondence between equilibria therefore holds,
with the auxiliary constraints enforcing $\sum_i y_i=\sigma$.
Restricting to common prices and to
$\R_+^{S_1}\times\prod_{j\in S_2}[0,\bar p_j]$
preserves the lattice property.
Since all original equilibrium prices lie within these bounds,
transforming back gives the lattice property under $\le_g$.
Nonemptiness follows from Theorem~\ref{thm:sy-convergence}.

\subsection{Proof of Lemma~\ref{price_characterization}}
\begin{proof}
Let $T(p)\coloneqq\sum_{i\in N}T_i(p)$ be the Minkowski sum
of the twisted demand sets.
There exist bundles $y_i\in T_i(p)$ satisfying
$\sum_{i\in N}y_i=\sigma$ if and only if $\sigma\in T(p)$.
Since each $T_i(p)$ is \Mn-convex,
so is $T(p)$~\citep[Theorem~4.23]{M2003}.
Moreover, its convex hull $\mathrm{conv}(T(p))$ is an integral polyhedron
whose integer points are exactly the elements of $T(p)$
\citep[Section 4.7]{M2003}.
Since $\sigma$ is integral,
$\sigma \in T(p)$ is equivalent to $\sigma \in \mathrm{conv}(T(p))$. 
The convex hull is explicitly given~by 
$\mathrm{conv}(T(p))
=\Set{y\in\R^E}{\mu(X;p)\leq y(X)\leq \rhoh(X;p) \text{ for all }X\subseteq E}$.
Therefore, $\sigma\in T(p)$ is equivalent to
$\mu(X;p)\leq \sigma(X)\leq \rhoh(X;p)$
for all $X\subseteq E$.
These two inequalities exactly state that no set is over-demanded or
under-demanded at $p$, respectively.
\end{proof}

\subsection{Proof of Lemma~\ref{subset}}
\label{proof-decomposition}
\begin{proof}
Fix an agent $i\in N$.
We first show that $T_i(p+\varepsilon d)\subseteq T_i(p)$
for all sufficiently small $\varepsilon>0$.
Choose $y^*\in T_i(p)$.
For every
$y\in\operatorname{dom}\tilde v_i\setminus T_i(p)$,
we have $\tilde u_i(y^*;p)>\tilde u_i(y;p)$.
By continuity of the transfers and finiteness of
$\operatorname{dom}\tilde v_i$,
there exists $\delta_i>0$ such that
$\tilde u_i(y^*;p+\varepsilon d)>
\tilde u_i(y;p+\varepsilon d)$ 
for every such $y$ and every $\varepsilon\in(0,\delta_i)$.
Thus, no bundle outside $T_i(p)$ can be demanded at
$p+\varepsilon d$, proving the first inclusion.

Next, fix $\varepsilon\in(0,\delta_i)$ and let
$y\in\Tc_i(p+\varepsilon d)$. 
This implies $y\in T_i(p)$.
Suppose to the contrary that $y\notin\Tc_i(p)$.
Then, there exists $z\in T_i(p)$ with
$z\leq y$ and $z\neq y$.
Since $z$ and $y$ yield the same utility at $p$,
\begin{align*}
\tilde u_i(z;p+\varepsilon d)
-\tilde u_i(y;p+\varepsilon d)
=
\sum_{e\in\delta(i)}
\bigl(q_{ie}(p_e+\varepsilon d_e)-q_{ie}(p_e)\bigr)
(y_e-z_e)
\geq0,
\end{align*}
where the inequality follows from $z\leq y$ and monotonicity of the transfers.
Since $y$ is demanded at $p+\varepsilon d$,
this inequality implies that $z$ is also demanded there,
contradicting the componentwise minimality of $y$.
Hence, $\Tc_i(p+\varepsilon d)\subseteq\Tc_i(p)$. 
Taking $\delta\coloneqq\min_{i\in N}\delta_i>0$
establishes both inclusions simultaneously for all agents.
\end{proof}

To obtain Lemma~\ref{lem:perturbation_decomposition},
we apply the proof of
\citet[Proposition~3.8(ii) in the full version]{OS2026}
using the demand inclusions established above.
The remaining argument uses feasible exchanges, comparisons
of transfer increments, and submodularity of the rank functions.
These steps do not require nonnegative demand vectors,
and the argument therefore applies in twisted coordinates.

\section{Omitted Proofs in Section 4}
\label{proof_sec4}
\subsection{Proof of Minimality in Theorem~\ref{thm:correctness_algo1}}
\label{minimality}
We first show that, whenever Algorithm~\ref{algo1} terminates,
its output is no larger than any equilibrium price vector.
The proof of finite termination is deferred to
Appendix~\ref{Convergence}.
The following lemma describes how $\mu_i$ and $\rhoc_i$
change under componentwise price increases.
Its proof applies directly in twisted~coordinates.
\begin{lemma}[\citet{OS2026}, Proposition~3.6~(ii) in the full version]
\label{mu_rho_relation}
Let $p\leq p'$ and let $Y\subseteq\Set{e\in E}{p_e=p'_e}$.
For every $i\in N$, we have
$\mu_i(Y;p)\leq\mu_i(Y;p')$ and
$\mu_i(E\setminus Y;p)\geq\mu_i(E\setminus Y;p')$.
If the second inequality holds with equality, then
$\rhoc_i(E;p)=\rhoc_i(E;p')$ and
$\rhoc_i(Y;p)=\rhoc_i(Y;p')$.
\end{lemma}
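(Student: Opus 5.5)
The plan is to reduce everything to comparative statics of a single \Mn-concave demand correspondence under \emph{linear} prices, and then read off the rank inequalities. Fix $i$ and set $P_e\coloneqq q_{ie}(p_e)$ and $P'_e\coloneqq q_{ie}(p'_e)$. By Assumption~\ref{ass:transfers}, $P\le P'$ with $P_e=P'_e$ for $e\in Y$. By~\eqref{eq:twisted-utility}, $T_i(p)=\argmax_y\{\tilde v_i(y)-\angle{P,y}\}$ and $T_i(p')$ is the analogous set for $P'$, so only the linear price vectors matter. For an \Mn-concave $\tilde v_i$ I will invoke the standard gross-substitutes and aggregate-demand property (e.g.\ \citet[Theorems~4.1 and~4.7]{ST2015}, or Murota's characterization of \Mn-concave demand).
\begin{itemize}
\item[(GS+LAD)] For every $x\in T_i(p)$ there is $x'\in T_i(p')$ with $x'_e\ge x_e$ for $e\in Y$ and $x'(E)\le x(E)$.
\item[(GS+LAD$'$)] Conversely, for every $x'\in T_i(p')$ there is $x\in T_i(p)$ with $x_Y\le x'_Y$ and $x(E)\ge x'(E)$.
\end{itemize}
None of this requires nonnegativity of bundles, which is why the proof transfers to twisted coordinates.

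\emph{First two inequalities.} For $\mu_i(Y;p)\le\mu_i(Y;p')$, take $x'\in T_i(p')$ attaining $\mu_i(Y;p')$. Applying (GS+LAD$'$) gives $x\in T_i(p)$, and then $\mu_i(Y;p)\le x(Y)\le x'(Y)$. For the other inequality, take $x\in\Tc_i(p)$ attaining $\mu_i(E\setminus Y;p)$; the minimum is attained on $\Tc_i(p)$ by \citep[Fact~A.3]{OS2026}. Applying (GS+LAD) gives $x'\in T_i(p')$ with
\[
x'(E\setminus Y)-x(E\setminus Y)=\bigl(x'(E)-x(E)\bigr)-\bigl(x'(Y)-x(Y)\bigr)\le0,
\]
hence $\mu_i(E\setminus Y;p')\le x(E\setminus Y)=\mu_i(E\setminus Y;p)$.

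\emph{Equality case.} Suppose $\mu_i(E\setminus Y;p')=\mu_i(E\setminus Y;p)$. Then the displayed chain is tight, which forces $x'(E)=x(E)$ and $x'(Y)=x(Y)$. Together with $x'_Y\ge x_Y$, this gives $x'_Y=x_Y$ and $x'(E\setminus Y)=\mu_i(E\setminus Y;p')$. I would choose $x$ more carefully at the start: by $\mu_i(E\setminus Y;p)=\rhoc_i(E;p)-\rhoc_i(Y;p)$, any $x\in\Tc_i(p)$ minimizing $x(E\setminus Y)$ has $x(E)=\rhoc_i(E;p)$ and $x(Y)=\rhoc_i(Y;p)$. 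Then $x'(E)=\rhoc_i(E;p)$ and $x'(Y)=\rhoc_i(Y;p)$, and $x'$ minimizes the $E\setminus Y$ sum over $T_i(p')$. The remaining task is to show $x'\in\Tc_i(p')$, i.e.\ $x'(E)=\rhoc_i(E;p')$. Once that holds, we also get $\rhoc_i(Y;p')=\rhoc_i(E;p')-\mu_i(E\setminus Y;p')=\rhoc_i(E;p)-\mu_i(E\setminus Y;p)=\rhoc_i(Y;p)$ by the same rank identity at $p'$.

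\emph{Main obstacle.} The hard step is proving $x'\in\Tc_i(p')$. Suppose instead that some $x''\in\Tc_i(p')$ satisfies $x''\le x'$ and $x''\ne x'$. Since $x''(E\setminus Y)\ge\mu_i(E\setminus Y;p')=x'(E\setminus Y)$ and $x''\le x'$, the two vectors agree off $Y$, and $x''_Y\lneq x'_Y=x_Y$. I would try to derive a contradiction by applying (GS+LAD$'$) to $x''$. This yields $w\in T_i(p)$ with $w_Y\le x''_Y\lneq x_Y$ and $w(E\setminus Y)\ge x(E\setminus Y)$. Since utilities at $p$ and $p'$ differ only through the $E\setminus Y$ coordinates, the natural next move is an exchange argument: use the M-convexity of $\Tc_i(p)$ (Fact~\ref{prop:twisted_m_convex}) between $w$ and $x$, or simply $x''(E)<\rhoc_i(E;p)$ together with the fact that $x''$ and $x'$ agree on the coordinates where prices moved. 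The aim is to show that $x''$ (suitably adjusted on $Y$) is already demanded at $p$ with total sum below $\rhoc_i(E;p)$, contradicting minimality. Concretely, I expect to show $\tilde u_i(x'';p)-\tilde u_i(x';p)=\tilde u_i(x'';p')-\tilde u_i(x';p')=0$, because the two bundles differ only on $Y$, where prices are unchanged. That would place $x''\in T_i(p)$ with $x''(E)<\rhoc_i(E;p)$, which is impossible.
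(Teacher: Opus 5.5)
Your two inequalities are correct. (GS+LAD) holds for \Mn-concave functions on any integer box. Your (GS+LAD$'$) follows from it by applying it to $y\mapsto\tilde v_i(-y)$ with negated prices, so nonnegativity is indeed never needed. The paper gives no proof of its own for this lemma; it imports it from \citet{OS2026} and only remarks that the argument survives the twist.

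\textbf{The gap is in the equality case, at the step you flag.} From $\tilde u_i(x'';p)-\tilde u_i(x';p)=\tilde u_i(x'';p')-\tilde u_i(x';p')=0$ you conclude $x''\in T_i(p)$. That conclusion requires $x'\in T_i(p)$, and nothing in your construction gives it. You only know that $x'$ lies in $T_i(p')$, agrees with $x$ on $Y$, and has the same sum on $E\setminus Y$; its individual coordinates on $E\setminus Y$ can differ from those of $x$. For instance, take an additively valued good $a$ together with unit demand over $\{b,c\}$. Raising only the price of $b$ can move the unique demanded bundle from $(1,1,0)$ to $(1,0,1)$. Then the equality hypothesis holds with $Y=\{a\}$, yet $x'\notin T_i(p)$. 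Your alternative hint does not close the gap either. Applying (GS+LAD$'$) to $x''$ yields some $w\in T_i(p)$ with $w(Y)<\rhoc_i(Y;p)$, which contradicts nothing, since $\rhoc_i(Y;p)$ is a maximum over $\Tc_i(p)$, not a minimum.

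\textbf{The missing idea} is to compare $\rhoc_i(Y;\cdot)$ directly by gross substitutes at perturbed linear prices, rather than certifying that a particular bundle is minimal. With $P_e=q_{ie}(p_e)$, $P'_e=q_{ie}(p'_e)$ and $0<\varepsilon'\ll\varepsilon\ll1$, set $Q\coloneqq P+\varepsilon\ones-\varepsilon'\chi_Y$ and $Q'\coloneqq P'+\varepsilon\ones-\varepsilon'\chi_Y$. By finiteness and integrality, the demand at $Q$ is exactly $\argmax_{y\in\Tc_i(p)}y(Y)$: bundles first stay in $T_i(p)$, then minimize $y(E)$, then maximize $y(Y)$. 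Likewise, the demand at $Q'$ is $\argmax_{y\in\Tc_i(p')}y(Y)$. Since $Q\le Q'$ and the two agree on $Y$, gross substitutes gives $\rhoc_i(Y;p)\le\rhoc_i(Y;p')$. Under the hypothesis, your identity $\mu_i(E\setminus Y;\cdot)=\rhoc_i(E;\cdot)-\rhoc_i(Y;\cdot)$ then yields $\rhoc_i(E;p)\le\rhoc_i(E;p')$. LAD applied to any $x\in\Tc_i(p)$ yields $\rhoc_i(E;p')\le\rhoc_i(E;p)$. Both equalities then follow at once, with no need to show $x'\in\Tc_i(p')$.
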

\begin{proof}[Proof of minimality in Theorem~\ref{thm:correctness_algo1}]
Let $p^*$ be the price vector returned by Algorithm~\ref{algo1},
and let $\bar p$ be any equilibrium price vector.
Suppose, for a contradiction, that $p^*\not\leq\bar p$.
Since the algorithm starts from $\zeros\leq\bar p$,
consider the first update from 
$p^-$ to $p^+=p^-+\lambda d$
such that $p^+\not\leq\bar p$.
Then, $p^-\leq\bar p$.
Let $p$ be the last point on this segment satisfying
$p\leq\bar p$.
Thus, $p$ lies strictly before $p^+$, and
$p_e=\bar p_e$ for some $e\in\supp_+(d)$.

Set $X^*\coloneqq\supp_+(d)$.
By \LSC and the choice of step size,
$X^*$ is the inclusionwise minimal solution
of~\eqref{argmin} at $p$.
Define $X_1\coloneqq\Set{e\in X^*}{p_e=\bar p_e}\neq\emptyset$
and $X_2\coloneqq X^*\setminus X_1\subsetneq X^*$.
With the virtual direction $d'\coloneqq\chi_{X_2}$,
choose $\varepsilon>0$ sufficiently small that
Lemma~\ref{lem:perturbation_decomposition} applies and
$\widehat p\coloneqq p+\varepsilon d'\leq\bar p$. 
Since $\widehat p$ and $\bar p$ coincide on $X_1$,
summing Lemma~\ref{mu_rho_relation} over agents yields
\begin{align*}
\mu(X_1;\bar p)-\sigma(X_1)
&\geq \mu(X_1;\widehat p)-\sigma(X_1)\\
&=\rhoc(E;\widehat p)
  -\rhoc(E\setminus X_1;\widehat p)-\sigma(X_1)\\
&=\rhoc(E\setminus X_2;p)
  -\rhoc(E\setminus X^*;p)-\sigma(X_1)\\
&=\sigma(X_2)+\rhoc(E\setminus X_2;p)-\left(\sigma(X^*)+\rhoc(E\setminus X^*;p)\right)\\
&>0.
\end{align*}
The three equalities follow from~\eqref{mu_rho},
Lemma~\ref{lem:perturbation_decomposition},
and the modularity of $\sigma$, respectively.
The strict inequality holds by the minimality of $X^*$
and $X_2\subsetneq X^*$.
Thus, $X_1$ is over-demanded at $\bar p$, contradicting
Lemma~\ref{price_characterization}.
Hence, $p^*\leq\bar p$ for every equilibrium price vector
$\bar p$.
\end{proof}

\subsection{Proof of Lemma~\ref{any_direction}}
\label{subsec:any-direction-proof}

\begin{proof}
Fix sufficiently small $\varepsilon>0$.
Write every $Y\subseteq E$ as $Y_1\cup Y_2$, where
$Y_1\subseteq X^*$ and $Y_2\subseteq E\setminus X^*$.
By Lemma~\ref{lem:perturbation_decomposition}, we have
\begin{align*}
\sigma(Y)+\rhoc(E\setminus Y;p+\varepsilon d)
=\bigl[\sigma(Y_1)
  +\rhoc(X^*\setminus Y_1;p+\varepsilon d)\bigr]+\bigl[\sigma(Y_2)
  +\rhoc((E\setminus X^*)\setminus Y_2;p)\bigr].
\end{align*}
The second bracket equals the objective in~\eqref{argmin}
at $X^*\cup Y_2$ and price $p$, minus $\sigma(X^*)$.
Since $X^*$ minimizes that objective, this bracket is
minimized at $Y_2=\emptyset$, which is therefore
its inclusionwise minimal minimizer.
Hence, \LSC holds if and only if $X^*$ is the unique
minimizer of the first term, or equivalently,
$\sigma(X^*)<\sigma(Y_1)+\rhoc(X^*\setminus Y_1;p+\varepsilon d)$
for any $Y_1\subsetneq X^*$.
This is equivalent to
$\rhoc(X^*\setminus Y_1;p+\varepsilon d)>\sigma(X^*\setminus Y_1)$.
Since $X^*\setminus Y_1$ ranges over all nonempty subsets
of $X^*$, this is precisely condition~(b).
\end{proof}

\subsection{Proof of Theorem~\ref{convex-dual}}
\label{subsec:convex-dual-proof}
We first identify the componentwise minimal optimal
solution at $\gamma=0$.
We then establish a least optimal solution for the
lexicographic problem and identify its primary component.
\begin{lemma}
\label{gamma_0}
At $\gamma=0$, the vector $\chi_{\scriptscriptstyle X^*}$
is the unique componentwise minimal optimal solution of the problem $(D)$.
\end{lemma}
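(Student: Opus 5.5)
The plan is to identify $\cL$ at $\gamma=0$ with the Lovász extension of the set function minimized in~\eqref{argmin}. Then the claim reduces to the standard fact that minimizers of a Lovász extension over the unit cube are exactly the points all of whose level sets minimize the underlying set function.

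\emph{Step 1: rewrite the dual objective.} At $\gamma=0$ every weight is $w_{ie}=1$, so by~\eqref{eq:dom-L} we have $\dom\cL=[\zeros,\ones]$, a box in $\R^E$. For $z$ in this box, each coefficient $1-z_e$ is nonnegative. Since $P(\rhoc_i)$ is down-closed, the supremum over $P(\rhoc_i)$ is attained on $B(\rhoc_i)$. By Edmonds' greedy algorithm it then equals the Lovász extension $\hat\rho_i^{L}(\ones-z)$ of $\rhoc_i(\cdot;p)$, which is valid because $\rhoc_i(\emptyset;p)=0$. Summing over $i$ and writing the greedy value as an integral over thresholds gives
\[
\cL(z)=\int_0^1 \rhoc\bigl(\Set{e}{1-z_e\ge\theta};p\bigr)\,d\theta+\sigma(z).
\]
Substituting $\theta\mapsto 1-\theta$, the set $\Set{e}{1-z_e\ge \theta}$ becomes $E\setminus Z_\theta$ up to a null set of $\theta$, where $Z_\theta\coloneqq\Set{e}{z_e>\theta}$. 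Also $\sigma(z)=\int_0^1\sigma(Z_\theta)\,d\theta$ for $z\in[\zeros,\ones]$. Hence
\[
\cL(z)=\int_0^1 F(Z_\theta)\,d\theta,\qquad F(X)\coloneqq\sigma(X)+\rhoc(E\setminus X;p).
\]
This identity is the main computation to check carefully. The delicate points are the integral form of the greedy value for nonmonotone $\rhoc$, which holds because the coefficients $\ones-z$ are nonnegative and the measure runs over $[0,1]$, and the boundary cases $z_e\in\{0,1\}$.

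\emph{Step 2: optimal value and optimal set.} From the integral formula, $\cL(z)\ge\min_X F(X)$ for every $z\in\dom\cL$. Equality holds at $z=\chi_{X^*}$, since then $Z_\theta=X^*$ for all $\theta\in[0,1)$. By~\eqref{argmin}, $X^*$ is the inclusionwise minimal minimizer of $F$. So $\chi_{X^*}$ is optimal, and a point $z\in\dom\cL$ is optimal if and only if $Z_\theta\in\argmin F$ for almost every $\theta\in(0,1)$.

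\emph{Step 3: minimality and uniqueness.} The minimizers of $F$ form a lattice, because $F$ is submodular as noted in the footnote defining $X^*(p)$. Hence every minimizer contains $X^*$. Take any optimal $z$. Then $X^*\subseteq Z_\theta$ for almost every $\theta\in(0,1)$, so $z_e>\theta$ for such $\theta$ and all $e\in X^*$, which forces $z_e\ge1$ on $X^*$. Combined with $z\ge\zeros$, this gives $z\ge\chi_{X^*}$. Since $\chi_{X^*}$ is itself optimal and lies below every optimal solution, it is the unique componentwise minimal optimal solution.
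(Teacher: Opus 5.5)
Your proposal is correct and follows essentially the same route as the paper. Both identify $\cL$ on $[0,1]^E$, via greedy optimization, with the Lov\'asz extension of $F(X)=\sigma(X)+\rhoc(E\setminus X;p)$. Both conclude that $z$ is optimal exactly when its level sets minimize $F$, and both use the lattice structure of these minimizers to force $z\ge\chi_{X^*}$. Your threshold integral $\int_0^1F(Z_\theta)\,d\theta$ is the continuous form of the paper's chain decomposition $z=\sum_k\alpha_k\chi_{U_k}$ with $\sum_k\alpha_k=1$. Both versions correctly handle $F(\emptyset)=\rhoc(E;p)$, which need not vanish.
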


\begin{proof}
Define $f(X)\coloneqq\sigma(X)+\rhoc(E\setminus X;p)$ for $X\subseteq E$. At $\gamma=0$, Section~\ref{subsec:lexico-dual} gives $\dom\cL=[0,1]^E$. Greedy optimization over each $P(\rhoc_i)$ shows that $\cL$ is the Lov\'asz extension of $f$ on this box~\citep[Section~6.3]{F2005}. 
For any $z\in[0,1]^E$, take a chain decomposition $z=\sum_k\alpha_k\chi_{U_k}$ with $\alpha_k>0$ and $\sum_k\alpha_k=1$, allowing $U_k=\emptyset$. Then, $\cL(z)=\sum_k\alpha_k f(U_k)$. Thus, $z$ is optimal if and only if every $U_k$ minimizes $f$. Since $X^*$ is the unique inclusionwise minimal solution of~\eqref{argmin}, every such $U_k$ contains $X^*$. Hence, every optimal solution $z$ satisfies
$z\geq\chi_{\scriptscriptstyle X^*}$ componentwise. Since $\chi_{\scriptscriptstyle X^*}$ is itself optimal, it is the unique componentwise minimal optimal solution.
\end{proof}

\begin{proof}[Proof of Theorem~\ref{convex-dual}]
By Lemma~\ref{gamma_0} and Theorem~\ref{lexico_characterization},
$(D)$ has an optimal solution
$\widehat z=\chi_{X^*}+\gamma\widehat t$.
Let $\mathcal Z$ denote its optimal solution set.
By conjugacy~\citep[Theorem~8.4]{M2003} and the closure
properties of $L^\natural$-convexity,
$\mathcal Z$ forms a lattice under componentwise
minimum and maximum.
On $\dom\cL$, each supremum is a maximum over the finitely
many vertices of $B(\rhoc_i)$.
Thus, $\mathcal Z$ has a finite linear description
with real coefficients and lexicographic right-hand sides.
Since $\mathcal Z$ is nonempty and $z\geq\zeros$ on
$\mathcal Z$, each coordinate attains its minimum
by linear programming over lexicographic pairs.
The meet of these coordinate minimizers is the unique
least optimal solution $z^*$.

Write $z^*=z^0+\gamma t$.
By Theorem~\ref{lexico_characterization},
$z^0$ is optimal at $\gamma=0$.
Lemma~\ref{gamma_0} therefore gives $z^0\geq\chi_{X^*}$.
Since $z^*\leq\widehat z$, we also have
$z^0\leq\chi_{X^*}$.
Hence, $z^*=\chi_{X^*}+\gamma t$.
\end{proof}

\subsection{Proof of Lemma~\ref{lem:boundary-oracles}}
\label{app:boundary-oracles}

We construct value oracles for the boundary functions
used in direction computation and allocation recovery.
For each formulation, we provide an integral base point
and an exchange capacity oracle, which together enable
rank evaluation.
The following lemma includes Lemma~\ref{lem:boundary-oracles}
as a special case.
\begin{lemma}
    Let $\rho^+$ be the function defined in Section~\ref{subsec:lexico-dual} or~\ref{Allocation}.
    Given the demand and exchange capacity oracles for all agents, the value $\rho^+(S)$ for $S \subseteq \mathcal{A}$ can be computed in strongly polynomial~time.
\end{lemma}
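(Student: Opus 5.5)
The plan is to reduce rank evaluation to greedy optimization over each agent's base polyhedron. We first produce an integral base point, and then repeatedly apply exchange capacity queries. Both $\rho^+$ in Section~\ref{subsec:lexico-dual} and $\rho^+$ in Section~\ref{Allocation} are direct sums over agents: $\rho^+=\bigoplus_i\rhoc_i$ and $\rho^+=\bigoplus_i\rho_i$. Here $\rho_i$ is the rank function of the slice of $\operatorname{conv}(T_i(p^*))$, apart from the coupling constraint $y(\mathcal A)=\sigma(E)$ in~\eqref{eq:Pplus}, which we treat separately below. Hence $\rho^+(S)=\sum_i\rho_i(S\cap\delta(i))$, and it suffices to evaluate each agent's rank $\max\{y(S)\}$ over the relevant M-convex set.

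\textbf{Base point for $\Tc_i(p)$.} Call the demand oracle to obtain some $y_i\in T_i(p)$. Then, for each $e\in\delta(i)$, query the exchange capacity with $f=0$, i.e., the maximum $\alpha$ with $y_i-\alpha\chi_e\in T_i(p)$, and update $y_i\leftarrow y_i-\alpha\chi_e$. After one pass over the coordinates, $y_i$ cannot be decreased in any single coordinate. By \Mn-convexity of $T_i(p)$, such a point is componentwise minimal. Indeed, if $x\le y_i$ with $x\neq y_i$ and $x\in T_i(p)$, the exchange axiom applied at $e\in\supp_+(y_i-x)$ with $\supp_-(y_i-x)=\emptyset$ forces $y_i-\chi_e\in T_i(p)$, a contradiction. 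One pass suffices because decreasing a later coordinate cannot re-enable an earlier one; this again follows from the exchange axiom. So we obtain $y_i\in\Tc_i(p)$.

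\textbf{Greedy maximization of $y(S)$.} Within $\Tc_i(p)$, an M-convex set, the exchange capacity oracle for $T_i(p)$ with $e,f\in\delta(i)$ distinct preserves the coordinate sum. It also keeps the point inside $\Tc_i(p)=\argmin_{y\in T_i(p)}y(E)$, since the sum is unchanged. Maximizing $y(S)$ over a bounded M-convex set is done by augmenting: repeatedly pick $f\in S$, $e\notin S$ with positive capacity $\alpha$, and move $\alpha$ units from $e$ to $f$. M-convexity guarantees that a point with no such improving pair is optimal. To make this strongly polynomial rather than pseudo-polynomial, process pairs in a fixed order, saturating each pair once. An exchange along $(e,f)$ with $e\notin S$, $f\in S$ at full capacity cannot reopen a previously saturated pair from the same side. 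This is the standard argument that base polyhedra admit maximization of $y(S)$ in $O(|\delta(i)|^2)$ exchange capacity calls, via the greedy algorithm with ordering $S$ first and then $\delta(i)\setminus S$. I expect this no-reopening claim to be the main obstacle. If it fails, I would instead use the greedy algorithm for base polyhedra directly, computing $\rho_i(S_k)$ for the nested chain through successive exchange-capacity saturation. This is the standard reduction of rank evaluation to exchange capacities for submodular systems \citep{F2005}.

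\textbf{Allocation polytope.} For $\operatorname{conv}(T_i(p^*))$, a generalized polymatroid, the same procedure works with $f=0$ or $e=0$ permitted, which allows increasing or decreasing single coordinates. For the fixed-sum slice, I would first compute, for each agent, the minimum and maximum sums $\mu_i(E)$ and $\rhoh_i(E)$. These follow from the same saturation moves: push all coordinates down to reach $\Tc_i$, and push all up for the maximum. Next, distribute the required total $\sigma(E)$ across agents. Finally, for $S\subseteq\mathcal A$, evaluate the slice rank via the standard formula for a base polyhedron of a g-polymatroid at fixed sum $k$, namely $\min\{\rho(S),\,k-\mu(\mathcal A\setminus S)\}$, where $\rho$ and $\mu$ of the direct sum decompose over agents. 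Each term is computed by the greedy exchange procedure above, so all values are obtained in strongly polynomial time.
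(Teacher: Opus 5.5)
\textbf{Comparison with the paper.} For the function of Section~\ref{subsec:lexico-dual} you follow the paper. A demand bundle pushed down once per coordinate lies in $\Tc_i(p)$. Within-agent exchanges have the same capacities in $\Tc_i(p)$ as in $T_i(p)$. The rank then follows from exchange-based linear optimization, which the paper simply cites.

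For the allocation polytope~\eqref{eq:Pplus} you take a different route. The paper builds an integral point of the slice by moving $y(\cA)$ toward $\sigma(E)$ one coordinate at a time. It also builds an exchange-capacity oracle for the slice itself: the agent's own oracle within an agent, and the minimum of the $(y_i,e,0)$ and $(y_j,0,f)$ capacities across agents. It then reuses the same black box. You never work inside the slice. You use the slicing formula $\rho^+(S)=\min\{\rho(S),\sigma(E)-\mu(\cA\setminus S)\}$ for the unsliced product, whose $\rho=\bigoplus_i\rhoh_i$ and $\mu=\bigoplus_i\mu_i$ do decompose. Everything then reduces to per-agent upper and lower ranks of the \Mn-convex sets $T_i(p^*)$. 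This decouples the agents and makes both the cross-agent capacity rule and your ``distribute $\sigma(E)$'' step unnecessary. The paper's route instead gives one interface (base point plus exchange oracle) for both formulations.

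You should state that the formula needs $\mu(E;p^*)\le\sigma(E)\le\rhoh(E;p^*)$, which Lemma~\ref{price_characterization} supplies at the equilibrium $p^*$. You should also drop the intermediate claim $\rho^+(S)=\sum_i\rho_i(S\cap\delta(i))$, which is false for the slice.

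\textbf{The ordering step.} Your no-reopening step, as phrased (``a fixed order''), fails for arbitrary interleavings. Take the graphic matroid on vertices $1,2,3$ with parallel edges $e'=f=\{1,2\}$, $e=\{2,3\}$, $f'=\{1,3\}$, and let $S=\{f,f'\}$. Start from the base $\{e,e'\}$ and saturate $(e,f),(e',f'),(e,f'),(e',f)$ in this order. The capacities are $0,1,0,0$, and the process stops at $\{e,f'\}$ with $y(S)=1<2=\rho(S)$.

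The fix is nested loops: for each $f\in S$ in turn, saturate $(e,f)$ for all $e\notin S$ consecutively. Saturating $(e,f)$ creates a tight set $X$ ($y(X)=\rho(X)$) with $f\in X$ and $e\notin X$. Later exchanges with the same $f$ cannot destroy it: they change $y(X)$ by $0$, or they would raise it and so have capacity $0$. After finishing $f$, the intersection $T_f$ of these sets is tight with $f\in T_f\subseteq S$. Every later exchange moves mass from outside $S$ into $S$, so $y(T_f)$ never decreases and $T_f$ stays tight. At the end $S=\bigcup_{f\in S}T_f$ is tight, so $y(S)=\rho(S)$.

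This uses $|S|\,|\delta(i)\setminus S|$ oracle calls. Treating $0$ as an extra element outside $S$ covers the \Mn-convex sets $T_i(p^*)$, for both upper and lower ranks.
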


\begin{proof}
Let $B\coloneqq B(\rho^+)\cap\Z^{\cA}$.
A point of $B$ and an exchange capacity oracle suffice to evaluate $\rho^+(S)=\max\Set{y(S)}{y\in B}$ in strongly polynomial time by linear optimization~\citep{OShi2026}.

In \Cref{subsec:lexico-dual}, $B=\prod_{i\in N}\Tc_i(p)$.
Starting from bundles returned by the demand oracles, maximally decrease each coordinate once using the exchange capacity oracle.
By the \Mn-convexity of $T_i(p)$, this yields a point of $B$.
Exchanges within each agent preserve the minimum component sum and have the same capacities as in $T_i(p)$.
Exchanges between agents have zero capacity.

In \Cref{Allocation}, $B=\mathcal{P}^+\cap\Z^{\cA}$, where $\mathcal{P}^+$ is~\eqref{eq:Pplus}.
Starting from bundles returned by the demand oracles, adjust $y(\cA)$ to $\sigma(E)$ one coordinate at a time, taking the largest feasible step without passing the target, as in~\citet[Appendix~B.3 in the full version]{OS2026}.
This takes at most $\abs{\cA}$ moves and succeeds since $\mu(E;p^*)\leq\sigma(E)\leq\rhoh(E;p^*)$ by Lemma~\ref{price_characterization}.
At $y\in B$, the exchange capacity from $(i,e)$ to $(j,f)$ is the oracle output for $(y_i,e,f)$ if $i=j$, and the minimum of the outputs for $(y_i,e,0)$ and $(y_j,0,f)$ otherwise.
\end{proof}

\subsection{Proof of Finite Termination in Theorem~\ref{thm:correctness_algo1}}
\label{Convergence}

We first show that every nonterminal update has
positive finite length.
We then bound the number of changes in $X^*$ and show
that only finitely many updates can occur while $X^*$
remains~fixed.

\begin{lemma}
\label{lem:positive-finite-step}
At every nonterminal iteration of Algorithm~\ref{algo1},
including its transformed implementation in Section~\ref{sec:sy},
the step size satisfies
$0<\lambda_{\mathrm{evt}}(p,d^*)<+\infty$.
\end{lemma}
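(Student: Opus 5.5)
Both bounds are treated separately, each handled by one of the two ingredients of $\lambda_{\mathrm{evt}}$ in~\eqref{event}.

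\emph{Positivity.} The breakpoint term is immediate. Each $q_{ie}$ is piecewise linear with finitely many pieces, and $q'_{ie}$ is the right derivative. Hence $q'_{ie}(p_e+\lambda d_e)=q'_{ie}(p_e)$ for all $\lambda$ smaller than the distance from $p_e$ to the next breakpoint strictly to its right, divided by $d_e$. Taking the minimum over the finitely many pairs $(i,e)\in\cA$ with $d_e>0$ gives $\lambda_{\mathrm{tr}}(p,d^*)>0$.

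For the demand term, fix $i$. Lemma~\ref{subset} gives $T_i(p+\varepsilon d^*)\subseteq T_i(p)$ for small $\varepsilon>0$. On a common linear piece of all transfers, the utility difference between two bundles of $T_i(p)$ is linear in $\varepsilon$. Comparing finitely many bundles, the set of maximizers $D_i(p+\varepsilon d^*)$ is therefore constant on some interval $(0,\delta_i)$: it is the face of $D_i(p)$ minimizing the linear payment increase. So $D_i(p+\lambda d^*)=D_i(p+td^*)$ for all $0<t<\lambda<\delta_i$, and hence $\lambda_{\mathrm{dem},i}(p,d^*)\ge\delta_i>0$. The minimum over finitely many agents remains positive.

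\emph{Finiteness.} Since the iteration is nonterminal, $X^*=\supp_+(d^*)\neq\emptyset$. Pick $e\in X^*$ and note $d^*_e=\exp(t_e)>0$. The step is finite provided some demand set eventually changes along the ray. Suppose instead that $D_i(p+\lambda d^*)$ is constant for all $\lambda>0$ and all $i$. By \LSC, $X^*$ stays the minimal minimizer of~\eqref{argmin}, so by Lemma~\ref{any_direction} $e$ remains over-demanded-relevant along the ray. The plan is to contradict this using unboundedness of the transfers.

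In the trading network, $q_{ie}$ is strictly increasing and piecewise linear with finitely many pieces, so its last piece has positive slope and $q_{ie}(p_e+\lambda d^*_e)\to\infty$. Once $q_{be}$ exceeds $\max v_b-\min v_b$ over $\dom v_b$ for the buyer $b$ of $e$, every demanded bundle of $b$ has $z_e=0$. Symmetrically, once $q_{se}$ is large enough, the seller $s$ wants $z_e=c_e$ whenever this is feasible. In particular, $\Tc_b$ has $y_{be}=0$ and $\Tc_s$ has $y_{se}\le 0$. Then removing $e$ from $X$ does not increase $\sigma(X)+\rhoc(E\setminus X)$: the change is $\rhoc_b(\cdot)$ in coordinate $e$ (zero) plus a nonpositive contribution from $s$. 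This is the same boundary argument as in the proof of Theorem~\ref{thm:sy-convergence}, and it shows $e\notin X^*$ at large $\lambda$. This contradicts constancy of $X^*$, so some demand set must change and $\lambda_{\mathrm{dem}}<\infty$.

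\emph{Main obstacle.} The finiteness part is the delicate step. Showing that the seller coordinate term has the right sign requires handling the seller's possibly infeasible sales through $-\infty$ valuations. If this fails, I would fall back on $\lambda_{\mathrm{tr}}$ only when breakpoints remain, which is insufficient. So the contradiction via the minimal minimizer, using the buyer's coordinate becoming zero together with submodularity, is the route to take. For Section~\ref{sec:sy}, I would instead cite the bounded price path already established in that proof.
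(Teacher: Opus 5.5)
Your positivity argument is fine and matches the paper's: local linearity of the transfers, Lemma~\ref{subset}, and finiteness of the bundle set. The finiteness part for trading networks, however, has a genuine gap.

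Your key claim is that once $q_{be}$ exceeds $\max v_b-\min v_b$, every demanded bundle of the buyer $b$ of an arbitrary $e\in X^*$ has $z_e=0$. This is false whenever $b$ also sells on trades in $X^*$. The threshold controls the valuation difference but not the sale revenues $\sum_{f\in\delta_+(b)}q_{bf}(p_f)z_f$. These rise along the same ray and can exactly offset the rising purchase cost.

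The paper's Example~\ref{ex:intermediary} is a counterexample. Take $d^*=(1/2,1)$ and $e=a$. Here $\max v_I-\min v_I=0$. Yet at every $p+td^*=(t/2,t)$ the intermediary's resale profit $p_b-2p_a$ is zero. So $\Tc_I(p+td^*)=\{(0,0),(1,-1)\}$ for all $t>0$, which always contains a bundle with $y_{Ia}=1$. No contradiction with $a\in X^*$ can come from the buyer of $a$. In that example the finite step is forced by $S$ (or $B$), an agent your argument never examines. The seller side of your argument is not the problem: $y_{se}\le 0$ holds trivially in twisted coordinates. But an edge-by-edge choice of $e$ cannot succeed in general.

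The missing idea is to find the right agent through aggregate over-demand rather than through a fixed edge.

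\begin{itemize}
\item Suppose $\lambda_{\mathrm{evt}}=+\infty$. Fix bundles $y_i$ demanded at every $p+td^*$ with $t>0$.
\item Since $X=X^*\neq\emptyset$ is the minimal maximizer of $\mu-\sigma$, and $\emptyset$ has value $0$, we get $\sum_i y_i(X)\ge\mu(X;p+td^*)>\sigma(X)=\sum_i b_i(X)$ with $b_i=\zeros$. So some agent has $y_i(X)>b_i(X)$.
\item Then $y_i$ does not minimize the coordinate sum on $X$ over the \Mn-convex set $\dom\tilde v_i$. Local optimality yields a feasible move $y_i-\chi_e+\chi_f$ with $e\in X$ and $f\in(E\setminus X)\cup\{0\}$.
\item Because $d^*_f=0$, the utility gain of this move grows like $q_{ie}(p_e+td^*_e)-q_{ie}(p_e)\to+\infty$. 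This contradicts optimality of $y_i$.
\end{itemize}

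This exchange shifts one unit inside $X$ and compensates only outside $X$, where prices are frozen. It is therefore immune to the offsetting that defeats your argument. Taking $b_i$ as the twist of an allocation giving all supply to one bidder, the same argument covers Section~\ref{sec:sy} directly. Your alternative there, citing the bounded price path from the proof of Theorem~\ref{thm:sy-convergence}, is acceptable.
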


\begin{proof}
We treat trading networks and the transformed market
with substitutes and complements using common notation.
For the latter, use the transfers and their linear
extensions in Section~\ref{sec:sy}.

Piecewise linearity of the transfers and finiteness of
the feasible sets imply that, for some $\delta>0$,
demands are constant for $0<t<\delta$ and transfer
slopes are unchanged for $0\leq t<\delta$
along $p+td^*$.
Hence, $\lambda_{\mathrm{evt}}(p,d^*)\geq\delta>0$.

In either model, choose $b_i\in\dom\tilde v_i$ with
$\sum_{i\in N}b_i=\sigma$:
take $b_i=\zeros$ in trading networks, or twist an allocation
assigning all supply to one bidder in the single-seller market.
Suppose to the contrary that
$\lambda_{\mathrm{evt}}(p,d^*)=+\infty$.
Set $X\coloneqq\supp_+(d^*)\neq\emptyset$.
Since demand sets are unchanged, \LSC gives
$X=X^*(p+td^*)$ for every $t>0$.
For each agent $i$, choose a bundle $y_i$ demanded
at every price $p+td^*$ with $t>0$. Then, 
\[
\sum_{i\in N}y_i(X)
\geq\mu(X;p+td^*)
>\sigma(X)
=\sum_{i\in N}b_i(X).
\]
Thus, $y_i(X)>b_i(X)$ for some agent $i$.

For this agent, $y_i$ does not minimize the sum of
coordinates in $X$ over the \Mn-convex set $\dom\tilde v_i$.
The local optimality criterion for \Mn-convex functions \citep[Theorem~4.4(2)]{M2022}
therefore yields a feasible unit move that decreases this sum.
Such a move must have the form
$z\coloneqq y_i-\chi_e+\chi_f\in\dom\tilde v_i$
with $e\in X$ and $f\in(E\setminus X)\cup\{0\}$.
The transfer slopes remain unchanged along $p+td^*$
for all $t\geq0$.
Since $d_f^*=0$ when $f\neq0$, we have
$\tilde u_i(z;p+td^*)-\tilde u_i(y_i;p+td^*) = \tilde u_i(z;p)-\tilde u_i(y_i;p)
+tq'_{ie}(p_e)d_e^*$. 
As $q'_{ie}(p_e)d_e^*>0$, this tends to $+\infty$,
contradicting the optimality of $y_i$ for all $t>0$.
Hence, $\lambda_{\mathrm{evt}}(p,d^*)<+\infty$.
\end{proof}

Let $z^*(p)$ be the unique minimal optimal solution of $(D)$
used to construct $d^*$, and let $V(p)$ be the common
optimal value of $(P)$ and $(D)$.
Demand sets may shrink immediately after an update begins.
The next lemma shows that $V$ and $z^*$ nevertheless
remain unchanged before the next iterate.
\begin{lemma}
\label{lem:update-stability}
Consider an update from $p$ to $p'=p+\lambda d^*$
in Algorithm~\ref{algo1}.
The optimal value and the minimal optimal dual solution
remain unchanged before reaching $p'$:
\[
V(p+sd^*)=V(p),
\qquad
z^*(p+sd^*)=z^*(p)
\quad(0\leq s<\lambda).
\]
\end{lemma}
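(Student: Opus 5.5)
The plan is to compare the dual problems $(D)$ at $p$ and at $p_s \coloneqq p+sd^*$ for $0<s<\lambda$; the case $s=0$ is trivial. Two things change between them. First, the base polyhedra $B(\rhoc_i)$ change. Second, the secondary weights $w_{ie}$ change, but only on coordinates $e\in X^*$; no transfer slope changes before the next event, so $q'_{ie}(p_e+sd^*_e)=q'_{ie}(p_e)$. Since demands are constant on $0<t<\lambda$, it suffices to treat $s$ sufficiently small and then propagate. For small $s$, Lemma~\ref{greedy_structure} gives $B(\rhoc_i(\cdot;p_s))$ explicitly. It is the face of $B(\rhoc_i)$ that splits as $B(\rhoc_i|_A)\oplus F_i$, where $F_i$ is the face of $B(\rhot_i)$ maximizing $\sum_{e\in X^*}(-\kappa_{ie})y_{ie}$.

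\textbf{Step 1 (value).} I will show $\cL_{p_s}(z^*)=\cL_p(z^*)$. The primary part depends only on $\rhoc_i(A)$ and $\rhoc_i(E)$. By Lemma~\ref{lem:perturbation_decomposition}, and because $\Tc_i(p_s)\subseteq\Tc_i(p)$ preserves the minimum sum, these are unchanged. The secondary part at $p$ is maximized exactly over $B(\rhoc_i|_A)\oplus F_i$, which by Lemma~\ref{greedy_structure} is contained in the new base polyhedron. There is a secondary-weight shift $-\gamma s\,q'_{ie}d^*_e$ relative to $\kappa$, but that is a change of the price, not of $w$. The secondary objective uses $w_{ie}-z_e^*=\gamma(-\kappa_{ie})$ on $X^*$, and this is unchanged because $t$ is unchanged. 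Hence the supremum over the smaller polyhedron equals the old one, so $V(p_s)\le\cL_{p_s}(z^*)=V(p)$.

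\textbf{Step 2 (reverse inequality).} Next I will show $V(p_s)\ge V(p)$ by primal feasibility. I take an optimal primal solution $x$ of $(P)$ at $p$. By complementary slackness with $z^*$, each $x_i$ lies on the optimal face for coefficients $w_i-z^*$, that is, in $B(\rhoc_i|_A)\oplus F_i$ plus the down-closure. This face lies in $P(\rhoc_i(\cdot;p_s))$, so $x$ remains feasible with the same objective. Therefore $V(p_s)=V(p)$, and $z^*$ is optimal at $p_s$.

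\textbf{Step 3 (minimality).} I will show $z^*(p_s)=z^*$, and this is the main obstacle. Any optimal $z$ at $p_s$ satisfies $\cL_{p_s}(z)\ge\cL_p(z)$, since $P$ shrinks monotonically. Then $\cL_p(z)\le V(p)$, so $z$ is optimal at $p$, which gives $z\ge z^*(p)$. Combined with the optimality of $z^*$ at $p_s$, this shows the least optimal solution is $z^*$. The subtle point is the monotonicity $P(\rhoc_i(\cdot;p_s))\subseteq P(\rhoc_i(\cdot;p))$. This follows from $\Tc_i(p_s)\subseteq\Tc_i(p)$, because each rank value is a maximum over a smaller set. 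I must also check that $\dom\cL$ is unchanged. This holds because the weights $w$ depend only on slopes, which are constant before the event.

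Finally, for arbitrary $s<\lambda$: demands, and hence $\Tc_i$ and $\rhoc_i$, are constant on $(0,\lambda)$, and the slopes are constant on $[0,\lambda)$. So $(D)$ at $p_s$ coincides with $(D)$ at small $s'$, and the claim follows from the small-$s$ case.
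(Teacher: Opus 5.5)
\textbf{Verdict: Steps 1 and 2 are sound, but Step 3 has a genuine gap: it rests on a reversed inequality.}

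Steps 1 and 2 reach $V(p_s)=V(p)$ and the optimality of $z^*$ at $p_s$ by a different route from the paper. You evaluate $\cL_{p_s}$ at $z^*$, and you use complementary slackness to keep a primal optimum feasible at $p_s$. The paper instead derives this from local agreement of the two dual objectives, convexity, and \LSC.

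The problem in Step 3 is your claim that any optimal $z$ at $p_s$ satisfies $\cL_{p_s}(z)\ge\cL_p(z)$ because $P$ shrinks. The inequality goes the other way.
\begin{itemize}
\item $\Tc_i(p_s)\subseteq\Tc_i(p)$ gives $\rhoc_i(\cdot;p_s)\le\rhoc_i(\cdot;p)$, hence $P(\rhoc_i(\cdot;p_s))\subseteq P(\rhoc_i(\cdot;p))$.
\item Each supremum in $\cL_{p_s}$ is therefore over a smaller set, so $\cL_{p_s}(z)\le\cL_p(z)$ for every $z$. This is the same direction that makes your Step 1 yield $V(p_s)\le V(p)$.
\item With the correct inequality, an optimal $\hat z$ at $p_s$ only satisfies $\cL_p(\hat z)\ge\cL_{p_s}(\hat z)=V(p)$, which says nothing.
\item Pointwise domination with equal minima transfers minimizers of $\cL_p$ to $\cL_{p_s}$, giving $z^*(p_s)\le z^*$. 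It does not transfer them back.
\item Indeed $\cL_p$ can be strictly larger than $\cL_{p_s}$ away from $z^*$, because at other secondary components the maximizing face of $B(\rhot_i)$ need not be $F_i$.
\end{itemize}
Note also that your argument never uses \LSC, and it is indispensable here. Without it, the least optimal solution at $p_s$ could have primary component $\chi_{X^*(p_s)}$ with $X^*(p_s)\subsetneq X^*$, and then it would not be optimal at $p$.

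The missing idea is a local argument around $z^*$, which is how the paper closes the proof.
\begin{enumerate}
\item Let $\hat z\coloneqq z^*(p_s)\le z^*$. By \LSC, $X^*(p_s)=X^*$, so Theorem~\ref{convex-dual} at $p_s$ shows that $\hat z$ has primary component $\chi_{X^*}$.
\item On the slice of $\dom\cL$ with this primary component, $\cL_p$ and $\cL_{p_s}$ agree in a neighborhood of $z^*$. Every vertex of $B(\rhot_i)$ discarded in passing to $F_i$ is strictly suboptimal for the secondary coefficients at $t$. Since there are finitely many vertices, each stays strictly suboptimal for secondary components near $t$.
\item By convexity of $\cL_{p_s}$, the point $z^*+\theta(\hat z-z^*)$ is optimal at $p_s$ for every $\theta\in[0,1]$.
\item For small $\theta>0$ this point lies in the region of agreement, so it is also optimal at $p$. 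It is also componentwise at most $z^*$.
\item Since $z^*$ is the least optimal solution at $p$, this forces $\theta(\hat z-z^*)=\zeros$, hence $\hat z=z^*$.
\end{enumerate}
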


\begin{proof}
The dual problem is unchanged for $0<s<\lambda$,
since demand sets and transfer slopes remain fixed.
It therefore suffices to compare the problems at $p$
and $p+sd^*$ for sufficiently small $s>0$.
Write $z^*(p)=\chi_{\scriptscriptstyle X^*}+\gamma t$.
Since the transfer slopes are unchanged, the two problems
have the same effective domain.
On this domain, each supremum in either dual objective
is attained on the corresponding base polyhedron.

Fix the primary component at $\chi_{\scriptscriptstyle X^*}$.
For each agent, consider the vertices maximizing the
primary objective.
By Lemma~\ref{greedy_structure}, all vertices attaining
the secondary maximum at $t$ remain in the updated
base polyhedron.
Every removed vertex has a strictly smaller secondary
objective value at $t$.
Since these values depend continuously on the secondary
component and there are finitely many vertices,
the removed vertices remain suboptimal in a neighborhood
of $t$.
Thus, removing them does not change the maximum
in this neighborhood.
The two dual objectives therefore agree there
within their common effective domain.

By \LSC and Theorem~\ref{convex-dual},
the updated minimal optimal solution $\widehat z$
has the same primary component as $z^*(p)$.
Local agreement and convexity imply that $z^*(p)$
minimizes the updated objective among points with
primary component $\chi_{\scriptscriptstyle X^*}$.
Since $\widehat z$ also has this primary component,
$z^*(p)$ is optimal for the updated problem,
with value $V(p)$. 
By minimality, $\widehat z\leq z^*(p)$. 
If $\widehat z\neq z^*(p)$, interior points of the line
segment joining them and sufficiently close to $z^*(p)$
are optimal for both problems, contradicting
the minimality of $z^*(p)$.
Hence, $\widehat z=z^*(p)$, proving both claims.
\end{proof}

Define the maximum over-demand value by
\[
\omega(p)
\coloneqq\mu(X^*(p);p)-\sigma(X^*(p))
=\max_{X\subseteq E}\bigl(\mu(X;p)-\sigma(X)\bigr).
\]
We first bound the decreases in the 
over-demand value.
\begin{lemma}
\label{lem:over-demand-progress}
Throughout Algorithm~\ref{algo1}, $\omega(p)$ is nonincreasing,
with at most $\omega(\zeros)$ strict decreases.
\end{lemma}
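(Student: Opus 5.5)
The plan is to show three facts: $\omega$ takes integer values, $\omega\ge 0$ throughout, and $\omega$ never increases along a single update from $p$ to $p'=p+\lambda d^*$. The count bound then follows at once. A nonincreasing sequence of nonnegative integers starting at $\omega(\zeros)$ can strictly decrease at most $\omega(\zeros)$ times.

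Integrality and nonnegativity are immediate. Each $\mu_i(\cdot;p)$ is a minimum of integer sums over the finite set $T_i(p)$, and $\sigma\in\Z^E$, so $\omega(p)\in\Z$. Taking $X=\emptyset$ gives $\mu(\emptyset;p)-\sigma(\emptyset)=0$, so $\omega(p)\ge0$.

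For monotonicity along an update, I split the segment into three pieces: the start, the open interior, and the endpoint.

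\emph{Start.} Let $X^*=X^*(p)=\supp_+(d^*)$. For small $s>0$, \LSC gives $X^*(p+sd^*)=X^*$, so
\[
\omega(p+sd^*)=\mu(X^*;p+sd^*)-\sigma(X^*).
\]
I apply Lemma~\ref{mu_rho_relation} to the pair $p\le p+sd^*$ with $Y\coloneqq E\setminus X^*$; prices coincide on $Y$ because $d^*$ vanishes there. Its second inequality, summed over agents, gives $\mu(X^*;p+sd^*)\le\mu(X^*;p)$. Hence $\omega(p+sd^*)\le\omega(p)$.

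\emph{Interior.} For $0<s<\lambda$, the demand sets are constant by the choice of the event step size $\lambda_{\mathrm{evt}}$. So $\omega$ is constant on the open interval, and the start bound extends to every $s\in(0,\lambda)$.

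\emph{Endpoint.} As in the proof of Lemma~\ref{no_underdemanded}, finiteness of the feasible sets and continuity of utilities give $T_i(p+(\lambda-\varepsilon)d^*)\subseteq T_i(p')$ for small $\varepsilon>0$. A minimum over a larger set is no larger, so $\mu_i(X;p')\le\mu_i(X;p+(\lambda-\varepsilon)d^*)$ for every $X\subseteq E$ and every $i$. Summing over agents and maximizing over $X$ gives $\omega(p')\le\omega(p+(\lambda-\varepsilon)d^*)\le\omega(p)$.

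Chaining these bounds over successive iterations gives monotonicity along the whole run. The only step that needs care is the start, where Lemma~\ref{mu_rho_relation} must be applied with the correct $Y$. This works because \LSC keeps the maximizer fixed at $X^*$, so the comparison is between values of $\mu$ at the same set $X^*$, and the prices are unchanged exactly on $E\setminus X^*$.
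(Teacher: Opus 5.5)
Your proof is correct and follows essentially the same route as the paper. \LSC together with Lemma~\ref{mu_rho_relation} applied with $Y=E\setminus X^*$ gives monotonicity on $[0,\lambda)$, the inclusion $T_i(p'-\eta d^*)\subseteq T_i(p')$ handles the endpoint, and integrality plus nonnegativity of $\omega$ give the count. The only differences are cosmetic: you invoke \LSC just for small $s$ and then use constancy of demand sets on $(0,\lambda)$, where the paper uses that constancy to keep $X^*$ fixed on the whole interval, and you spell out the integrality and nonnegativity of $\omega$, which the paper simply asserts.
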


\begin{proof}
Consider an update from $p$ to $p'=p+\lambda d^*$.
Since demand sets remain unchanged for $0<s<\lambda$,
\LSC gives $X^*(p+sd^*)=X^*(p)$ throughout this interval.
Since prices outside $X^*(p)$ remain fixed,
Lemma~\ref{mu_rho_relation} with $Y=E\setminus X^*(p)$
shows that $\omega(p+sd^*)$ is nonincreasing for
$0\leq s<\lambda$.

Choose a sufficiently small $\eta\in(0,\lambda)$
and set $p^-\coloneqq p'-\eta d^*$.
Finiteness of the feasible sets and continuity of utilities
give $T_i(p^-)\subseteq T_i(p')$ for every $i$.
Hence, $\mu(X;p')\leq\mu(X;p^-)$ for all $X\subseteq E$.
Subtracting $\sigma(X)$ and maximizing over $X$ gives
$\omega(p')\leq\omega(p^-)\leq\omega(p)$.
Thus, $\omega$ is nonincreasing.
Since it is nonnegative and integer-valued,
it strictly decreases at most $\omega(\zeros)$ times.
\end{proof}

We next bound the number of changes in $X^*$.
\begin{lemma}
\label{event_a}
During Algorithm~\ref{algo1}, the set $X^*(p)$ changes
at most $O\bigl((n\,c(E)+\abs{\sigma}(E))m\bigr)$ times,
where $\abs{\sigma}(E)\coloneqq\sum_{e\in E}\abs{\sigma_e}$.
\end{lemma}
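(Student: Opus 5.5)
We bound the changes in $X^*(p)$ with a potential argument that combines $\omega$ from Lemma~\ref{lem:over-demand-progress} with the monotonicity of $\mu$ supplied by Lemma~\ref{mu_rho_relation}. First we bound $\omega(\zeros)$. Each $y_i\in T_i(p)$ satisfies $|y_{ie}|\le c_e$, so $\mu(X;p)\le\sum_i\sum_{e\in X\cap\delta(i)}c_e\le n\,c(E)$. Hence $\omega(\zeros)\le n\,c(E)+\abs{\sigma}(E)$. By Lemma~\ref{lem:over-demand-progress}, the algorithm splits into at most $\omega(\zeros)+1$ phases, and $\omega$ is constant within each phase. It therefore suffices to show that $X^*$ changes $O(m)$ times within a phase.

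Within a phase, I claim that $X^*$ only shrinks. Consider consecutive iterates $p$ and $p'=p+\lambda d^*$, and set $X=X^*(p)$ and $X'=X^*(p')$. By \LSC, $X$ is the inclusionwise minimal maximizer of $\mu(\cdot;\cdot)-\sigma$ along the open segment. Prices on $E\setminus X$ are fixed. Lemma~\ref{mu_rho_relation} with $Y=E\setminus X$ gives $\mu(E\setminus X;p)\le\mu(E\setminus X;p')$, and by a symmetric application it also gives $\mu(X;p')\le\mu(X;p)$, through the intermediate point $p^-$ used in the proof of Lemma~\ref{lem:over-demand-progress}. Now suppose $\omega(p')=\omega(p)$. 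Supermodularity of $\mu(\cdot;p')-\sigma$ gives
\[
\mu(X\cap X';p')+\mu(X\cup X';p')\ge\mu(X;p')+\mu(X';p').
\]
I would combine this with the inequality $\mu(X\cup X';p')-\sigma(X\cup X')\le\omega(p')$. The aim is to show that $X\cap X'$ attains $\omega(p')$, which forces $X'\subseteq X$ by minimality. To get there, I would show that $X$ still attains the maximum at $p'$ whenever $\omega$ does not drop. This uses the equality case of Lemma~\ref{mu_rho_relation}, since equality is forced once the values are squeezed between $\omega(p)$ and $\omega(p')$.

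Given monotone shrinking within a phase, each change removes at least one trade from $X^*$, so a phase contains at most $m$ changes. Multiplying by the number of phases yields $O\bigl((n\,c(E)+\abs{\sigma}(E))m\bigr)$.

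I expect the shrinking claim to be the main obstacle. At $p'$, demand sets can jump because new bundles become optimal, and then the minimal maximizer might in principle grow even though $\omega$ stays the same. If the direct argument fails, I would fall back on a lexicographic potential $(\omega(p),|X^*(p)|)$ and prove that it decreases strictly at every change of $X^*$. For the case where $X'\not\subseteq X$, I would argue that $X'$ cannot have equal value at $p$: that would contradict the invariant that no set is under-demanded (Lemma~\ref{no_underdemanded}) together with the complementarity relation~\eqref{mu_rho}.
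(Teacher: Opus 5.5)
\paragraph{There is a genuine gap.} Your central claim, that $X^*$ only shrinks within a phase of constant $\omega$, has the monotonicity backwards. When $\omega$ does not drop at an event $p'$, the set $X^*$ can only \emph{grow}.

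To see this, compare $p'$ with $p^-=p'-\eta d^*$ rather than with $p$. By \LSC and constancy of demand on the open segment, $X^*(p^-)=X$. The inclusions $T_i(p^-)\subseteq T_i(p')$ give $\mu(Y;p')\le\mu(Y;p^-)$ for every $Y\subseteq E$. Hence, if $\omega(p')=\omega(p^-)$, then $\mu(X';p^-)-\sigma(X')\ge\omega(p')=\omega(p^-)$. So $X'$ is also a maximizer at $p^-$ and therefore contains the minimal maximizer $X$. This gives $X\subseteq X'$, with strict inclusion whenever $X^*$ changes. Your claim $X'\subseteq X$ would then force $X'=X$, meaning $X^*$ never changes at constant $\omega$, and that is false.

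\paragraph{A concrete counterexample.} Take the setting of Section~\ref{sec:sy} with $S_2=\emptyset$, identity transfers, $s=(1,1)$, and bidders $\phi_A(x)=\max\{10x_1,3x_2\}$, $\phi_B(x)=10x_1$, $\phi_C(x)=3x_2$.
\begin{itemize}
\item At $p=\zeros$ we have $\mu(\{1\})=2$, $\mu(\{2\})=1$, $\mu(\{1,2\})=3$. So $\omega=1$ and $X^*=\{1\}$.
\item The first event is at $p'=(7,0)$, where $A$ becomes indifferent between the two goods. There $\mu(\{1\};p')=\mu(\{2\};p')=1$ and $\mu(\{1,2\};p')=3$. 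So $\omega(p')=1$ is unchanged, but $X^*(p')=\{1,2\}$.
\end{itemize}
This shows three further problems:
\begin{itemize}
\item Your intermediate goal, that $X$ still attains the maximum at $p'$, fails: $\{1\}$ has over-demand $0$ at $p'$.
\item The inequality $\mu(X;p')\le\mu(X;p)$ that you derive cannot help. It only says that the over-demand of $X$ can fall.
\item Your fallback potential $(\omega,|X^*|)$ fails on the same example: $|X^*|$ rises from $1$ to $2$ while $\omega$ stays at $1$. Also, $X'=\{1,2\}$ does attain value $\omega(\zeros)=1$ at $p$, so no contradiction with Lemma~\ref{no_underdemanded} is available.
\end{itemize}

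\paragraph{The repair.} Use the potential $(\omega,-|X^*|)$ instead. The argument above shows that $X^*$ strictly enlarges at each change without a drop in $\omega$. So each of the at most $\omega(\zeros)$ positive values of $\omega$ accounts for at most $m$ changes, plus at most one change at each drop. Your bound $\omega(\zeros)\le n\,c(E)+\abs{\sigma}(E)$ and your phase-counting skeleton are fine and match the paper's proof.
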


\begin{proof}
Suppose $X^*$ changes at price $p$ without a decrease in $\omega$.
Let $p^-\coloneqq p-\eta d^*$ for sufficiently small $\eta>0$,
where $d^*$ is the direction of the update ending at $p$.
As above, $T_i(p^-)\subseteq T_i(p)$ for every $i$.
Hence,
$\omega(p^-)
\geq \mu(X^*(p);p^-)-\sigma(X^*(p))
\geq \omega(p)
=\omega(p^-).$
Thus, $X^*(p)$ also maximizes
$\mu(\cdot;p^-)-\sigma(\cdot)$.
Since these maximizers form a lattice with minimal element
$X^*(p^-)$, and $X^*$ changes at $p$, we obtain 
$X^*(p^-)\subsetneq X^*(p)$.
For each fixed positive value of $\omega$, every change
strictly enlarges $X^*$.
There are therefore at most $m$ such changes.
By Lemma~\ref{lem:over-demand-progress},
$\omega$ takes at most $\omega(\zeros)$ positive values
and strictly decreases at most $\omega(\zeros)$ times.
Moreover, $\omega(p)=0$ implies $X^*(p)=\emptyset$,
at which point the algorithm terminates.
Hence, $X^*$ changes at most $(m+1)\omega(\zeros)$ times.
Since $y_{ie}\leq c_e$ for every feasible bundle,
we have $\mu(X;\zeros)\leq n\,c(E)$ for all $X\subseteq E$.
Using $-\sigma(X)\leq\abs{\sigma}(E)$ and maximizing over $X$
gives $\omega(\zeros)\leq n\,c(E)+\abs{\sigma}(E)$,
which proves the claimed~bound.
\end{proof}

It remains to bound the number of updates while $X^*$ is fixed.

\begin{lemma}
\label{event_b}
On any interval where $X^*(p)$ remains fixed,
Algorithm~\ref{algo1} performs only finitely many updates.
\end{lemma}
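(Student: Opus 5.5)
On an interval where $X^*$ stays fixed, I would reduce finiteness to a finite-state argument: show that each update is determined by a finite combinatorial ``state'', and that no state repeats.

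\textbf{Step 1: the state.} At a price $p$, define the state as the tuple of
\begin{itemize}
\item the minimal demand sets $\Tc_i(p)$ for $i\in N$, which are subsets of the finite sets $\cY_i$, and
\item the active linear pieces of all transfers, i.e.\ for each $(i,e)\in\cA$, the index of the piece of $q_{ie}$ containing $[p_e,p_e+\epsilon)$.
\end{itemize}
Since transfers have finitely many pieces and the bundle sets are finite, there are only finitely many states.

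\textbf{Step 2: the state determines the direction.} The functions $\rhoc_i(\cdot;p)$ depend only on $\Tc_i(p)$. The weights $w_{ie}=1-\gamma\log q'_{ie}(p_e)$ depend only on the active pieces. Hence the problems $(D)$, and therefore $z^*(p)$ and $d^*$, are functions of the state. Moreover, within an update, Lemma~\ref{lem:update-stability} says $z^*$ stays constant on $[p,p')$.

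\textbf{Step 3: the state is never revisited.} Suppose the same state occurred at two iterates $p^{(1)}\le p^{(2)}$ with $p^{(1)}\neq p^{(2)}$. Coordinates outside $X^*$ never move, so $p^{(2)}-p^{(1)}$ is supported on $X^*$. Since slopes are identical at both points, I would use piecewise linearity to argue that each $q_{ie}$ for $e\in X^*$ lies on the same linear piece at both endpoints. This holds only if no breakpoint lies between them; otherwise I will handle a breakpoint crossing separately in Step 4.

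Then compare utilities. For each agent $i$ and any two $y,y'\in\Tc_i$ (the common set), indifference at both prices gives
\[
\sum_{e\in X^*} q'_{ie}\,\Delta_e\,(y_e-y'_e)=0, \qquad \Delta\coloneqq p^{(2)}-p^{(1)},
\]
with $\Delta\ge0$ and $\Delta\neq0$. I would then use strict progress to reach a contradiction. By Lemma~\ref{any_direction}(b) along each step, the minimal demands strictly favor decreasing coordinates in $X^*$. More concretely, on each segment the set $\Tc_i$ shrinks immediately to the minimizers of a positive weighted sum, as in Lemma~\ref{greedy_structure}. Summing along the path should yield a nonzero change in $\sum_i\min_{y\in\Tc_i}$ of a weighted sum, contradicting equality of the states. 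This is the main obstacle: it is not obvious that returning to the same $\Tc_i$ is impossible, since demand could shrink and regrow.

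\textbf{Step 4: fallback potential argument.} If Step 3 resists, I would use a monotone potential instead. While $X^*$ is fixed, Lemma~\ref{mu_rho_relation} with $Y=E\setminus X^*$ gives that $\rhoc_i(E\setminus X^*;p)$ and $\mu_i$ are monotone. I would combine this with a count of transfer breakpoints crossed: each breakpoint $(i,e)$ with $e\in X^*$ is crossed at most once, because prices only increase. So breakpoint events are finite. Between consecutive breakpoint events the slopes are fixed, and demand-change events along piecewise-linear price paths correspond to crossings of the finitely many hyperplanes $\{p:\tilde u_i(y;p)=\tilde u_i(y';p)\}$. Each such affine hyperplane is crossed at most once by a monotone path that moves within a fixed linear regime only in directions realized from finitely many states. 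The path is bounded by the argument in Lemma~\ref{lem:positive-finite-step}. Together these give finiteness.
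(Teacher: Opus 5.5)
\textbf{There is a genuine gap: Steps 1--2 are sound, but neither Step 3 nor Step 4 proves that a state cannot recur.} Steps 1--2 correctly show that there are finitely many states and that $d^*$ is a function of the state. They only relocate the difficulty, because everything then depends on showing that no state recurs.

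In Step 3 you never identify a quantity that moves monotonically. The weights in the sums you want to add up are $q'_{ie}d^*_e$, and $d^*$ changes from segment to segment. As you note yourself, $\Tc_i$ can shrink at the start of a segment and regrow at its end. So no fixed weighted sum over $\Tc_i$ is monotone.

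Step 4 relies on the claim that, within a fixed linear regime, a monotone price path crosses each indifference hyperplane $\{p:\tilde u_i(y;p)=\tilde u_i(y';p)\}$ at most once. This does not hold for monotone paths in general. If $y-y'=\chi_e-\chi_f$ with $e,f\in X^*$, the utility difference changes along $d^*$ at rate $-(q'_{ie}d^*_e-q'_{if}d^*_f)$. Its sign depends on the ratio $d^*_e/d^*_f$. Alternating between two directions with different ratios therefore crosses the same hyperplane back and forth indefinitely, and having only finitely many candidate directions does not prevent this. Finally, Lemma~\ref{lem:positive-finite-step} only gives that each single step is finite, not that the path is bounded. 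Boundedness would not exclude infinitely many ever-shorter steps anyway.

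\textbf{The missing ingredient is a potential that does not depend on the current direction.} The paper builds it from the dual problem, as follows.
\begin{enumerate}
\item Restrict to a subinterval on which the slopes are fixed and, via Lemma~\ref{lem:over-demand-progress}, $\omega$ is constant.
\item At a demand event $p$, set $p^-=p-\eta d^*$. The inclusion $T_i(p^-)\subseteq T_i(p)$ gives $\mu_i(X^*;p^-)\ge\mu_i(X^*;p)$ for each $i$.
\item Since $\omega$ and $X^*$ are fixed, the sums over $i$ agree, so equality holds for every $i$. Lemma~\ref{mu_rho_relation} with $Y=E\setminus X^*$ then yields $\rhoc_i(E;p^-)=\rhoc_i(E;p)$.
\item Hence $\Tc_i(p^-)\subseteq\Tc_i(p)$ and $P(\rhoc_i(\cdot;p^-))\subseteq P(\rhoc_i(\cdot;p))$. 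So the optimal value $V$ of $(P)$ is nondecreasing, with Lemma~\ref{lem:update-stability} covering the interior of each segment. Since $V$ takes finitely many values, it is eventually constant.
\item While $V$ is constant, $\cL_{p^-}\le\cL_p$ pointwise forces $z^*(p^-)\le z^*(p)$. So $z^*$ is nondecreasing, takes finitely many values, and is eventually constant.
\item Once $z^*$ (hence $d^*$) and the slopes are fixed, the path is a straight line. Along it, the finitely many affine utilities generate only finitely many events.
\end{enumerate}
This nested monotonicity (first $\omega$, then $V$, then $z^*$) is exactly what your Step 3 would need to rule out a recurring state.
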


\begin{proof}
Since prices are nondecreasing and each transfer function
has finitely many linear pieces, transfer slopes change
only finitely many times.
It therefore suffices to consider an interval on which
$X^*$ and all transfer slopes are fixed. 
The weights $w$ and the effective domain
$\mathcal Z\coloneqq\operatorname{dom}\mathcal L$ are then fixed.
Since $\omega$ changes
only finitely many times by Lemma~\ref{lem:over-demand-progress}, 
we may further assume that it is constant.
At a demand event $p$, set $p^-\coloneqq p-\eta d^*$
for sufficiently small $\eta>0$, where $d^*$ is the direction
of the update ending at $p$.
As above, $T_i(p^-)\subseteq T_i(p)$ for every $i$.
These inclusions give $\mu_i(X^*;p^-)\geq\mu_i(X^*;p)$
for every $i$.
Since $\omega$ and $X^*$ are fixed, the sums of these
quantities agree, forcing equality for every $i$. 
Lemma~\ref{mu_rho_relation} with $Y=E\setminus X^*$
then gives $\rhoc_i(E;p^-)=\rhoc_i(E;p)$. 
Thus, the minimum total sums are equal, yielding
$\Tc_i(p^-)\subseteq\Tc_i(p)$ and 
$P\bigl(\rhoc_i(\cdot;p^-)\bigr)
\subseteq P\bigl(\rhoc_i(\cdot;p)\bigr)$.

These inclusions expand the feasible region of $(P)$
without changing its objective, giving $V(p^-)\leq V(p)$.
Together with Lemma~\ref{lem:update-stability},
this shows that $V$ is nondecreasing.
Since each demand set lies in a finite twisted box
and the slopes are fixed, only finitely many instances
of $(P)$ and $(D)$ are possible.
Thus, $V$ strictly increases only finitely many times.

We may therefore further restrict to a subinterval
on which $V$ is constant.
Recall that $z^*(p)$ denotes the componentwise minimal
optimal solution of $(D)$ at price $p$.
At a demand event $p$, the preceding inclusions give
$\mathcal L_{p^-}(z)\leq\mathcal L_p(z)$ for all $z\in\mathcal Z$,
where $\mathcal L_{p^-}$ and $\mathcal L_p$ are the dual
objectives at $p^-$ and $p$, respectively.
Hence, 
$V(p)
\leq\mathcal L_{p^-}\bigl(z^*(p)\bigr)
\leq\mathcal L_p\bigl(z^*(p)\bigr)
=V(p)$.
Thus, $z^*(p)$ is also optimal at $p^-$, implying
$z^*(p^-)\leq z^*(p)$.
Lemma~\ref{lem:update-stability} extends this monotonicity
to the entire subinterval.
Since $z^*$ is nondecreasing and takes only finitely
many values, it changes only finitely many times.

Finally, while $z^*$ is constant, $d^*$ is fixed.
With fixed transfer slopes, the finitely many feasible
bundles have affine utilities along the price path,
yielding finitely many demand events.
By Lemma~\ref{lem:positive-finite-step}, each nonterminal
update reaches the next event.
Hence, each subinterval contains finitely many updates.
Since there are finitely many such subintervals,
the claim follows.
\end{proof}
Lemmas~\ref{event_a} and \ref{event_b}
together imply that Algorithm~\ref{algo1}
terminates after finitely many iterations.

\end{document}